\documentclass{article}
\usepackage{fullpage}

\usepackage[utf8]{inputenc} %
\usepackage[style=alphabetic,natbib=true,maxcitenames=3, maxbibnames=10,backend=bibtex]{biblatex}
\usepackage[T1]{fontenc}    %
\usepackage[hypertexnames=false]{hyperref}
\usepackage{url}            %
\usepackage{booktabs}       %
\usepackage{amsfonts}       %
\usepackage{nicefrac}       %
\usepackage{microtype}      %
\usepackage{xcolor}         %
\usepackage{amsmath}
\usepackage{cleveref}
\usepackage{cases}
\usepackage{amsthm}
\usepackage{xspace}
\usepackage{graphicx}
\usepackage{comment}
\usepackage{enumitem}
\usepackage{tikz}
\usepackage{nicefrac}
\usepackage{thm-restate}
\usepackage{pgfplots}
\usepackage{subcaption}
\usepackage{nicefrac}
\usepackage{MnSymbol}
\usepackage[final]{showlabels}
\usepackage{mathtools}
\usepackage[framemethod=TikZ]{mdframed}
\mdfdefinestyle{proofofclaim}{%
    linecolor=black,
    topline=false,
    bottomline=false,
    rightline=false,
    leftline=true,
    }

\pgfplotsset{compat=1.18}
\usetikzlibrary{calc, positioning, arrows.meta}
\newcommand{\PURIFY}{\textup{\textsc{Purify}}\xspace}
\newcommand{\NOR}{\textup{\textsc{Nor}}\xspace}
\newcommand{\CFont}[1]{{\textup{{\textsf{#1}}}}\xspace}

\newcommand{\PC}{\CFont{PureCircuit}}
\newcommand{\EOTL}{\CFont{End-of-the-Line}}
\newcommand{\GDA}{\CFont{GDA-FixedPoint}}
\newcommand{\Pclass}{\CFont{P}}
\newcommand{\PPAD}{\CFont{PPAD}}
\newcommand{\CLS}{\CFont{CLS}}
\newcommand{\threeGDA}{\CFont{Degree-$3$-GDA}}
\newcommand{\twoGDA}{\CFont{Degree-$2$-GDA}}
\newcommand{\kGDA}{\CFont{Degree-$k$-GDA}}

\newcommand{\Reals}{\mathbb{R}}
\newcommand{\poly}{\textnormal{poly}}

\newcommand{\cG}{\mathcal{G}}

\newcommand{\cH}{\mathcal{H}}
\newcommand{\LINVI}{\CFont{LinVI}}
\newcommand{\TTPG}{\CFont{TwoTeam-ZeroSum-Polymatrix}}

\newcommand{\Naturals}{\mathbb{N}}
\newcommand{\G}{\textsf{G}}
\newcommand{\cD}{\mathcal{D}}
\newcommand{\KKT}{\textsf{KKT}}

\newtheorem{theorem}{Theorem}[section]
\newtheorem{lemma}{Lemma}

\theoremstyle{definition}
\newtheorem{definition}{Definition}
\newtheorem{problem}{Problem}
\newtheorem{remark}{Remark}

\newtheorem*{theorem*}{Theorem}

\Crefname{problem}{Problem}{Problems}
\Crefname{claim}{Claim}{Claim}

\definecolor[named]{linkcolor}{cmyk}{0.55,1,0,0.15}
\definecolor[named]{citecolor}{cmyk}{1,0.58,0,0.21}
\hypersetup{
    colorlinks=true,
    linkcolor=linkcolor,
    filecolor=linkcolor,      
    citecolor=citecolor,
    urlcolor=black,
}

\title{The Complexity of Min-Max Optimization for Quadratic Polynomials}

\author{
\begin{tabular}{cc}
& \\
{Martino Bernasconi}\thanks{Martino Bernasconi and Andrea Celli were supported by an ERC grant (Project 101165466 — PLA-STEER).} & {Matteo Castiglioni}\thanks{Matteo Castiglioni was supported by the FAIR (Future Artificial Intelligence Research) project PE0000013, funded by the NextGenerationEU program within the PNRRPE-AI scheme (M4C2, Investment 1.3, Line on Artificial Intelligence), and by the EU Horizon project ELIAS (European Lighthouse of AI for Sustainability, No. 101120237).}\\
\small{Bocconi University} & \small{Politecnico di Milano}\\
{\textcolor{black}{\small\texttt{martino.bernasconi@unibocconi.it}}} & %
{\textcolor{black}{\small\texttt{matteo.castiglioni@polimi.it}}}\\
& \\
{Andrea Celli}\footnotemark[1] & {Alexandros Hollender}\\
\small{Bocconi University} & \small{University of Oxford}\\
{\textcolor{black}{\small\texttt{andrea.celli2@unibocconi.it}}} & %
{\textcolor{black}{\small\texttt{alexandros.hollender@cs.ox.ac.uk}}}\\
& \\
\end{tabular}
}
\date{}

\begin{document}

\maketitle

\begin{abstract}
We prove that computing approximate stationary points of min-max optimization over the hypercube is \PPAD-hard for quadratic polynomials. This holds even when the polynomials are multilinear, each variable appears in at most three monomials, and the approximation factor is inverse polynomial. As a direct consequence, we obtain the first \PPAD-hardness results for two-team zero-sum polymatrix games.
\end{abstract}

\newpage
\section{Introduction}

We study the complexity of finding local solutions to constrained min-max optimization problems of the form \begin{equation}\label{eq:minmax}
     \min_{x \in X} \max_{y \in Y} f(x,y),
 \end{equation}
where $X,Y \subseteq \mathbb{R}^{d}$ are compact convex sets and $f : X \times Y \to \mathbb{R}$ is continuously differentiable with Lipschitz-continuous gradients. 
Recent work has shown that for general smooth nonconvex-nonconcave objectives, even computing first-order stationary points is computationally intractable \citep{bernasconi2026complexity,bernasconi2026min}. 
This is the natural extension of Karush–Kuhn–Tucker (KKT) points of minimization problems to min-max optimization (see also \citet*{daskalakis2021complexity} where the computational version of this question first appeared). Formally, an $\varepsilon$-approximate (constrained) stationary points is a pair $(x,y)\in X\times Y$ such that 
\begin{align}\label{eq:intro_kkt}
-\nabla_x f(x,y)^\top(x'-x)\le\varepsilon \quad \text{and}\quad \nabla_y f(x,y)^\top(y'-y)\le\varepsilon \quad \forall (x',y')\in X\times Y.
\end{align}
Equivalently, $x$ satisfies the approximate KKT conditions of the minimization problem $f(\cdot, y)$ and $y$ satisfies the approximate KKT conditions of the maximization problem $f(x, \cdot)$.
All known hardness results for min-max optimization for product constraints hold for very complicated functions $f$, which are either defined by a circuit \citep{bernasconi2026complexity}, or given by an oracle \citep{bernasconi2026min}. This leaves open whether the hardness is a consequence of such expressiveness, or whether it already appears for simple algebraic objectives.
On the positive side, this problem is efficiently solvable when $f$ is linear (or even convex-concave) in both variables \citep{rodomanov2023subgradient,facchinei2003finite}, and it remains reasonably tractable even in the more general case in which it is nonconvex in $x$ and concave in $y$ \citep{ostrovskii2021efficient}, namely it is solvable in time $\poly(1/\varepsilon, d)$. Hence, the simplest case in which the hardness of \Cref{eq:intro_kkt} could materialize is when $f$ is a degree-$2$ polynomial of the form
\[
f(x,y)=x^\top Q_1x+x^\top A y-y^\top Q_2y+b^\top x+c^\top y,
\]
and the first genuinely nonconvex-nonconcave quadratic regime is obtained when $Q_1$ and $Q_2$ are not positive semidefinite (since for $Q$ positive semidefinite, the form $x\mapsto x^\top Q x$ is convex). This leads to the central question addressed in this paper: 
\begin{quote}
\emph{Does the hardness of nonconvex-nonconcave min-max optimization already arise for degree-$2$ polynomials?}
\end{quote}
In this paper, we answer this question by showing that hardness persists even for multilinear degree-$2$ polynomials, thereby tightly identifying the natural algebraic boundary of tractability.

\paragraph{Team Games.} Quadratic min-max objectives also admit a natural game-theoretic interpretation, via their connection to two-team zero-sum polymatrix games. In a team game, players are partitioned into teams, and all members of the same team share a common utility, but each player controls only their own action, precluding joint deviations \citep{von1997team}. A very natural class of team games restricts the interactions between different teams to be zero-sum. Moreover, the polymatrix restriction imposes that each player's utility is separable, i.e., can be written as the sum of the utilities obtained from interacting with each other player. This class was introduced by \citet{bergman1998separable}, and \citet{cai2011minmax} proved that computing an equilibrium is \PPAD-hard with \emph{three} teams. However, they left the two-team zero-sum case open. 

The two-team case is particularly compelling because analogies from normal-form games suggest it might be tractable: while three-player zero-sum games are computationally intractable, the two-player version is solvable in polynomial time.
Two-team zero-sum games were recently studied in \citet*{hollender2025complexity,anagnostides2026complexity}, who showed \CLS-hardness for computing $\epsilon$-approximate equilibria when $\epsilon$ is exponentially small (i.e., ruling out algorithms running in time $\poly(\log(1/\varepsilon),d)$, unless $\CLS = \Pclass$). \CLS-completeness was also established in some special cases: \citet{hollender2025complexity} showed \CLS-completeness when one team has independent players, and \citet{anagnostides2026complexity} showed \CLS-completeness for $2$ vs.~$1$ games. These results, however, leave the full complexity picture open for two reasons. First, since Nash equilibria of finite games lie in \PPAD, which is believed to be a strict super-class of \CLS, \CLS-hardness does not settle the computational complexity of the problem. Second, these hardness results do not preclude the existence of an algorithm with running time $\poly(1/\varepsilon,d)$. In contrast, the \PPAD-hardness results in our paper apply to the $\varepsilon^{-1} = \poly(d)$ regime, and thus preclude the existence of a $\poly(1/\varepsilon,d)$ time algorithm, unless $\PPAD=\Pclass$.

\paragraph{Concurrent Work.} Similar results were obtained in a concurrent work by \citet*{anagnostides2026computational}.

\subsection*{Our Results and Techniques} 
\begin{figure}[t]
	\centering
	\scalebox{0.7}{ 
		\begin{tikzpicture}
		\tikzset{>={Latex}}
		\tikzstyle{cc}=[font=\normalsize, fill=blue!8,%
            rounded corners=4pt, inner sep=5pt,
            thick]
        \def\mergeone{2.3}       %
		\def\branchX{\mergeone + 2.3}      %
		\def\branchY{1.2}      %
		\def\mergeX{\branchX + 2.3}       %
		\def\threeGDAx{\mergeX + 2.3}    %
		\def\twoGDAx{\threeGDAx + 4.2}     %
		\def\ttpgX{\twoGDAx + 5.2}       %
        \def\bottom{-2.3cm}

		\node[cc] (eotl)     at (0, 0)            {\EOTL};
		\node[cc] (linvi)    at (\branchX,  \branchY) {\LINVI};
		\node[cc] (pc)       at (\branchX, -\branchY) {\PC};
		\node[cc] (threegda) at (\threeGDAx, 0)    {\threeGDA};
		\node[cc] (twogda)   at (\twoGDAx,   0)    {\twoGDA};
		\node[cc] (ttpg)     at (\ttpgX,     0)    {\TTPG};

		\coordinate (merge) at (\mergeX, 0);
        \coordinate (mergeone) at (\mergeone, 0);

        \draw[-, thick] (eotl.east) -- (mergeone);
		\draw[->, thick] (mergeone) -- (linvi.west);
		\draw[->, thick] (mergeone) -- (pc.west);

		\draw[-, thick] (linvi.east) -- (merge);
		\draw[-, thick] (pc.east)    -- (merge);
		\draw[->, thick] (merge)     -- (threegda);

		\draw[->, thick] (threegda) -- (twogda);
		\draw[->, thick] (twogda)   -- (ttpg);

        \node[font=\normalsize, fill=gray!0,
            rounded corners, inner sep=5pt] (or)     at (\mergeX, 0)            {$\lor$};

        \draw[{Bar[width=12pt]}-{Bar[width=12pt]}] ([yshift=\bottom]eotl.center) --  node[fill=white]{\Cref{sec:three}}([yshift=\bottom]threegda.center);
        \draw[-{Bar[width=12pt]}] ([yshift=\bottom]threegda.center) -- node[fill=white]{\Cref{sec:3to2}}([yshift=\bottom]twogda.center);
        \draw[-{Bar[width=12pt]}] ([yshift=\bottom]twogda.center) -- node[fill=white]{\Cref{sec:degree3interaction,sec:polymatrix,sec:2vs2}} ([yshift=\bottom]ttpg.center);
            
		\end{tikzpicture}
	}
	\caption{We start from an instance of \EOTL and reduce it to an instance of \LINVI and an instance of \PC. Then we use both instances to reduce them to an instance of \threeGDA. Next, we use the cubic-gadget to reduce the degree of the cubic terms to $2$ and thus obtain an instance of \twoGDA. Lastly, we use a multilinearization gadget to obtain a multilinear degree-$2$ instance, which is then converted to a \TTPG instance (first to one with many players and binary actions and then to a $2$ vs.~$2$ game where each player has many actions).}\label{fig:reductions}
\end{figure}

Our reduction follows several steps, summarized in \Cref{fig:reductions}, and can be roughly divided into three main parts. We now discuss the technical challenges and the main conceptual ideas of these three components.

\paragraph{Degree-$3$ Polynomials.} Our first result is a reduction showing that min-max optimization is \PPAD-hard even when the objective function is a degree-$3$ polynomial.

\begin{theorem*}[Informal version of \Cref{th:degree3minmax}]
    Finding a stationary point in a min-max optimization problem is \PPAD-hard even when the objective is a degree-$3$ polynomial.
\end{theorem*}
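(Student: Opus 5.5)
The plan follows the route in \Cref{fig:reductions}. We start from an \EOTL instance and produce two auxiliary instances: a \PC instance encoding the discrete line-following structure, and a \LINVI (linear variational inequality) instance whose solutions are \PPAD-hard to find, approximately, with inverse-polynomial precision. The point of this split is that \LINVI supplies the continuous fixed-point difficulty. \PC supplies robust Boolean gates (\PURIFY, \NOR) whose outputs must be correct whenever their inputs are ``pure''. The final reduction builds one degree-$3$ objective $f(x,y)$ over a hypercube, and any $\varepsilon$-stationary point of it, in the sense of \eqref{eq:intro_kkt}, must encode a solution to one of the two instances. This is the $\lor$ node in the figure.

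First, the \LINVI part is encoded by a min-max objective of degree at most $2$. For a variational inequality with affine operator $Mz+q$ over a box, we take $f(x,y)= (Mx+q)^\top y - \tfrac12\|x-y\|^2$-type couplings, with $x,y$ both in the box. Stationarity in $y$ forces $y$ to be roughly a projection step of $x$. Stationarity in $x$, together with the quadratic penalty, forces $x\approx y$. Together these give an approximate VI solution. Second, each \PC gate is realized by a small cubic gadget. Each gate variable is controlled by the min or the max player, and its payoff term is linear in its own variable with a coefficient given by a polynomial (degree $\le 2$) in the input variables. Then KKT on the box forces the output to the correct endpoint whenever that coefficient is bounded away from $0$. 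For \NOR we would use a term like $v\cdot(c - u_1 - u_2 + \ldots)$. For \PURIFY we would use thresholding terms $v_i (u-\theta_i)$, with products of indicator-like variables only where needed, which keeps the degree at $3$.

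Third, the two parts are glued. The \PC gate values select which linear piece of the \LINVI operator is active: a term of the form (gate variable)$\times$(linear in $x$)$\times y$ is cubic. Conversely, the \PC gate inputs read off rounded coordinates of the \LINVI variables. Correctness then follows from a case analysis. If all gate variables at the stationary point are pure, the \PC constraints are satisfied, so the \LINVI part is exactly the intended VI and $x$ is an approximate VI solution. Otherwise some \PURIFY gate has a non-pure input, and that configuration itself decodes to a \PC solution. Every coefficient is polynomially bounded and $\varepsilon^{-1}=\poly(d)$, so the reduction is polynomial and the hardness holds at inverse-polynomial precision.

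The main obstacle is the gluing step. Because $x$ and $y$ are coupled, a player's stationarity condition mixes the gadget gradients with the gradients of the \LINVI terms. The gadgets must therefore be robust to these cross terms, and the cubic coupling terms must not create spurious stationary points where gates sit at interior values without yielding a \PC solution. I would handle this by scaling: gadget coefficients are taken polynomially larger than the \LINVI coefficients, and the \LINVI part is rescaled by a small factor $\delta$. Each gate's own-variable coefficient then dominates its error terms except in the $\delta$-neighbourhood of a threshold, and that neighbourhood is exactly what the \PURIFY/\NOR semantics of \PC tolerate.
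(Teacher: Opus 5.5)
Your top-level skeleton (reduce \EOTL to both \PC and \LINVI, then argue that any stationary point yields a solution of one of them) matches the paper. However, two load-bearing steps fail.

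First, \LINVI cannot be encoded directly by $f(x,y)=(Mx+q)^\top y-\tfrac12\|x-y\|^2$. The penalty is concave in $x$, so the minimizer is not pulled toward $y$. In one dimension with $M=0$ and $q=-1$, the point $(x,y)=(0,0)$ is stationary: $\partial_y f=-1<0$ at $y=0$, and $\partial_x f=y-x=0$. Yet the VI demands $z\ge 1-\rho$, so neither coordinate decodes to a solution. Strengthening or flipping the penalty to $\pm\lambda\|x-y\|^2$ does not help. The player whose KKT condition enforces closeness (the minimizer for $+\lambda$, the maximizer for $-\lambda$) becomes slaved to the other. The remaining player then faces $\nabla_x h+\nabla_y h$ at $x\approx y$, i.e.\ the gradient of $z\mapsto h(z,z)$, which only sees the symmetric part $M+M^\top$. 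As a sanity check, if such an encoding worked it would already give degree-$2$ hardness from \LINVI alone, with no need for \PC. In the paper, \LINVI is never solved through the min-max directly: $H_q$ is only a local gadget, and a \LINVI solution appears solely as the failure branch of a dichotomy.

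Second, scaling cannot make the gate network robust. Every node of a \PC instance is the output of some gate, so the circuit is necessarily cyclic. In your gadgets a downstream gate reads another gate's output variable $u_1$ directly. A term like $v\cdot(c-u_1-u_2)$ then adds $-v$ to $\partial_{u_1}f$, at the same scale as $u_1$'s own defining coefficient, and can flip its sign. Without a topological order, no choice of weights gives every gate priority over its successors. Your $\delta$-rescaling separates \LINVI from the gates, not the gates from each other, and these cross-terms are not confined to a neighbourhood of a threshold. For the same reason your final case split is unsound: pure gate values do not imply correctly evaluated gates, and one \PURIFY gate with a $\bot$ input does not make the others satisfied.

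The missing idea is how the paper breaks the cycle. The bit of node $q$ is stored as $\|x^q-y^q\|^2$ versus $3m$ over $n$ copies of the \LINVI block. A weight-$1$ variable $\bar x_q$ reads this bit, weight-$\delta$ gate variables compute on the $\bar x$'s, and the weight-$\delta^2$ term $s_qH_q$ writes the result back. The reader's backward effect $-\bar x_q\|x^q-y^q\|^2$ is cancelled in some copy by the guesses $M_i$. In that copy even the tiny signal $\delta^2 s_q$ forces $x_i^q$ far from $y_i^q$ unless $y_i^q$ solves \LINVI (\Cref{lem:largesignal}), whereas $s_q\approx 0$ forces $x^q\approx y^q$ (\Cref{lem:smallsignal}). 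This scale-free amplification is what produces the disjunction. Finally, the back-gradient $\delta^2H_q$ on $s_q$ must stay below the $\delta$-scale gate margin while $n\gtrsim\delta^{-2}$ copies are used. That requires the $n$-independent bound $|H_q|\le\mathcal{H}$ of \Cref{lem:boundH}, which has no analogue in your plan.
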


The reduction starts from \EOTL. As in earlier min-max reductions \citep{bernasconi2026complexity,bernasconi2026min}, we use two intermediate \PPAD-hard problems. The first one is \PC \citep{deligkas2022pure}, which is a Boolean circuit problem with self-loops, and the second one is \LINVI \citep{bernasconi2024role}, which is a simple linear variational inequality problem. The reduction combines them in the following way: either the constructed min-max instance faithfully transmits the \PC gates' signals, in which case we decode a \PC solution, or some gate fails to transmit its signal, in which case the failure exposes a solution to the \LINVI instance. We provide a brief overview of the relevant parts of the construction.
For each \PC node $q$, the construction introduces a block of variables 
\[
(x^q,y^q)=((x_i^q,y_i^q))_{i\in[n]}
\]
which are to be interpreted as being $n$ copies of $m$-dimensional variables $x_i^q,y_i^q$.
Then, the logic of the \PC instance (which we will refer to as the ``outer problem'') is used to define a signal $s_q\in[0,1]$ based on the values of the gate's input, which represents the Boolean value carried by that node. In particular, an intermediate signal is computed using a sigmoid-like function around $3m$ for the quantity $\|x^w-y^w\|^2$ for each input node $w$ of $q$. This intermediate signal can be interpreted as the approximate Boolean value carried by node $w$ (it is close to $0$ when the block distance is below $3m$, and close to $1$ when the block distance is above $3m$). These intermediate thresholded values are then fed into a second, gate-specific smooth threshold which implements the Boolean operation of the corresponding \PC gate and produces the output signal $s_q$.
Moreover, for each node, the reduction introduces an additional gadget 
\[
H_q(x,y)=\sum_{i\in[n]} \langle Dx_i^q+c_i,x_i^q-y_i^q\rangle
\]
based on the \LINVI instance (defined by the linear function $ \Reals^m\ni x\mapsto Dx+c\in\Reals^m$), which we call the ``inner problem''. The main idea of the \cite{bernasconi2026complexity} reduction was to use the inner problem, not to make every node solve the \LINVI problem directly, but rather to create a \emph{dichotomy}. At any stationary point of the min-max problem, we either have that all signals $s_q$ correctly implement the circuit logic, or any ``corrupted'' signal $s_q$ forces the gadget $H_q$ to reveal a solution to \LINVI at that node. 
The dichotomy is enforced by the regularization terms attached to the local \LINVI gadgets.  In the stationarity equations for a node $q$, the local \LINVI term is perturbed by derivative contributions from downstream gates (i.e., from gates in which $q$ appears as an input).  We denote the total such contribution by $\Delta_q$.  The regularizer introduces offsets $M_1,\ldots,M_n$, one for each copy of the gadget, which enter the stationarity conditions through the combination $M_i+\Delta_q$.  Thus, if $M_i$ is sufficiently close to the corruption $-\Delta_q$ for some copy $i$, the downstream perturbation is canceled in that copy, and the \LINVI structure can be recovered.
Indeed, the key technical challenge in \citet{bernasconi2026complexity,bernasconi2026min} was to show that the noise $\Delta_q$ corrupting the signal $s_q$ grows as $o(n)$. Then, since the offsets $M_i$ cover a range of size $O(n)$, one of them approximately cancels the perturbation.
The only part of this construction that was not encoded as a low-degree polynomial, in the original reduction of \citet{bernasconi2026complexity}, was the computation of the signals $s_q$, which were carried out by a small circuit (of depth $2$) of simple sigmoid-like gates.

Here, we use the natural idea of \emph{exponentially decreasing-weights}, used in the complexity of minimization problems \citep{krentel1990finding,babichenko2021settling,fearnley2022complexity} to substitute the sigmoid-like functions with a polynomial that carries out the relevant computations. As an example, we illustrate a simple \NOR gate with output $q$ and inputs $u,v$ (output is $1$ if both inputs are $0$, or $0$ if either input is $1$), which was implemented as:
\[
s_q(x,y)=\sigma^-_{1/2}(\sigma^+_{3m}(\|x^u-y^u\|^2)+\sigma^+_{3m}(\|x^v-y^v\|^2)),
\]
where $\sigma_{\alpha}^+$ is a sigmoid-like function which is close to $0$ for inputs below $\alpha$, close to $1$ for inputs above $\alpha$, and transitions smoothly between these two values near $\alpha$, 
and $\sigma_{\alpha}^-=1-\sigma_{\alpha}^+$. It is easy to check that, by interpreting the inputs according to thresholded values given by $\sigma^+_{3m}(\|x^u-y^u\|^2)$, $s_q(x,y)$ correctly implements the \NOR logic. Then, the signal $s_q$ was used in expressions such as $s_q(x,y)H_q(x,y)$ in the final construction.
Instead, we implement the signal using an exponentially decreasing-weight approach. Continuing with the previous example of a \NOR gate with inputs $u,v$, we proceed by introducing min variables $\bar x_u,\bar x_v,x_{u,v,q}$ and the polynomial
\begin{align*}
p(\bar x_u,\bar x_v, x_{u,v,q}, x^u,y^u,x^v,y^v)&=\bar x_u(3m-\|x^u-y^u\|^2)+\bar x_v(3m-\|x^v-y^v\|^2)+\\
&+\delta x_{u,v,q}(\bar x_u+\bar x_v-1/2)\\
&+\delta^2 x_{u,v,q} H_q(x,y).
\end{align*}
The idea is that, for sufficiently small $\delta$, higher-weight terms determine their variables before lower-weight terms can affect them.
Indeed, the gradient with respect to $\bar x_u$ is approximately $3m-\|x^u-y^u\|^2+\delta$, which implements $\bar x_u\approx \mathbb{I}(\|x^u-y^u\|^2\ge 3m)$ up to small error terms depending on $\delta$. Similarly, we fix $\bar x_v\approx \mathbb{I}(\|x^v-y^v\|^2\ge 3m)$. Once these terms are fixed, the next term forces $x_{u,v,q}\approx \mathbb{I}(\bar x_u+\bar x_v\le\tfrac12)$, which is exactly the \NOR gate's semantics.

Now this already implements a degree-$3$ function since the inner problem $H_q$ is naturally a degree-$2$ polynomial. 
This is where, differently from prior work, the choice of \LINVI as the inner problem becomes crucial. In the original reduction by \citet{bernasconi2026complexity} the use of \LINVI as the inner problem was mainly motivated by the fact that it yields a simpler gadget, and indeed in the subsequent work by \citet{bernasconi2026min} it was replaced by a Brouwer-type problem. For our reduction, the simplicity of the inner problem is fundamental rather than just convenient. In particular, the low degree of $H_q$ is what keeps the whole signal implementation at degree 3.  A more complicated inner gadget would no longer yield the desired degree-$3$ construction.

However, using these small polynomial gadgets to implement the signal $s_q$ poses challenges distinct from those in the previous reductions. The central challenge in \citet{bernasconi2026complexity} was to cover the corruption noise with the range of the guesses $M_i$. In our reduction, the focus shifts from establishing the existence of a copy on which $\Delta_q$ was correctly guessed to making sure that the signal correctly propagates in the circuit. In particular, the lower-order terms used to couple the computed signal to the \LINVI gadget must not interfere with the higher-priority terms that implement the Boolean gate. For instance, in our example, the derivative with respect to $x_{u,v,q}$ is $\partial_{x_{u,v,q}}p=\delta(\bar x_u+\bar x_v-1/2)+\delta^2H_q(x,y)$. So in order for $x_{u,v,q}$ to correctly implement the gate's logic, we need $\delta |H_q(x,y)|\ll 1$. Using the naive bound on $|H_q(x,y)|=O(n)$ (ignoring polynomial factors in $m$), we would require $\delta n=O(1)$. However, in order for \LINVI to correctly establish a dichotomy, we need $n$ to be large enough, so that the guesses $M_i$ cover the relevant corruptions. Therefore, a key new challenge of our reduction is to show a much sharper bound on the inner gadget, namely a bound on $|H_q(x,y)|$ that is independent of $n$. This ensures that the term $\delta^2 H_q(x,y)$ does not disturb the \NOR computation, while still allowing the \LINVI gadget to enforce the dichotomy.

\paragraph{Degree-$2$ Polynomials.}
Having proven hardness for degree-$3$ polynomials, it is natural to wonder whether the problem remains hard for lower degree polynomials. As mentioned earlier, for degree-$1$ polynomials the problem is easy, so the only remaining case is that of degree-$2$ polynomials, i.e., quadratic polynomials. The quadratic case differs from degree $3$ or higher in a fundamental way: for a quadratic polynomial, there is always a stationary min-max point that is rational. On the other hand, for degree-$3$ polynomials, it is easy to construct examples where \emph{all} solutions are irrational. The takeaway from this observation is that there is no simple reduction from the degree-$3$ case to the degree-$2$ case. Namely, any such reduction will not map exact solutions of the degree-$2$ instance back to exact solutions of the degree-$3$ instance.

For the minimization problem (as opposed to the min-max problem we consider here), \CLS-hardness for quadratic polynomials was shown by \citet*{FearnleyGHS25-quadratic-KKT}. Their reduction proceeds as follows. Starting from a \CLS-hard two-dimensional instance of the minimization problem with a general smooth objective function $f$, they first construct a piecewise-linear arithmetic circuit $\tilde{f}$ that approximates $f$ in the following sense: any approximate KKT point of $\tilde{f}$ must be close to an approximate KKT point of $f$. Then, they construct a gadget that can be used to implement any gate of a piecewise-linear arithmetic circuit. Importantly, this gadget is a quadratic polynomial, and when multiple copies of this gadget are combined to implement each gate of the circuit $\tilde{f}$, the resulting objective function is a quadratic polynomial $Q$. The crucial point in their reduction is that when these gadgets are combined, a \emph{backpropagation} phenomenon occurs. Namely, at any approximate solution, it is not only the case that the value of $Q$ is close to $\tilde{f}$, but additionally, the derivatives of $Q$ with respect to the input variables are also close to the corresponding derivatives of $\tilde{f}$. This ensures that any approximate KKT point of $Q$ yields an approximate KKT point of the original function $f$.

Can we use this approach in our setting? Unfortunately, their approach heavily relies on the fact that their aim is to show hardness for $\varepsilon$-KKT points, where $\varepsilon^{-1} = 2^{d}$. Indeed, the minimization problem is easy when $\varepsilon^{-1} = \poly(d)$, so $\varepsilon^{-1} = 2^{d}$ is the interesting regime in their setting. But for min-max optimization, our aim is to show hardness even when $\varepsilon^{-1} = \poly(d)$. Now, in order to use their approach to show hardness for $\varepsilon^{-1} = \poly(d)$, one would have to make the following modifications. In the first step of the reduction, one would have to show how to construct a piecewise-linear arithmetic circuit $\tilde{f}$ that approximates a \PPAD-hard \emph{high-dimensional} function $f$. This part was already very tedious in their setting for two-dimensional functions. It is not clear how to generalize this to high-dimensional functions, and even if it is possible, it is clear that it would be extremely technical and tedious. Although the second part of their reduction can be used more readily, an additional complication arises from using their gadget in our $\varepsilon^{-1} = \poly(d)$ regime: the circuit $\tilde{f}$ has to have constant depth, and this makes the task of the first step even harder.

Instead of embarking on this very complicated route, we adopt a significantly simpler approach. Starting from a \PPAD-hard degree-$3$ polynomial, we try to eliminate the degree-$3$ terms by replacing each of them by a degree-$2$ polynomial that behaves in a similar way. Using standard identities, it is not hard to see that terms of the form $x_1x_2x_3$ can be expressed as linear combinations of various cubic terms $x_1^3, (x_1+x_2)^3$, etc. As a result, the bottleneck of this approach is to find a way to replace a term of the form $x_1^3$ by a quadratic polynomial $P$. This polynomial $P$ would contain the variable $x_1$, as well as additional auxiliary variables $z_1, \dots, z_m$. Consider any KKT point $z^\star$ of $P(x_1,\cdot)$, i.e., when $x_1$ is considered fixed and thus the KKT conditions are only taken with respect to $z_1, \dots, z_m$. Then, we would like to have that
\[
\left| \partial_{x_1}P(x_1,z^\star) - 3x_1^2 \right|,
\]
is very small. In other words, the polynomial $P$ behaves like $x_1^3$ when we take derivatives with respect to $x_1$, or any other original variables $x_2, \dots, x_d, y_1, \dots, y_d$ (since those variables do not appear in $P$). This means that we can safely replace $x_1^3$ by the quadratic polynomial $P(x_1,z)$, since at any solution of our problem, all auxiliary variables will satisfy their KKT conditions, and thus the first-order derivatives with respect to the original variables will be correctly approximated.

It turns out that using ideas from \citep{FearnleyGHS25-quadratic-KKT}, and especially the backpropagation gadget, such a polynomial $P$ that behaves like $x_1^3$ can indeed be constructed. In fact, this idea can even be used to provide an alternative, somewhat simpler proof of their result: essentially, one can follow their general recipe, but apply it to the one-dimensional function $x_1^3$, instead of a two-dimensional unknown smooth function $f$. This considerably simplifies various aspects of their proof, although some additional work is required to ensure the resulting quadratic polynomial has derivative close to $3x_1^2$, as opposed to just being close to zero when $3x_1^2$ is close to zero (which is what their existing reduction enforces).

Thankfully, since we are working in the $\varepsilon^{-1} = \poly(d)$ regime, we can afford a significantly larger error in how closely the derivative of $P$ approximates $3x_1^2$. Thus, it turns out that the desired polynomial $P$ can be constructed in a much more direct manner. In essence, it suffices to consider a sufficiently good piecewise-constant approximation of the function $3x_1^2$, and then to construct a polynomial $P$ whose derivative closely follows the piecewise-constant approximation. This can be achieved in a very direct and elegant way.
Thus, we obtain the following result.
\begin{theorem*}[Informal version of \Cref{th:degree-2-non-multilinear}]
    Finding a stationary point in a min-max optimization problem is \PPAD-hard even when the objective is a degree-$2$ polynomial.
\end{theorem*}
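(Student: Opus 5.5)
We reduce from \threeGDA, which is \PPAD-hard by \Cref{th:degree3minmax} for inverse-polynomial $\varepsilon$. Given a degree-$3$ objective $f$ over the hypercube, we first rewrite every cubic monomial as a linear combination of cubes of linear forms. For example, $x_ix_jx_k$ is a combination of $\ell^3$ for $\ell\in\{x_i,x_j,x_k,x_i+x_j,\dots,x_i+x_j+x_k\}$, and similarly for the monomials $x_i^2x_j$. For each such linear form $\ell$, which ranges over a bounded interval $[a,b]$, we introduce a fresh auxiliary variable $t_\ell$ and a quadratic $\lambda(t_\ell-\ell)^2$ term with large weight $\lambda$. This term is added with the sign appropriate to the player who owns $t_\ell$. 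It forces $t_\ell\approx \ell$ at stationarity, so it suffices to replace the univariate cube $t^3$ by a quadratic gadget. The sign of the coefficient in front of each cube determines whether the gadget variables are given to the min or the max player. We arrange this so that each gadget is an inner minimization (or maximization) owned by one player.

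\textbf{Cubic gadget.} Fix a grid $a=\tau_0<\tau_1<\dots<\tau_K=b$ with spacing $h=1/\poly(d)$, and define the gadget
\[
P(t,z)=c_0t+\sum_{k=1}^K z_k\bigl(\gamma (t-\tau_k)+\beta_k t\bigr)-\frac{\gamma}{2}\sum_k \text{(regularizers)},
\]
with auxiliary variables $z_k\in[0,1]$. The design goal is that, at any KKT point in $z$ with $t$ fixed, each $z_k$ is essentially the step indicator $\mathbb{I}(t\ge\tau_k)$. This holds except on an interval of width $O(1/\gamma)$ around $\tau_k$, since the dominant gradient term $\gamma(t-\tau_k)$ sets its sign. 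Then
\[
\partial_tP=c_0+\sum_k z_k(\gamma+\beta_k)
\]
should equal a piecewise-constant approximation of $3t^2$. The large $\gamma$ contributions must cancel, so we pair each $z_k$ with a twin variable of the opposite sign convention. Equivalently, we use the multiplicative form $z_k\beta_k t - \gamma z_k(\tau_k - t)$ but let the owning player optimise $z_k$ against $\gamma z_k(\tau_k-t)$ through a separate copy $t'$ that is tied to $t$. We choose the increments $\beta_k=3(\tau_k^2-\tau_{k-1}^2)$ so that the partial sums telescope to $3\tau_k^2$. The resulting derivative error is $O(h)$ off the transition windows and $O(|\beta_k|)=O(h)$ inside them, because the indicator is fractional in only one $k$ at a time.

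\textbf{Correctness.} Let $(x,y,z)$ be an $\varepsilon'$-stationary point of the degree-$2$ instance. We show that $(x,y)$ is an $\varepsilon$-stationary point of $f$. Approximate KKT in the auxiliary variables gives $t_\ell\approx\ell$ and $z$ near indicators. The partial derivatives of the new objective with respect to original variables then differ from $\nabla f$ by at most $\poly(d)\cdot(h+1/\gamma+\varepsilon'\lambda)$. Choosing $h,1/\gamma,\varepsilon'$ as suitable inverse polynomials keeps everything polynomial-size. Multilinearity is not needed here.

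\textbf{Main obstacle.} The hard step is making the gadget's derivative in $t$ follow $3t^2$ faithfully while the $z_k$ are determined by the large-weight threshold terms. The naive design puts $\gamma$ into $\partial_tP$. The fix I would try first is the one described above, in which the threshold term sees a decoupled copy of $t$ whose own stationarity absorbs the $\gamma$-sized gradient. Failing that, I would instead bound the gradient leakage by making only a single $z_k$ fractional, and choose the weights so that $\gamma/\lambda$ is small.
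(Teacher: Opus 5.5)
Your overall architecture matches the paper's: reduce from \threeGDA, polarize into cubes of affine forms, replace each cube by a quadratic gadget whose auxiliary variables are near step-indicators, and finish with the chain rule. Your handling of negative coefficients, giving the gadget variables to whichever player maximizes the gadget, is a valid and simpler alternative to the paper's identity $-a^3=(1-a)^3-1+3a-3a^2$.

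The gap is the cubic gadget itself, which you never actually pin down. You insist on a large threshold weight $\gamma$ to make each $z_k$ sharp. That puts the step function $\gamma\sum_k z_k$ into $\partial_t P$, and neither of your repairs removes it.
\begin{itemize}
\item \textbf{Decoupled copy.} Tying a copy $t'$ to $t$, by $\lambda(t'-t)^2$ or by any copy gadget such as the one in Section 5, does not absorb a gradient; it transmits it. Interior stationarity of $t'$ forces the tie's derivative in $t'$ to cancel $\gamma\sum_k z_k$. Since the tie depends only on $t'-t$, its derivative in $t$ is the negative of that, so $\gamma\sum_k z_k$ reappears in $\partial_t$ and hence in the original variables. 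This is exactly the backpropagation phenomenon of the paper's Section 5 lemma. Making $\gamma/\lambda$ small only shrinks $|t'-t|$, not the transmitted gradient.
\item \textbf{Twin variables.} Twins of opposite orientation cancel only outside the window around $\tau_k$ where the KKT conditions pin both. When $t$ lies in that window, $\gamma(z_k-w_k)$ is arbitrary of size up to $\gamma$, and nothing prevents a solution from putting $t$ there. Your claimed $O(h)$ error inside the windows therefore holds only if the threshold weight is itself $O(h)$.
\end{itemize}

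The missing observation is that no large weight is needed. Sharpness comes from taking $\varepsilon$ inverse-polynomially smaller than an inverse-polynomial gradient, not from a large coefficient. The paper uses $P(\theta,z)=\sum_{i\in[m]}a_i(\theta-t_i)z_i$ with $t_i=i/m$, $a_i=6i/m^2$, and $z$ owned by the maximizer. Here the step height itself does the thresholding; this is your ``multiplicative form'' with $\gamma$ replaced by $\beta_k$. If $|\theta-t_j|\ge 1/(2m)$, then $|a_j(\theta-t_j)|\ge 3/m^3$, so $\varepsilon=O(m^{-4})$ forces $z_j$ within $2\varepsilon m^3$ of $\mathbb{I}[\theta\ge t_j]$. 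Then $\partial_\theta P=\sum_j a_j z_j$ exactly, and the single uncontrolled $z_i$ near $\theta$ costs at most $a_i=O(1/m)$.

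Since $L_k$ is affine, $P(L_k(x,y),z)$ is already quadratic, so your auxiliary $t_\ell$ with penalty $\lambda(t_\ell-\ell)^2$ is unnecessary. It would backpropagate correctly if $\ell$ is kept away from the boundary of the box, with error $O(\varepsilon')$ rather than $\varepsilon'\lambda$, but it only adds boundary bookkeeping.
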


\paragraph{Two-Teams Zero-Sum Polymatrix Games.} After proving hardness for the degree-$2$ version of the problem, in \Cref{sec:degree3interaction} we begin considering game-theoretic applications, in particular to polymatrix team games.
First, we convert a hard instance of a degree-$2$ polynomial into a multilinear polynomial of degree-$2$ (i.e., there are no monomials of degree $2$, in only one variable), and which is \emph{sparse}, which means that each variable appears only in a few monomials.
\begin{theorem*}[Informal version of \Cref{th:degree2}]
    Finding a stationary point in a min-max optimization problem is \PPAD-hard even when the objective is a multilinear degree-$2$ polynomial, and each variable appears only in three monomials.
\end{theorem*}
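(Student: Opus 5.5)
We reduce from the \twoGDA instance given by \Cref{th:degree-2-non-multilinear}. That theorem provides a quadratic objective $f(x,y)$ over the hypercube, with $\varepsilon^{-1}=\poly(d)$, whose approximate stationary points encode \PPAD-hard solutions. The task is to rewrite $f$ as a multilinear, sparse quadratic $g$ whose stationary points project back to stationary points of $f$ with only polynomial loss in the approximation factor.

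We handle the two sources of non-multilinearity and density separately.

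\emph{Removing squares.} For each square term $x_i^2$, introduce a fresh min variable $x_i'$ together with a coupling gadget that forces $x_i'\approx x_i$ at every approximate stationary point. The square is then replaced by the bilinear term $x_i x_i'$; max-side squares $y_j^2$ are treated symmetrically. To force the copy to agree, we make the copy belong to the opponent. Let $z$ be a max variable and add $\lambda\,(x_i - x_i')\,(z - \tfrac12)$-type terms, or more simply $\lambda\,x_i'\,(\ldots)$ terms built from a chain of copies. We would first try the standard equality gadget from polymatrix reductions: a max variable $z$ with term $\lambda z(x_i-x_i')$ and a min-side counterpart $\lambda w(x_i'-x_i)$. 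Stationarity at the hypercube then pushes $z$ to a corner whenever $x_i\ne x_i'$ noticeably, which in turn pushes $x_i'$ toward $x_i$. We choose $\lambda$ polynomially larger than the gradient magnitude of $f$ and than $1/\varepsilon$, so that the coupling dominates the original terms.

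\emph{Sparsifying.} Each original variable $x_i$ may appear in many monomials. We therefore create a chain of copies $x_i^{(1)},\dots,x_i^{(k)}$, one per monomial occurrence. Consecutive copies are linked by the same equality gadget. Each copy then appears in one original monomial plus at most two gadget monomials, for three in total. For this count to work, the gadget itself must use only bilinear terms, and each gadget auxiliary variable must also appear in at most three monomials. The gadget and the chain length have to be designed with this budget in mind.

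\emph{Pulling back and error analysis.} Given an $\varepsilon'$-stationary point of $g$, we argue in three steps. First, stationarity of the gadget variables implies $|x_i^{(a)}-x_i^{(b)}|\le O(k\,\|\nabla f\|_\infty/\lambda)$. Second, summing the KKT conditions of all copies of $x_i$ telescopes the gadget contributions, leaving the true partial derivative $\partial_{x_i} f$ evaluated at nearly equal points; this is the backpropagation idea. Third, boundary cases need care: if copies sit at different faces of the hypercube, one KKT inequality has the wrong sign. Choosing $\lambda$ large handles this only up to additive error.

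The main obstacle is this telescoping under box constraints. The concern is that a copy pinned at $0$ or $1$ breaks the equality between its gadget multipliers, so the summed KKT condition is not exactly the true one. The first approach to try is to shrink the effective domain: reparametrize each original variable as $x_i=\tfrac14+\tfrac12\tilde x_i$ for the copies, leaving only the representative at the boundary. This way all copies except the representative are interior, and their gradients vanish approximately.

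We finally set $\varepsilon'=\varepsilon/\poly(d)$ and verify that all coefficients stay polynomially bounded, so that $\varepsilon'^{-1}=\poly(d)$ and the reduction runs in polynomial time.
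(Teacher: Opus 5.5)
Your architecture matches the paper's: one copy per occurrence arranged in a chain, consecutive copies coupled through a bilinear term with an opponent variable, non-representative copies kept off the boundary so their KKT conditions force near-zero gradients, and a telescoping sum that moves every occurrence's gradient onto the first copy. The gaps are that you never commit to a gadget that works, and the closeness step, which carries the technical weight, is asserted on grounds that do not support it.

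The equality gadget you would ``first try'' is inert. Its two terms combine to $\lambda(z-w)(x_i-x_i')$. Since $w$ belongs to the minimizer, $x_i>x_i'$ drives both $z$ and $w$ to $1$, and $x_i<x_i'$ drives both to $0$. The net coefficient therefore vanishes, and nothing pulls the copies together. Both directions must be enforced by the opponent. Doing so with two opponent variables per link would give each middle copy five bilinear monomials. The single-variable form $\Lambda(\cdot)(\tilde y-\tfrac12)$ used in the paper is what the three-monomial budget forces.

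Second, stationarity of the link variables $\tilde y^{(j)}$ alone says nothing about the copies: it only puts $\tilde y^{(j)}$ at a corner when the gap is large. You must also use the copies' KKT conditions. With identity links, the gadget force on a middle copy $x^{(j)}$ is $\lambda(\tilde y^{(j-1)}-\tilde y^{(j)})$, which cancels when adjacent links saturate together, so no local contradiction is available. The paper handles this and interiority with one device: every link is the contraction $x^{(j+1)}\approx(1-2\mu)x^{(j)}+\mu$. The incoming gadget then acts on $x^{(j+1)}$ with weight $1$ and the outgoing one with weight $1-2\mu$. A violated link leaves a net force of magnitude at least $\Lambda\mu/2-1=4$, and \Cref{lem:closness} follows in a few lines. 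The same offset $\mu$ places copies $2,\dots,D+1$ in $[\mu/2,1-\mu/2]$, the total compression $(1-2\mu)^D\ge 1-2\mu D$ is harmless, and the telescoping in \Cref{claim:deriv} uses weights $(1-2\mu)^{j-1}$.

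Your $[\tfrac14,\tfrac34]$ reparametrization cannot be applied at every link, because a compression of $2^{-D}$ would need exponentially large coefficients to decode. It must instead be a single shrinking link followed by identity links. Closeness then needs a global argument, which works once $\lambda\gg D\max|\partial f|$. You propagate a violated link forward: a saturated $\tilde y^{(j)}$, together with the KKT condition of $x^{(j+1)}$, forces $\tilde y^{(j+1)}$ to saturate up to $O(\max|\partial f|/\lambda)$. This continues until the last copy, whose lone gadget pushes with force about $\lambda/2$. The telescoping must also be weighted rather than a plain sum: a factor $\tfrac12$ on copies past the shrink, offset by writing their monomials in terms of $2x^{(j)}-\tfrac12$. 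None of this is in the proposal, so as written it is a plausible plan rather than a proof.
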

To do so, we extend the idea of the \emph{copy-gadget} of \citet{hollender2025complexity} that was originally used to multilinearize a degree-$2$ instance. In particular, the original copy gadget uses the min-max structure to ``copy'' the value of a variable $x_k$ into a copy $x_k'$, by using an extra variable $\tilde y_k$. This is helpful in multilinearizing polynomials, since we can then replace a term of the form $x_k^2$ with a term of the form $x_kx_k'$, which is multilinear. The important property is not only that $x_k\approx x_k'$ at solutions, but also that the partial derivative with respect to $x_k$ is preserved at solutions.
We extend this idea to multiple copies: if $x_k$ appears in $D$ monomials, we make $D$ copies $\{x_k^{(1)},\ldots,x_k^{(D)}\}$ of it, by introducing only at most $D$ new max variables $\tilde y_k^{(j)}$, and we replace each appearance of $x_k$ with one of its copies. The challenge now is that in the new polynomial, the derivatives of all other copies should be propagated back to the derivative of the ``original'' copy $x_{k}^{(1)}$.

This result almost directly proves that two-team zero-sum games are \PPAD-hard and, in particular, implies the following theorem(s).
\begin{theorem*}[Informal version of \Cref{th:manyplayerpolymatrix} and \Cref{th:2vs2}]
    Finding an equilibrium of a two-team polymatrix game is \PPAD-hard even when there are polynomially many players with binary actions, or when there are only $2$ players per team, each with polynomially many actions.
\end{theorem*}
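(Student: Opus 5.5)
The statement is an informal theorem with two parts: a many-player binary-action two-team polymatrix game, and a $2$ vs.~$2$ game with polynomially many actions. I would derive both from the sparse multilinear degree-$2$ hardness result (\Cref{th:degree2}).

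\textbf{Many-player case.} Take a hard instance $f(x,y)$ over $[0,1]^{d}\times[0,1]^{d}$. It is multilinear and quadratic, and each variable appears in at most three monomials. I create one binary-action player per variable: players $x_k$ form the min team and players $y_k$ form the max team. Player $x_k$ plays action $1$ with probability $x_k$. Every monomial is one of $x_ix_j$, $x_iy_j$, $y_iy_j$, a single variable, or a constant, so $f$ is exactly the expected value of a sum of pairwise payoffs.
\begin{itemize}
\item Cross-team monomials $x_iy_j$ become zero-sum bimatrix interactions between the two players.
\item Intra-team monomials $x_ix_j$ become common-payoff edges inside the min team; intra-team $y_iy_j$ monomials are handled the same way inside the max team.
\item Linear terms become unary payoffs, which can be folded into an edge to any opponent.
\end{itemize}
The game is constant-sum between teams: the min team's utility is $-f$ and the max team's is $f$. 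I then check that an $\varepsilon$-Nash equilibrium, where each player's mixed strategy is an $\varepsilon$-best response, corresponds to an $\varepsilon$-stationary point. Player $x_k$'s utility is affine in $x_k$ with slope $-\partial_{x_k}f$, so $\varepsilon$-best-response is exactly the one-coordinate KKT condition $-\partial_{x_k}f\,(x_k'-x_k)\le\varepsilon$. Summing over coordinates gives the box-constrained condition \eqref{eq:intro_kkt} with error $d\varepsilon$; a polynomial rescaling of $\varepsilon$ absorbs this. Payoffs stay polynomially bounded, and sparsity keeps the normalization harmless.

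\textbf{$2$ vs.~$2$ case.} Here the task is to compress many binary players into two players per team. I would use the standard polymatrix-to-few-players technique (à la Daskalakis–Goldberg–Papadimitriou coloring), which relies on the interaction graph having bounded degree. Sparsity gives this: each variable appears in at most three monomials.
\begin{itemize}
\item Properly color the interaction graph (intra-team edges included) with few colors. Alternatively, since the graph of multilinear monomials is bipartite-like after the copy gadget, use a split within each team into two classes $A,B$ such that no intra-class edges exist. This is where I would exploit the structure from \Cref{th:degree2}, possibly reordering copies to get the two-coloring.
\item Each team gets two "lawyer" players, one per class. A lawyer's action is a pair (variable $k$ in its class, bit $b$).
\item Add a high-stakes "generalized matching pennies" between lawyers of opposing teams. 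This forces each lawyer to mix nearly uniformly over its variables, up to $O(1/\text{stakes})$.
\item The conditional probability of bit $1$ given $k$ then encodes $x_k$, and the original payoffs, scaled by the number of variables, are reproduced among lawyers.
\end{itemize}

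\textbf{Main obstacle.} The delicate step is the error analysis in the $2$ vs.~$2$ reduction. Matching-pennies enforcement only holds approximately, and conditional strategies are ill-defined when marginals are tiny. Also, the penalty must be zero-sum between teams while lawyers within a team share utility. Matching pennies is played across teams, so it fits the zero-sum constraint. I would first try giving matching-pennies weight $M$ polynomially large and showing marginals are within $O(d/M)$ of uniform. The implied $\varepsilon'$-best-response per variable then yields a $\poly(d)\varepsilon$-stationary point. Since \Cref{th:degree2} holds with inverse-polynomial $\varepsilon$, polynomial losses are acceptable.
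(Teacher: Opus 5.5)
Your proposal is correct and follows essentially the paper's route. For the many-player case, you use one binary-action player per variable of the sparse multilinear instance of \Cref{th:degree2}, fold linear terms into existing edges, and use multilinearity to turn $\varepsilon$-best responses into the coordinatewise KKT conditions. For the $2$ vs.~$2$ case, two players per team simulate two edge-free classes, and a cross-team hide-and-seek gadget forces near-uniform marginals; your generalized matching pennies would need dummy padding when class sizes differ, which the paper avoids because its actions $(u,b,v)$ carry a separate seek index.

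The one step you leave vague is the two-coloring, and it is exactly \Cref{rem:colorable}. Intra-team edges arise only from monomials of $f_{\textsf{copy}}$, and each copy occurs in at most one of them. The gadget variables $\tilde x_k^{(j)},\tilde y_k^{(j)}$ interact only with opposite-team copies. So each team's internal graph is a matching, and no reordering of copies is needed.
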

First, in \Cref{sec:polymatrix}, we show how to reduce a multilinear polynomial of degree $2$ to a polymatrix game while preserving sparsity.
Indeed, existing reductions from degree-$2$ to polymatrix (such as the one of \citet{hollender2025complexity}) do not preserve the sparsity, and we have to make simple modifications to those reductions in order to do so. This implies that polymatrix games with polynomially many players and binary actions, with only zero-sum or coordination edges (i.e., a two-team zero-sum polymatrix game) are \PPAD-hard.

Lastly, in \Cref{sec:2vs2}, we prove that the problem remains hard even if we limit the setting to having only $2$ players per team (but with polynomially many actions each). 
To do so, we use the observation that, in the construction of \Cref{sec:degree3interaction}, we can group all the $x$ (and $y$, resp.) variables into two groups such that all variables in one group never appear in a monomial together. Then we ``assign'' all the variables of one group to a single bit-player, and we let each group player first choose a player, then the action that player should take. With a standard ``hide-and-seek'' gadget game played on the side (see \citet{althofer1994sparse, rubinstein2017settling}), we can ensure that the players mix (approximately) uniformly over the choice of the bit-player to simulate, thus making sure that the equilibrium of the $2$ vs.~$2$ game leads back to an equilibrium of the original binary-action polymatrix game.

\section{Preliminaries}

In this paper, we study fixed points of Gradient Descent-Ascent (GDA) dynamics, or equivalently stationary points of min-max optimization. This solution concept is the natural generalization of KKT points in minimization problems to the more general setting of min-max optimization. We begin by defining approximate KKT points for minimization problems.

\begin{definition}
    Given a continuously differentiable function $g:[0,1]^d\to\Reals$ and $\varepsilon>0$, we define 
    \[
    \KKT_\varepsilon(g)=\{x\in[0,1]^d:-\partial_{x_i}g(x)(x_i'-x_i)\le\varepsilon\\,\forall i\in [d],\forall x_i'\in[0,1]\}
    \]
    to be the set of $\varepsilon$-KKT points of the minimization problem $\min_{x \in [0,1]^d} g(x)$.
\end{definition}
Then, we define \GDA as follows.

\begin{problem}[\GDA]
    Given $\varepsilon>0$ and two circuits implementing a $\poly(d)$-Lipschitz and $\poly(d)$-smooth function $f:[0,1]^d\times[0,1]^d\to[-1,1]$  and its gradient $\nabla f:[0,1]^{d}\times[0,1]^d\to[-1,1]^{2d}$, find a solution $x,y\in[0,1]^d$ such that
    \[
    x\in \KKT_\varepsilon(f(\cdot,y))\quad\text{and}\quad y\in \KKT_\varepsilon(-f(x,\cdot)).
    \]
\end{problem}

As mentioned in the introduction, \GDA was shown to be \PPAD-hard by \citet{bernasconi2026complexity}. In this paper, we are interested in the special case in which $f$ is a low-degree polynomial.

\begin{problem}[\kGDA]
    Given $\varepsilon>0$ and a degree-$k$ polynomial $f$ in $2d$ variables (represented by its coefficients, each in $[-1,1]$), find a solution to \GDA.
\end{problem}

We assume that polynomials are represented in their canonical form, where each monomial appears at most once.
Finally, we consider two-team zero-sum polymatrix games, which are fundamentally related to \kGDA with $k=2$. %

\begin{problem}[\TTPG]
    Given $\varepsilon>0$, a set of $\mathcal{N}$ players, partitioned in two teams $X\sqcup Y=\mathcal{N}$, and matrices $A^{i,j}\in[-1,1]^{d\times d}$, for all ${i,j\in \mathcal{N}}$ satisfying for all $i,i'\in X$ and $j,j'\in Y$,
    \[
    A^{i,i'} = (A^{i',i})^\top,\quad A^{j,j'} = (A^{j',j})^\top\quad\text{and}\quad A^{i,j} = -(A^{j,i})^\top,
    \]
    find $x=(x_1,\ldots, x_{|\mathcal{N}|})\in \Delta_d^{|\mathcal{N}|}$ such that
    \[
    U_i(x_i,x_{-i}) \ge U_i(x'_{i},x_{-i})-\varepsilon\quad\forall i \in \mathcal{N}, x_i'\in\Delta_d
    \]
    where $U_i(x_i,x_{-i})=\sum_{j\in \mathcal{N}}x_i^\top A^{i,j} x_j$.
\end{problem}

We will reduce from the \PPAD-complete problem \EOTL. %

\begin{problem}[\EOTL]
Given two circuits $S,P:\{0,1\}^N\to\{0,1\}^N$ (called, successor and predecessor circuit, respectively), such that $S(0)\neq P(0)=0$, find a node $v\in\{0,1\}^N$ such that $P(S(v))\neq v$ or $S(P(v))\neq v\neq 0$.
\end{problem}

We will also make use of the following two \PPAD-complete problems.

\begin{problem}[\PC\citep{deligkas2022pure}]
An instance of \PC is given by a vertex set $V = [\kappa]$ and two sets of gates $\cG_{\NOR}$ and $\cG_{\PURIFY}$. Each gate is of the form $(u, v, w)$ where $u, v, w \in V$ are distinct nodes with the following interpretation:
\begin{itemize}
\item If $(u,v,w) \in \cG_{\NOR}$, then $u$ and $v$ are the inputs of the gate, and $w$ is its output.
\item If $(u,v,w) \in \cG_{\PURIFY}$, then $u$ is the input of the gate, and $v$ and $w$ are its outputs.
\end{itemize}

Each node is the output of exactly one gate.
A solution to an instance of \PC is an assignment $b : V \rightarrow \{0, 1, \bot\}$ that satisfies all the gates,
i.e., for each gate $(u, v, w) \in \cG$ we have:
\begin{itemize}
    \item if $(u, v, w) \in \cG_{\NOR}$, then $b$ satisfies
    \begin{align*}
        &b(u) = b(v) = 0 \implies b(w) = 1\\
        &(b(u) = 1) \text{  or  } (b(v) = 1) \implies b(w) = 0
    \end{align*}
    
    \item if $(u, v, w) \in \cG_{\PURIFY}$, then $b$ satisfies
    \begin{align*}
        &\{b(v), b(w)\} \cap \{0,1\} \neq \emptyset\\
        &b(u) \in \{0,1\} \implies b(v) = b(w) = b(u)
    \end{align*}
\end{itemize}
\end{problem}

\begin{problem}[\LINVI\citep{bernasconi2024role}]
    Given a matrix $D\in[-1,1]^{m\times m}$, a vector $c \in  [-1,1]^m$, and an approximation parameter $\rho>0$, find a point $z\in [0,1]^m$ such that:%
   \[(Dz+c)_j\cdot (z_j'-z_j)\ge -\rho \quad \forall j \in [m], z'_j \in [0,1]. \]
\end{problem}

\citet{bernasconi2024role} show that \LINVI is \PPAD-complete for a constant $\rho>0$.

\section{\threeGDA}\label{sec:three}

As in prior hardness results for min-max optimization \cite{bernasconi2026complexity,bernasconi2026min}, our reduction relies on two \PPAD-hard problems: an inner problem based on linear variational inequalities (\LINVI), and an outer problem based on discrete fixed-point computations (\PC).

\subsection{Construction}
We reduce from \EOTL. Given an instance of \EOTL, we construct an instance of the \PPAD-complete problem \LINVI, which is defined by a matrix $D\in[-1,1]^{m\times m}$, a vector $c\in[-1,1]^m$, and an approximation parameter $\rho>0$. Moreover, we construct an instance of the \PPAD-Complete problem \PC, which is defined by a collection of gates $\cG_\NOR$ and $\cG_\PURIFY$ over $\kappa=|V|$ nodes.

We let $n\in\Naturals$ and $\varepsilon,\delta>0$ be defined in the following, and build the following instance of \threeGDA with approximation $\varepsilon$:

\begin{itemize}
    \item For each $v\in V$, we introduce $2nm+1$ variables $(x^v_{i,j})_{i\in [n],j\in[m]}$, $(y^v_{i,j})_{i\in [n],j\in[m]}$ and $\bar x_v$. For convenience, we write $x_i^v:=(x_{i,j}^v)_{j\in[m]}$ and $x^v:=(x_{i,j}^v)_{i\in[n],j\in[m]}$, and similarly for $y_i^v$ and $y^v$. 
    \item For each $(u,v,w)\in\cG_{\NOR}$, we introduce the variable $x_{u,v,w}$.
    \item For each $(u,v,w)\in\cG_{\PURIFY}$, we introduce the variables $x_{u,v,w}'$ and $x_{u,v,w}''$. 
    \item For each node $v\in V$, define
    \[
    H_v(x,y)=\sum_{i\in[n]} \langle D x_i^v+c,x_i^v-y_i^v\rangle.
    \]
    \item For each $q\in V$, define
    \[
    s_q(x)=
    \begin{cases}x_{u,v,q}&\text{if}\quad(u,v,q)\in \cG_{\NOR}\\
    x'_{u,q,w}&\text{if}\quad(u,q,w)\in \cG_{\PURIFY}\\
    x''_{u,v,q}&\text{if}\quad(u,v,q)\in \cG_{\PURIFY}
    \end{cases},
    \]
    which is well defined since every node is the output of exactly one gate.
    \item For each $i\in[n]$, let $M_i=i/n$, so that $M_1,\ldots, M_n$  is an arithmetic progression with common difference $1/n$ starting at $1/n$.

    \item The objective function $f$ is defined as:
    \begin{align}
    f(x,y)&=\delta^2\sum_{q\in V} s_q(x)H_q(x,y)+\tag{signal}\\
    &+\delta\sum_{\substack{(u,v,w)\in\cG_{\NOR}}} x_{u,v,w}\big(\bar x_u+\bar x_v-\tfrac12\big)\tag{\NOR gate variables}\\
    &+\delta\sum_{(u,v,w)\in\cG_{\PURIFY}}( x_{u,v,w}'(\tfrac14-\bar x_u)+ x_{u,v,w}''(\tfrac34-\bar x_u))\tag{\PURIFY gate variables}\\
    &+\sum_{u \in V} \bar x_u(3m-\|x^u-y^u\|^2)\tag{node variables}\\
    &+\sum_{v\in V}\sum_{i\in[n]} M_i\|x_i^v-y_i^v\|^2\tag{guesses}
    \end{align}

    \item The parameters are set to $\cH:=\left(\frac{m\kappa}{\rho}\right)^{10}$, $\delta:=\frac{1}{100m\kappa \cH}$, $n:=\lceil\frac{400m^3}{\rho^2\delta^2}\rceil$, and $\varepsilon:=\frac{\rho\delta^2}{100m^2n^3\cH^2}$.
\end{itemize}

\subsection{Derivatives}
It will be convenient to write explicitly the partial derivatives of $f$, which will be used throughout the analysis.

\paragraph{Gate variables.} For the gate variables  $x_{u,v,w}$, $x'_{u,v,w}$ and $x''_{u,v,w}$, direct computations give the following.
For $(u,v,w)\in \cG_\NOR$,
\begin{align}\label{eq:derivuvwone}
\partial_{x_{u,v,w}}f(x,y)=
\delta^2 H_w(x,y)+\delta (\bar x_u+\bar x_v-\tfrac12),
\end{align}
For $(u,v,w)\in \cG_{\PURIFY}$,
\begin{align}\label{eq:derivuvwtwo}
\partial_{x'_{u,v,w}}f(x,y)=\delta^2H_v(x,y)+\delta(\tfrac14-\bar x_u),
\end{align}
and
\begin{align}\label{eq:derivuvwthree}
\partial_{x''_{u,v,w}}f(x,y)=\delta^2H_w(x,y)+\delta(\tfrac34-\bar x_u).
\end{align}

\paragraph{Threshold variables.} Each node $q\in V$ has associated a variable $\bar x_q$. Define, for each $q\in V$, \[N_q^\NOR(x)=\sum_{(q,v,w)\in\cG_\NOR}x_{q,v,w}+\sum_{(u,q,w)\in\cG_\NOR}x_{u,q,w},\quad\text{and}\quad N_q^\PURIFY(x)=\sum_{(q,v,w)\in\cG_\PURIFY}(x'_{q,v,w}+x''_{q,v,w}).\]
Then:
\begin{align}\label{eq:deriv_node}
\partial_{\bar x_q}f(x,y) = (3m-\|x^q-y^q\|^2)+\delta(N_q^\NOR(x)-N_q^\PURIFY(x)).
\end{align}

\paragraph{Main variables.} Since $H_q$ only depends on $x^q$ and $y^q$, for each $q\in V$ and $(i,j)\in[n]\times[m]$, we get:
\[
\partial_{x_{i,j}^q} f(x,y)=\delta^2 s_q(x) \partial_{x_{i,j}^q} H_q(x,y)+2(M_i-\bar x_q)(x^q_{i,j}-y^q_{i,j}),
\]
and 
\[
\partial_{y_{i,j}^q} f(x,y)=\delta^2 s_q(x) \partial_{y_{i,j}^q} H_q(x,y)-2(M_i-\bar x_q)(x^q_{i,j}-y^q_{i,j}).
\]

\subsection{Tight bound on $|H_q(x,y)|$}\label{sec:smallH}
In this section, we show that $|H_q(x)|$ is bounded by $\cH=\poly(m\kappa/\rho)$ (and in particular that it is independent of $n$) at any solution of \threeGDA. The following estimate is the key point at which the construction differs from the earlier smooth-threshold reductions.
A trivial bound gives $|H_q|=O(n)$, since $H_q$ is a sum over $n$ copies.

We will rely on two lemmas, both of which appeared in a similar form in \citet{bernasconi2026complexity,bernasconi2026min}.
The first lemma captures a simple observation: for copies $i\in[n]$ and nodes $q$ where the term $(M_i-\bar x_q)$ is large, the two players have aligned objectives and thus will play similar strategies. The proof is almost identical to \citet[Lemma 7.1]{bernasconi2026complexity} (which, however, cannot be used black-box) and it is deferred to the Appendix \ref{sec:missing3}.
\begin{restatable}{lemma}{xminusy}\label{lem:xminusy}
    For any solution $(x,y)$ of \threeGDA and any $(i,j,q)\in[n]\times[m]\times V$ with $M_i\neq \bar x_q$, it holds:
    \[
    |x_{i,j}^q-y_{i,j}^q|\le \frac{2\delta^2 s_q(x)m}{|M_i-\bar x_q|}+\sqrt{\frac\varepsilon{|M_i-\bar x_q|}}.
    \]
\end{restatable}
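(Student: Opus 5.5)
We will argue coordinate by coordinate, using only the two $\varepsilon$-KKT conditions for the pair $x_{i,j}^q$ (a min variable) and $y_{i,j}^q$ (a max variable), together with the explicit partial derivatives computed above. Write $a:=M_i-\bar x_q\neq 0$ and $d:=x_{i,j}^q-y_{i,j}^q$. By the symmetry $x\leftrightarrow y$ it suffices to treat $d>0$; the case $d=0$ is trivial. The derivatives from the \emph{Main variables} paragraph read
\[
g_x:=\partial_{x^q_{i,j}}f=\delta^2 s_q(x)\,\partial_{x^q_{i,j}}H_q+2ad,\qquad g_y:=\partial_{y^q_{i,j}}f=\delta^2 s_q(x)\,\partial_{y^q_{i,j}}H_q-2ad .
\]

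First we bound the $H_q$-derivatives uniformly in $n$. Since $H_q$ is a sum over copies, only copy $i$ contributes to these derivatives. Thus $\partial_{x^q_{i,j}}H_q=(D^\top(x_i^q-y_i^q))_j+(Dx_i^q+c)_j$ and $\partial_{y^q_{i,j}}H_q=-(Dx_i^q+c)_j$. As $D\in[-1,1]^{m\times m}$, $c\in[-1,1]^m$ and all variables lie in $[0,1]$, both derivatives have absolute value at most $C:=2m+1\le 4m$.

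Next we use the KKT conditions with the deviation that moves one variable onto the other, which is feasible coordinatewise. Taking $x'_{i,j}=y^q_{i,j}$ in the min-player condition gives $-g_x(y-x)\le\varepsilon$, i.e.\ $g_x d\le\varepsilon$. Taking $y'_{i,j}=x^q_{i,j}$ in the max-player condition gives $g_y d\le\varepsilon$. We now split on the sign of $a$:
\begin{itemize}
\item If $a>0$, the first inequality yields $2ad^2\le\varepsilon+\delta^2 s_q C d$.
\item If $a<0$, the second yields $-2ad^2\le\varepsilon+\delta^2 s_q C d$.
\end{itemize}
Here we used $s_q(x)\ge 0$. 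In both cases
\[
2|a|d^2-\delta^2 s_q C\, d-\varepsilon\le 0 .
\]

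Finally, we solve this quadratic inequality in $d$. It gives $d\le \frac{\delta^2 s_q C}{2|a|}+\sqrt{\frac{\varepsilon}{2|a|}}$, using $\sqrt{\alpha+\beta}\le\sqrt\alpha+\sqrt\beta$. With $C\le 4m$ this is at most
\[
\frac{2\delta^2 s_q m}{|a|}+\sqrt{\frac{\varepsilon}{|a|}},
\]
as claimed.

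The only delicate point is choosing which player's KKT condition to invoke depending on the sign of $a$, so that the $2ad$ term enters with a favorable sign. The step where care is needed is the derivative bound, which must not pick up a factor $n$; this holds because $H_q$ is separable across copies.
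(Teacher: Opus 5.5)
Your proof is correct and follows the paper's argument essentially step for step. You use the same two deviations ($x_{i,j}^q\to y_{i,j}^q$ and $y_{i,j}^q\to x_{i,j}^q$), the same case split on the sign of $M_i-\bar x_q$, and a uniform $O(m)$ bound on the $H_q$-derivatives (you use $2m+1\le 4m$, the paper uses $3m$), and you solve the same quadratic inequality. One small point: swapping $x$ and $y$ is not literally a symmetry of $f$, but you do not need that claim, because every step goes through with $|d|$ in place of $d$, exactly as the paper does with $\xi=|x^q_{i,j}-y^q_{i,j}|$.
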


The second lemma is a simple upperbound of the sum of the reciprocals of a shifted arithmetic series. For every $x\in[0,1]$, let $B(x)=\{i\in[n]:|M_i-x|<1/n\}$. Then:

\begin{restatable}{lemma}{harmonic}\label{lem:harmonic}
    For any solution $(x,y)$ of \threeGDA, it holds:
    \[
    \sum_{i\in[n]\setminus B(x)}\frac1{|M_i-x|}\le4n\log n.
    \]
\end{restatable}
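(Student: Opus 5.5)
The statement is purely about the arithmetic progression $M_i=i/n$ and a point $x\in[0,1]$; the solution $(x,y)$ plays no role beyond supplying $x=\bar x_q$. So I plan a direct counting argument. The idea is to order the indices $i\notin B(x)$ by their distance $|M_i-x|$ and compare with a harmonic sum.

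\textbf{Step 1: lower bound on the distances.} Fix $x\in[0,1]$ and split the indices $i\in[n]\setminus B(x)$ into two groups: those with $M_i\ge x+1/n$ (to the right of $x$) and those with $M_i\le x-1/n$ (to the left of $x$).
\begin{itemize}
\item On the right side, let $i_0$ be the smallest such index, so that $M_{i_0}-x\ge 1/n$. For the $k$-th such index $i=i_0+k-1$ with $k\ge1$, consecutive $M_i$ differ by exactly $1/n$, hence
\[
M_i-x\ge \frac1n+\frac{k-1}{n}=\frac kn .
\]
\item The left side is symmetric: the $k$-th index counted from the largest one also satisfies $x-M_i\ge k/n$.
\end{itemize}

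\textbf{Step 2: sum the reciprocals.} Each side contains at most $n$ indices, so by Step 1 each side contributes at most
\[
\sum_{k=1}^n \frac nk = nH_n\le n(1+\log n).
\]
Both sides together therefore contribute at most $2n(1+\log n)$.

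\textbf{Step 3: compare with the target bound.} It remains to check that $2n(1+\log n)\le 4n\log n$. This is equivalent to $1\le\log n$, which holds for $n\ge 3$ with the natural logarithm. This is guaranteed, since $n=\lceil 400m^3/(\rho^2\delta^2)\rceil$ is enormous.

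There is no real obstacle here. The only subtlety is the off-by-one in Step 1: the nearest index outside $B(x)$ on each side is at distance at least $1/n$, not $0$. This is exactly what the exclusion of $B(x)$ buys, and it prevents the smallest denominator from blowing up.
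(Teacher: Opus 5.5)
Your proof is correct and follows the paper's argument: split the indices outside $B(x)$ into the two sides of $x$, note that the $k$-th index on each side is at distance at least $k/n$, and bound the total by $2n\sum_{k=1}^n 1/k\le 4n\log n$. Your Step~3 only makes explicit the harmless requirement $n\ge 3$, which the paper's final inequality also uses without stating it.
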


With these two lemmas, we are ready to state and prove the main result of this section.

\begin{lemma}\label{lem:boundH}
    For any solution $(x,y)$ of \threeGDA, it holds 
    \[
    |H_q(x,y)|\le 16\delta^2m^3n\log(n)+2m^2n^{3/2}\sqrt{\varepsilon}+4m^2\le \cH.
    \]
\end{lemma}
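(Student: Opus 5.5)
We bound $H_q$ termwise, splitting the copies $i\in[n]$ according to whether $i\in B(\bar x_q)$. Write $H_q=\sum_{i\in[n]}\langle Dx_i^q+c,\,x_i^q-y_i^q\rangle$. Since $x_i^q\in[0,1]^m$ and $D\in[-1,1]^{m\times m}$, $c\in[-1,1]^m$, every coordinate of $Dx_i^q+c$ has absolute value at most $m+1\le 2m$. Hence each summand is bounded by
\[
\bigl|\langle Dx_i^q+c,x_i^q-y_i^q\rangle\bigr|\le 2m\sum_{j\in[m]}|x_{i,j}^q-y_{i,j}^q|.
\]

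For the copies $i\in B(\bar x_q)$, use the trivial bound $|x_{i,j}^q-y_{i,j}^q|\le 1$, so each such copy contributes at most $2m^2$. The set $B(\bar x_q)=\{i:|i/n-\bar x_q|<1/n\}$ contains at most two indices, so these copies contribute at most $4m^2$ in total. This is the constant term in the statement.

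For $i\notin B(\bar x_q)$ we have $|M_i-\bar x_q|\ge 1/n>0$, so \Cref{lem:xminusy} applies. Using $s_q(x)\le 1$, it gives
\[
|x_{i,j}^q-y_{i,j}^q|\le \frac{2\delta^2 m}{|M_i-\bar x_q|}+\sqrt{\frac{\varepsilon}{|M_i-\bar x_q|}}.
\]
Summing over $j\in[m]$ and multiplying by $2m$, copy $i$ contributes at most
\[
\frac{4\delta^2m^3}{|M_i-\bar x_q|}+2m^2\sqrt{\frac{\varepsilon}{|M_i-\bar x_q|}}.
\]
Summing the first term over $i\notin B(\bar x_q)$ and applying \Cref{lem:harmonic} gives at most $4\delta^2m^3\cdot 4n\log n=16\delta^2m^3n\log n$. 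For the second term, we simply use $|M_i-\bar x_q|\ge 1/n$, so each summand is at most $2m^2\sqrt{\varepsilon n}$. Over at most $n$ indices this gives $2m^2n^{3/2}\sqrt{\varepsilon}$, which matches the stated middle term. Adding the three contributions yields the first inequality.

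The second inequality, that this sum is at most $\cH$, is a parameter check.
\begin{itemize}
\item Since $n\approx 400m^3/(\rho^2\delta^2)$, the quantity $\delta^2 n$ is $O(m^3/\rho^2)$. The first term is therefore $O(m^6\rho^{-2}\log n)$.
\item The factor $\log n$ is only logarithmic in $\cH$, so the first term is far below $\cH/3$, because $\cH=(m\kappa/\rho)^{10}$ has a large exponent.
\item By the choice of $\varepsilon$, we have $n^{3/2}\sqrt{\varepsilon}\le \rho^{1/2}\delta/(10m\cH)$, which makes the second term tiny.
\item The term $4m^2$ is trivially at most $\cH/3$.
\end{itemize}

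The main obstacle is keeping the first term independent of the naive $O(n)$ scaling. This is exactly why the harmonic bound of \Cref{lem:harmonic} is needed, together with the fact that $\delta^2 n$ stays bounded. The only care needed is to verify that the $\log n$ factor, in which $n$ depends on $\cH$ through $\delta$, is dominated by $\cH$. Since $\log n=O(\log(m\kappa/\rho))$, this holds with room to spare.
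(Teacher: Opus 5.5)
Your proof is correct and follows the paper's argument step for step: the bound $\|Dx_i^q+c\|_\infty\le 2m$, then the split of copies by $B(\bar x_q)$, using \Cref{lem:xminusy} (with $s_q\le 1$) and \Cref{lem:harmonic} for the far copies and the trivial bound for the at most two near copies, giving exactly the three stated terms. Your parameter check ($n^{3/2}\sqrt{\varepsilon}=\sqrt{\rho}\,\delta/(10m\cH)$, $\delta^2 n=O(m^3/\rho^2)$, $\log n=O(\log(m\kappa/\rho))$) just makes explicit what the paper leaves as ``by our choice of $n,\delta,\varepsilon$''; like the paper's, it needs $m\kappa/\rho$ to be sufficiently large.
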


\begin{proof}
We first bound $|H_q(x,y)|$ in terms of $\|x^q_{i}-y^q_{i}\|_1$. We have
\begin{align*}
    |H_q(x,y)|=\Bigg|\sum_{i\in [n]} \langle D x^q_i+c,x^q_{i}-y^q_{i}\rangle\Bigg| \le \sum_{i\in[n]}\|D x^q_i+c\|_\infty\|x^q_{i}-y^q_{i}\|_1\le 2m\sum_{i\in[n]} \|x^q_{i}-y^q_{i}\|_1,
\end{align*}

If $i\notin B(\bar x_q)$ then by \Cref{lem:xminusy} we get:
\begin{align*}
    \|y^q_{i}-x^q_{i}\|_1&=\sum_{j\in[m]}|x_{i,j}^q-y_{i,j}^q|\\
    &\le \frac{2\delta^2 m^2}{|M_i-\bar x_q|}+m\sqrt{\frac{\varepsilon}{|M_i-\bar x_q|}}\\
    &\le \frac{2\delta^2 m^2}{|M_i-\bar x_q|}+m\sqrt{n\varepsilon}.
\end{align*}

Then, we can use \Cref{lem:harmonic} and get
\begin{align*}
    \sum_{i\notin B(\bar x_q)} \|x_i^q-y_i^q\|_1&\le 2\delta^2m^2\sum_{i\notin B(\bar x_q)}\frac{1}{|M_i-\bar x_q|}+mn^{3/2}\sqrt{\varepsilon}\\
    &\le 8\delta^2m^2n\log(n)+mn^{3/2}\sqrt{\varepsilon}.
\end{align*}

If $i\in B(\bar x_q)$ then $\|x_i^q-y_i^q\|_1\le m$, and $|B(\bar x_q)|\le 2$ so this contributes at most $2m$.

Combining the two cases, we get:
\[
\|x^q-y^q\|_1\le 8\delta^2m^2n\log(n)+mn^{3/2}\sqrt{\varepsilon}+2m,
\]
and therefore
\[
|H_q(x,y)|\le 16\delta^2m^3n\log(n)+2m^2n^{3/2}\sqrt{\varepsilon}+4m^2\le \cH,
\]
where the last inequality comes from our choice of $n,\delta$ and $\varepsilon$.%
\end{proof}

\subsection{Dichotomy Lemma for \threeGDA}\label{sec:dichotomy}

In this section, we will prove the main technical tool underlying our reduction. As in previous min-max hardness results \citep{bernasconi2026complexity,bernasconi2026min}, a solution to \threeGDA is only guaranteed to recover either a solution to \PC or a solution to the inner problem \LINVI.  The key insight is that the signal $s_q$ is encoded in the distance between $x^q$ and $y^q$: a small signal forces the two to be close, while a large signal forces them apart, unless the \LINVI is solved.

We first handle the small-signal case.
It is easy to prove, by \Cref{lem:xminusy}, that when the signal from $q$ is small, then $\|x^q-y^q\|^2$ is itself small.

\begin{lemma}\label{lem:smallsignal}
   For any solution $(x,y)$ of \threeGDA, if $s_q(x)\le \varepsilon/(\delta^2\cH)$ then $\|x^q-y^q\|^2\le 3m-10\delta\kappa$.%
\end{lemma}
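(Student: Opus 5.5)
The plan is to invoke \Cref{lem:xminusy} copy by copy and to treat the (at most two) copies $i\in B(\bar x_q)$ trivially. The threshold $3m$ leaves a slack of $m$ over the trivial bound $2m$ on those copies. This slack easily absorbs both $10\delta\kappa$ and the tiny contributions of all other copies.

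\textbf{Copies outside $B(\bar x_q)$.} For $i\notin B(\bar x_q)$ we have $|M_i-\bar x_q|\ge 1/n$, and in particular $M_i\neq\bar x_q$. So \Cref{lem:xminusy} applies. Plugging in the hypothesis $s_q(x)\le \varepsilon/(\delta^2\cH)$, it gives for every $j\in[m]$
\[
|x^q_{i,j}-y^q_{i,j}|\le \frac{2\varepsilon m}{\cH\,|M_i-\bar x_q|}+\sqrt{\frac{\varepsilon}{|M_i-\bar x_q|}}\le \frac{2\varepsilon mn}{\cH}+\sqrt{\varepsilon n}\le 2\sqrt{\varepsilon n}.
\]
The last inequality uses that $\varepsilon mn/\cH\le\sqrt{\varepsilon n}$, which holds because $\varepsilon$ is astronomically small compared to $1/(m^2n)$.

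Squaring and summing over the at most $nm$ pairs $(i,j)$ with $i\notin B(\bar x_q)$, the total contribution of these copies to $\|x^q-y^q\|^2$ is at most $4\varepsilon n^2 m$. By the choice $\varepsilon=\rho\delta^2/(100m^2n^3\cH^2)$, this is at most $\rho\delta^2/(25 m n\cH^2)$, which is far smaller than $\delta\kappa$.

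\textbf{Copies inside $B(\bar x_q)$.} This case includes any copy with $M_i=\bar x_q$, where \Cref{lem:xminusy} does not apply. Since the $M_i$ are spaced exactly $1/n$ apart and the condition $|M_i-\bar x_q|<1/n$ is strict, we get $|B(\bar x_q)|\le 2$. Each such copy satisfies $\|x^q_i-y^q_i\|^2\le m$, since all coordinates lie in $[0,1]$. Hence these copies contribute at most $2m$.

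\textbf{Combining the two cases.} We obtain
\[
\|x^q-y^q\|^2\le 2m+4\varepsilon n^2 m.
\]
It remains to check that $4\varepsilon n^2 m+10\delta\kappa\le m$. With $\delta=1/(100m\kappa\cH)$ we get $10\delta\kappa=1/(10m\cH)<1/2\le m/2$. The other term is negligible, so the inequality holds.

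No step is a real obstacle. The only care needed is the bookkeeping on $B(\bar x_q)$: the copies it excludes are exactly the ones \Cref{lem:xminusy} cannot handle, and its size bound of two is what leaves the required room below $3m$.
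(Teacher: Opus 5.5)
Your proof is correct and is essentially the paper's argument: split the copies according to $B(\bar x_q)$, apply \Cref{lem:xminusy} with $|M_i-\bar x_q|\ge 1/n$ outside it, and charge at most $m$ to each of the at most two copies inside. The only difference is bookkeeping. You keep the explicit outside contribution $4\varepsilon n^2 m$, whereas the paper coarsens it to $1$ and concludes from $2m+1\le 3m-10\delta\kappa$, a step that tacitly needs $m\ge 2$. The dropped factor $2$ in your check $\varepsilon mn/\cH\le\sqrt{\varepsilon n}$ is harmless since $\varepsilon$ is tiny.
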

\begin{proof}
    Consider any $i\not\in B(\bar x_q)$, which by definition satisfies $|M_i-\bar x_q|\ge 1/n$. By \Cref{lem:xminusy} we have 
    \begin{align*}
        |x_{i,j}^q-y_{i,j}^q|&\le \frac{2\delta^2 s_q(x)m}{|M_i-\bar x_q|}+\sqrt{\frac{\varepsilon}{|M_i-\bar x_q|}}\\
        &\le \frac{2 \varepsilon m }{\cH|M_i-\bar x_q|}+\sqrt{\frac{\varepsilon}{|M_i-\bar x_q|}}\\
        &\le \frac{2\varepsilon nm}{\cH}+\sqrt{\varepsilon n}\\
        &\le \frac1{mn}.
    \end{align*}
    Thus, for $i\notin B(\bar x_q)$ we have $\|x_i^q-y_i^q\|_2^2\le 1/n$.
   Consider now an $i\in B(\bar x_q)$. then trivially $\|x_i^q-y_i^q\|_2^2\le m$ and $|B(\bar x_q)|\le 2$. 
   Combining the two upper bounds we get:
    \[
    \|x^q-y^q\|_2^2\le 2m+1\le 3m-10\delta\kappa,
    \]
    where the last inequality holds since $\delta\kappa=\frac{1}{100m\cH}\le \frac1m$.
\end{proof}

For the large-signal case, a symmetric statement would require $\|x^q-y^q\|^2$ to be large whenever $s_q$ is large. However, this cannot be guaranteed unconditionally.  Indeed, we can only prove that this holds only if there is no index $i$ such that $y_i^q$ solves the corresponding \LINVI problem. 
This is also where we need $n$ to be large enough. Indeed, when $s_q$ is close to $1$, the \LINVI term is active on all copies $i$ for which
\[
|M_i-\bar x_q|<\delta^2.
\]
If no such copy yields an approximate \LINVI solution, then each of these copies must contribute a nontrivial amount to
$\|x^q-y^q\|^2$. Therefore, we need the window of radius $\delta^2$ to contain many grid points, i.e., we need $n\delta^2$ to be large. This is the reason for choosing $n$ of order at least $1/\delta^2$.
Formally:
\begin{lemma}\label{lem:largesignal}
For any solution $(x,y)$ of \threeGDA, if $s_q(x)\ge 1-\varepsilon/(\delta^2\cH)$, and no $i\in[n]$ exists such that $y_i^q$ is a $\rho$-approximate solution to \LINVI, then $\|x^q-y^q\|^2_2\ge 3m+10\delta\kappa$.
\end{lemma}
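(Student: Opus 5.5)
We argue by contradiction: assume $\|x^q-y^q\|_2^2<3m+10\delta\kappa$ and that no $y_i^q$ is a $\rho$-approximate \LINVI solution. The argument has two steps. First we locate $\bar x_q$; then we show that every copy $i$ with $|M_i-\bar x_q|<\delta^2$ contributes a fixed amount to $\|x^q-y^q\|^2$. Since there are about $n\delta^2$ such copies, and $n\delta^2=\Theta(m^3/\rho^2)$, the total exceeds $3m+10\delta\kappa$.

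\textbf{Step 1 (locating $\bar x_q$).} We use \eqref{eq:deriv_node}. The perturbation $\delta(N_q^\NOR-N_q^\PURIFY)$ has absolute value at most $2\delta\kappa$. Under the contradiction hypothesis we get $\partial_{\bar x_q}f\ge -12\delta\kappa$.
\begin{itemize}
\item If $\bar x_q$ is bounded away from $0$, then all copies with $M_i<\bar x_q-\delta^2$ have a negative coefficient $M_i-\bar x_q$. For these copies both players want to maximize $\|x_i^q-y_i^q\|^2$. We would then show, via the KKT conditions (the $\delta^2$ terms are dominated once $|M_i-\bar x_q|\ge\delta^2$), that their coordinates sit at opposite endpoints, so each such copy contributes roughly $m$.
\item With about $n\bar x_q$ such copies, this forces $\bar x_q=O(1/n)$, or else it already gives the contradiction.
\end{itemize}
We do not actually need $\bar x_q\approx 0$. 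All we need is that the window $W=\{i:|M_i-\bar x_q|<\delta^2\}$ contains at least $n\delta^2/2-1$ indices. This holds for every $\bar x_q\in[0,1]$, since $\delta^2<1/2$ and the grid $M_i=i/n$ covers $(0,1]$ with spacing $1/n$. So Step 1 can be skipped if the count alone suffices, and I expect it does.

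\textbf{Step 2 (per-copy lower bound).} Fix $i\in W$. Since $y_i^q$ is not a $\rho$-solution, there exist $j$ and $y'\in[0,1]$ with $(Dy_i^q+c)_j(y'-y_{i,j}^q)<-\rho$. Applying the $\varepsilon$-KKT condition of the max player in coordinate $y_{i,j}^q$ with the same $y'$ gives
\[
\delta^2 s_q(x)\,(Dx_i^q+c)_j\,(y'-y^q_{i,j}) + 2(M_i-\bar x_q)(x^q_{i,j}-y^q_{i,j})(y'-y^q_{i,j})\ge-\varepsilon .
\]
We combine this with the following bounds:
\begin{itemize}
\item $|(D(x_i^q-y_i^q))_j|\le m\|x_i^q-y_i^q\|_\infty$;
\item $|2(M_i-\bar x_q)(x^q_{i,j}-y^q_{i,j})|\le 2\delta^2\|x_i^q-y_i^q\|_\infty$;
\item $s_q(x)\ge 1-\varepsilon/(\delta^2\cH)$.
\end{itemize}
This yields
\[
\rho\,\delta^2(1-o(1)) \le \varepsilon+\delta^2(m+2)\|x_i^q-y_i^q\|_\infty .
\]
Since $\varepsilon\ll\rho\delta^2$, it follows that $\|x_i^q-y_i^q\|_\infty\ge \rho/(2(m+2))$, and hence $\|x_i^q-y_i^q\|_2^2\ge \rho^2/(4(m+2)^2)$.

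\textbf{Step 3 (summing).} Summing over $W$ gives
\[
\|x^q-y^q\|_2^2\ \ge\ \Big(\frac{n\delta^2}{2}-1\Big)\frac{\rho^2}{4(m+2)^2}\ \ge\ \frac{200m^3}{\rho^2}\cdot\frac{\rho^2}{4(m+2)^2}-1 ,
\]
which is at least $3m+10\delta\kappa$ with the chosen $n=\lceil 400m^3/(\rho^2\delta^2)\rceil$, at least for $m$ not tiny. A slightly tighter window count or constant fixes small cases. This contradicts the hypothesis.

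The main obstacle is Step 2. The error term coming from the guess regularizer scales with $\|x_i^q-y_i^q\|$ itself, so it can only be absorbed if the window radius is of order $\delta^2$. That radius is exactly what makes $n\delta^2$ the relevant count. Getting the constants to close, so that $\Theta(m^3/\rho^2)$ copies times $\Theta(\rho^2/m^2)$ per copy exceeds $3m$, is where care is needed.
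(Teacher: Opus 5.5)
Your Steps 2--3 are correct and are essentially the paper's proof. Both restrict to the at least $n\delta^2/2$ copies with $|M_i-\bar x_q|<\delta^2$, combine the $y$-player's KKT condition with the failure of the \LINVI condition to get a per-copy lower bound on $x_i^q-y_i^q$, and then sum. The paper phrases the per-copy step as a contradiction giving $\|x_i^q-y_i^q\|_1\ge\rho/10$, whereas you use the violating pair $(j,y')$ directly and get an $\ell_\infty$ bound; either suffices since $n\delta^2\ge 400m^3/\rho^2$.

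You are right to discard Step 1, because its heuristic is actually false. When $M_i<\bar x_q$ the objective is concave in $y_i^q$, so the max player pulls $y_i^q$ toward $x_i^q$. By \Cref{lem:xminusy}, copies with $\bar x_q-M_i\gg m\delta^2$ therefore have $x_i^q\approx y_i^q$ rather than sitting at opposite endpoints.
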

\begin{proof}
    By assumption, $y_i^q$ is not a $\rho$-approximate solution to $\LINVI$ for all $i\in[n]$.
    Consider indexes $i$ with $|M_i-\bar x_q|<\delta^2$. Notice that since $n\delta^2\ge 2$ there are at least $n\delta^2/2$ such indexes. We claim that $\|x_i^q-y_i^q\|_1\ge\rho/10$ for such $i$. 
    
    Suppose by contradiction that $\|x_i^q-y_i^q\|_1<\rho/10$.
    The optimality conditions of the variables $y_{i,j}^q$, $j\in[m]$, reads as:
    \[
    \Big(-\delta^2 s_q(x)(Dx_i^q+c)_j -2(M_i-\bar x_q)(x^q_{i,j}-y^q_{i,j})\Big)(z-y_{i,j}^q)\le\varepsilon\quad\forall z\in[0,1].
    \]
    Rearranging, we  get:
    \begin{align*}
        -(Dx_i^q+c)_j(z-y_{i,j}^q)&\le\frac{\varepsilon+2(M_i-\bar x_q)(x_{i,j}^q-y_{i,j}^q)(z-y_{i,j}^q)}{\delta^2 s_q(x)}\\
        &\le \frac{2\varepsilon+4|M_i-\bar x_q|\cdot|x_{i,j}^q-y_{i,j}^q|\cdot|z-y_{i,j}^q|}{\delta^2 }\tag{$s_q(x)\ge 1/2$}\\
        &\le \frac{2\varepsilon}{\delta^2}+4\frac{|M_i-\bar x_q|}{\delta^2}|x_{i,j}^q-y_{i,j}^q|\\
        &\le \frac{2\varepsilon}{\delta^2}+4|x_{i,j}^q-y_{i,j}^q|\tag{$|M_i-\bar x_q|<\delta^2$}\\
        &\le \frac{2\varepsilon}{\delta^2}+\frac25\rho\le \frac12\rho\\
    \end{align*}
    Moreover, we also have:
    \[|D(y_i^q-x_i^q)_j|\le |\sum_{k\in[m]}D_{jk}(y_{i,k}^q-x_{i,k}^q)|\le\|x_i^q-y_i^q\|_1\le \frac\rho{10}.\]
    Combining these two inequalities, we get that for all $j\in[m]$ and $z\in[0,1]$:
    \begin{align*}
    (Dy_i^q+c)_j(z-y_{i,j}^q) &= D(y_i^q-x_i^q)_j(z-y_{i,j}^q)+(Dx_i^q+c)_j(z-y_{i,j}^q)\\
    &\ge -|D(y_i^q-x_i^q)_j|+(Dx_i^q+c)_j(z-y_{i,j}^q)\\
    &\ge -\rho.
    \end{align*}
    This shows that $y_i^q$ is a solution of the \LINVI instance, and thus $\|x_i^q-y_i^q\|_1<\rho/10$ is a contradiction.
    Now, recalling that there are at least $n\delta^2/2$ indexes $i\in[n]$ such that  $\|x_i^q-y_i^q\|_1\ge \rho/10$, we get:
    \[
    \|x^q-y^q\|^2\ge \sum_{i\in[n]:|M_i-\bar x_q|<\delta^2}\|x_i^q-y_i^q\|^2\ge \frac{1}{m}\sum_{i\in[n]:|M_i-\bar x_q|<\delta^2}\|x_i^q-y_i^q\|^2_1\ge \frac n{2m}\delta^2\frac{\rho^2}{100}\ge m^2\ge 3m+10\delta\kappa,
    \]
    concluding the proof of the lemma.
\end{proof}

\subsection{The \PC Gates are Evaluated Correctly}\label{sec:gatesarecorrect}

In this section we are going to prove that the node variables $\bar x_q$ and the gate variables (that are $x_{u,v,w},x_{u,v,w}'$ and $x_{u,v,w}''$) correctly implement the \PC constraints.  We begin with a general lemma relating the sign of a partial derivative to the value of the corresponding min variable at any solution.
\begin{lemma}\label{lem:deriv_pm}
    Let $z$ be any $\min$ variable (that is a variable assigned to the $x$ player). For any solution $(x,y)$ of \threeGDA, we have
    \begin{itemize}
        \item If $\partial_z f(x,y)\ge \gamma$ then $z\le \varepsilon/\gamma$,
        \item if $\partial_z f(x,y)\le -\gamma$ then $z\ge 1-\varepsilon/\gamma$.
    \end{itemize}
\end{lemma}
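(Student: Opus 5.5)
The plan is to unpack the KKT condition for the minimizing player in the single coordinate $z$ and test it against one well-chosen deviation. Since $(x,y)$ is a solution of \threeGDA, we have $x\in\KKT_\varepsilon(f(\cdot,y))$. Because the constraint set is a box, this condition decouples across coordinates. For the coordinate corresponding to the min variable $z$, it reads
\[
-\partial_z f(x,y)\,(z'-z)\le\varepsilon\qquad\forall z'\in[0,1].
\]
No earlier lemma is needed; we only need to pick the deviation $z'$ appropriately in each case.

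For the first item, suppose $\partial_z f(x,y)\ge\gamma>0$. Take $z'=0$; the condition becomes $\partial_z f(x,y)\,z\le\varepsilon$. Since $z\ge 0$ and $\partial_z f(x,y)\ge\gamma$, we get $\gamma z\le\partial_z f(x,y)\,z\le\varepsilon$, and hence $z\le\varepsilon/\gamma$.

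For the second item, suppose $\partial_z f(x,y)\le-\gamma<0$. Take $z'=1$; the condition becomes $-\partial_z f(x,y)\,(1-z)\le\varepsilon$. Since $1-z\ge0$ and $-\partial_z f(x,y)\ge\gamma$, we get $\gamma(1-z)\le\varepsilon$, that is, $z\ge 1-\varepsilon/\gamma$.

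There is no real obstacle here. The only point needing care is the sign convention: $x$ minimizes $f$, so the relevant condition is $-\nabla_x f$ paired with the deviation $(x'-x)$, and the deviations toward $0$ and toward $1$ must be matched to the correct sign of the derivative. I also implicitly assume $\gamma>0$, which is the intended use of the lemma.
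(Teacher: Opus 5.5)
Your proposal is correct and follows essentially the same route as the paper: you apply the coordinate-wise $\varepsilon$-KKT condition for $z$ with the deviation $z'=0$ when $\partial_z f(x,y)\ge\gamma$ and with $z'=1$ when $\partial_z f(x,y)\le-\gamma$. Your explicit use of $z\ge 0$ and $1-z\ge 0$, and your remark that $\gamma>0$ is implicitly assumed, make the inequalities slightly more careful than the paper's version.
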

\begin{proof}
    By the optimality condition, $-\partial_z f(x,y)(z'-z)\le\varepsilon$ for all $z'\in[0,1]$ and thus also for $z'=0$. Thus, in the first case, we obtain $z\le\varepsilon/\partial_z f(x,y)\le \varepsilon/\gamma$.
    In the second case, evaluating the condition for $z'=1$, we get $\gamma(1-z)\le-\partial_zf(x,y)(1-z)\le\varepsilon$, from which follows that $z\ge 1-\varepsilon/\gamma$.
\end{proof}

Then, we show that the node and the gate variables implement the \PC constraints.

\paragraph{Node Variables.}

The node variables $(\bar x_q)_{q\in V}$ value depends on whether $\|x^q-y^q\|^2$ is greater or smaller than $3m$.
It is close to zero when $\|x^q-y^q\|^2$ is small, and close to one if $\|x^q-y^q\|^2$ is large. Formally:

\begin{lemma}\label{lem:nodevars}
    For any solution $(x,y)$ of $\threeGDA$, we have
    \begin{itemize}
    \item if $\|x^q-y^q\|^2\ge 3m+10\delta\kappa$ then $\bar x_q\ge 1-\varepsilon/(\delta\kappa)$,
    \item if $\|x^q-y^q\|^2\le 3m-10\delta\kappa$ then $\bar x_q\le \varepsilon/(\delta\kappa)$.
    \end{itemize}
\end{lemma}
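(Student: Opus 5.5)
The plan is to apply \Cref{lem:deriv_pm} to the min variable $z=\bar x_q$ with $\gamma=\delta\kappa$. For this we only need to control the sign and size of $\partial_{\bar x_q}f(x,y)$, which is given explicitly by \eqref{eq:deriv_node}:
\[
\partial_{\bar x_q}f(x,y)=(3m-\|x^q-y^q\|^2)+\delta\big(N_q^\NOR(x)-N_q^\PURIFY(x)\big).
\]

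The first step is to bound the perturbation term $\delta(N_q^\NOR(x)-N_q^\PURIFY(x))$ uniformly. All gate variables lie in $[0,1]$. Each gate is a triple of distinct nodes and contributes at most two variables, so each of $N_q^\NOR$ and $N_q^\PURIFY$ is a sum of at most $O(|\cG|)$ terms in $[0,1]$. Every node is the output of exactly one gate, hence $|\cG_\NOR|+|\cG_\PURIFY|=\kappa$. It follows that $N_q^\NOR(x)\le\kappa$ and $N_q^\PURIFY(x)\le 2|\cG_\PURIFY|\le 2\kappa$, so
\[
\big|\delta\big(N_q^\NOR(x)-N_q^\PURIFY(x)\big)\big|\le 2\delta\kappa.
\]
A cruder bound such as $9\delta\kappa$ would also suffice, given the margin $10\delta\kappa$ in the hypotheses.

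The second step handles the two cases. If $\|x^q-y^q\|^2\ge 3m+10\delta\kappa$, then $\partial_{\bar x_q}f\le -10\delta\kappa+2\delta\kappa\le-\delta\kappa$, and the second bullet of \Cref{lem:deriv_pm} gives $\bar x_q\ge 1-\varepsilon/(\delta\kappa)$. If $\|x^q-y^q\|^2\le 3m-10\delta\kappa$, then $\partial_{\bar x_q}f\ge 10\delta\kappa-2\delta\kappa\ge\delta\kappa$, and the first bullet gives $\bar x_q\le\varepsilon/(\delta\kappa)$.

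There is no real obstacle here. The only point needing care is the counting bound on $N_q^\NOR$ and $N_q^\PURIFY$, which is what explains the $10\delta\kappa$ margin in the statement.
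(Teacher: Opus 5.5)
Your proposal is correct and follows the same route as the paper: you bound the perturbation $\delta(N_q^\NOR-N_q^\PURIFY)$ in \eqref{eq:deriv_node} by a small multiple of $\delta\kappa$ and then apply \Cref{lem:deriv_pm} with $\gamma=\delta\kappa$. The paper uses the cruder bound $4\delta\kappa$ where you get $2\delta\kappa$ from nonnegativity, and your one slip is harmless: since each \PURIFY gate has two outputs, the correct count is $|\cG_\NOR|+2|\cG_\PURIFY|=\kappa$, not $|\cG_\NOR|+|\cG_\PURIFY|=\kappa$, but this still gives the inequality $|\cG_\NOR|+|\cG_\PURIFY|\le\kappa$ that you actually use.
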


\begin{proof}
First observe that $|N_q^\NOR(x)|\le 2\kappa$ and $|N_q^\PURIFY(x)|\le 2\kappa$. 
Then, we consider the two cases separately.
If $\|x^q-y^q\|^2\ge 3m+10\delta\kappa$, by \Cref{eq:deriv_node} we get 
\[
\partial_{\bar x_q}f(x,y)\le 4\delta\kappa-10\delta\kappa=-6\delta\kappa,
\]
and thus, by \Cref{lem:deriv_pm}, $\bar x_q\ge 1-\varepsilon/(\delta\kappa)$.

If $\|x^q-y^q\|^2\le 3m-10\delta\kappa$, then, by \Cref{eq:deriv_node}, we get
\[
\partial_{\bar x_q}f(x,y)\ge \delta\kappa,
\]
and, by \Cref{lem:deriv_pm}, $\bar x_q\le \varepsilon/(\delta\kappa)$.
\end{proof}

\paragraph{Gate Variables.}
Now we are going to show that the gate variables $(x_{u,v,w},x_{u,v,w}',x_{u,v,w}'')$ encode the output of the corresponding gate. The proof of this lemma is a simple application of \Cref{lem:deriv_pm}. This is the crucial step for which we require that $\delta\cH\ll 1$, and where the $\delta$-hierarchy replaces the smooth threshold circuits of the earlier reduction. For example, in a \NOR gate, the derivative of the output variable has the form
\[
\delta(\bar x_u+\bar x_v-\tfrac12)+\delta^2H_q.
\]
The first term is the intended gate signal, while the second term is the bounded perturbation from the inner gadget. Since $|H_q|\le\mathcal H$, a margin of order $10\delta\mathcal H$ in $\bar x_u+\bar x_v-\tfrac12$ is enough to determine the sign of the derivative.

\begin{lemma}\label{lem:gates}
Consider any solution $(x,y)$ of \threeGDA:%
    \begin{itemize}
        \item If $(u,v,w)\in\cG_\NOR$:
        \begin{itemize}
            \item If $\bar x_u+\bar x_v\ge \frac12+10\delta \cH\implies s_w(x)=x_{u,v,w}\le\varepsilon/(\delta^2\cH)$
            \item If $\bar x_u+\bar x_v\le \frac12-10\delta \cH\implies s_w(x)=x_{u,v,w}\ge1-\varepsilon/(\delta^2\cH)$
        \end{itemize}
        \item If $(u,v,w)\in\cG_\PURIFY$:
        \begin{itemize}
            \item If $\bar x_u\le\tfrac14-10\delta\cH \implies s_v(x)=x'_{u,v,w}\le\varepsilon/(\delta^2\cH)$
            \item If $\bar x_u\ge\tfrac14+10\delta\cH \implies s_v(x)=x'_{u,v,w}\ge1-\varepsilon/(\delta^2\cH)$
            \item If $\bar x_u\le\tfrac34-10\delta\cH \implies s_w(x)=x''_{u,v,w}\le\varepsilon/(\delta^2\cH)$
            \item If $\bar x_u\ge\tfrac34+10\delta\cH \implies s_w(x)=x''_{u,v,w}\ge1-\varepsilon/(\delta^2\cH)$
        \end{itemize}
    \end{itemize}
\end{lemma}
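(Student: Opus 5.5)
Each of the six implications follows from a single application of \Cref{lem:deriv_pm} to the relevant gate variable. We use the explicit partial derivatives \eqref{eq:derivuvwone}--\eqref{eq:derivuvwthree} together with the bound $|H_q(x,y)|\le\cH$ from \Cref{lem:boundH}, which holds at every solution and for every node. The structure of every case is the same. The derivative is $\delta\cdot(\text{gate margin})+\delta^2 H$, where the gate margin is one of $\bar x_u+\bar x_v-\tfrac12$, $\tfrac14-\bar x_u$, $\tfrac34-\bar x_u$. Under each hypothesis the margin has absolute value at least $10\delta\cH$, while $\delta^2|H|\le\delta^2\cH$. The sign of the derivative is therefore fixed, and its magnitude is at least $10\delta^2\cH-\delta^2\cH=9\delta^2\cH\ge\delta^2\cH$.

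Concretely, for a \NOR gate $(u,v,w)$ with $\bar x_u+\bar x_v\ge\tfrac12+10\delta\cH$, \eqref{eq:derivuvwone} gives
\[
\partial_{x_{u,v,w}}f(x,y)\ge 10\delta^2\cH-\delta^2\cH\ge\delta^2\cH .
\]
The first item of \Cref{lem:deriv_pm} with $\gamma=\delta^2\cH$ then yields $x_{u,v,w}\le\varepsilon/(\delta^2\cH)$. Symmetrically, if $\bar x_u+\bar x_v\le\tfrac12-10\delta\cH$, the derivative is at most $-\delta^2\cH$, and the second item gives $x_{u,v,w}\ge1-\varepsilon/(\delta^2\cH)$.

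For a \PURIFY gate we argue the same way using \eqref{eq:derivuvwtwo} and \eqref{eq:derivuvwthree}. Note the sign convention: $\bar x_u\le\tfrac14-10\delta\cH$ makes $\tfrac14-\bar x_u\ge10\delta\cH$, so the derivative is positive and $x'$ is pushed to $0$. The inequality $\bar x_u\ge\tfrac14+10\delta\cH$ makes the derivative negative and pushes $x'$ to $1$. The two cases for $x''$ with threshold $\tfrac34$ are identical. Finally, we identify $s_v(x)=x'_{u,v,w}$, $s_w(x)=x''_{u,v,w}$ and $s_w(x)=x_{u,v,w}$ directly from the definition of $s_q$.

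There is no real obstacle. The only substantive ingredient is the $n$-independent bound on $|H_q|$ from \Cref{lem:boundH}: the naive $O(n)$ bound would not let the $\delta^2 H_q$ term be dominated by the $10\delta^2\cH$ margin. The only other care needed is to check that the derivative of each gate variable involves the $H$ of the correct node ($H_w$ or $H_v$). Each of these is bounded by $\cH$, so this causes no difficulty.
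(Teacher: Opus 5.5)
Your proposal is correct and matches the paper's proof: you apply \Cref{lem:deriv_pm} with $\gamma=\delta^2\cH$ to each gate variable, using the derivatives \eqref{eq:derivuvwone}--\eqref{eq:derivuvwthree} and the $n$-independent bound $|H_q|\le\cH$ from \Cref{lem:boundH}, so that the $10\delta^2\cH$ gate margin dominates the $\delta^2 H_q$ perturbation. Your sign conventions for the \PURIFY thresholds and your identification of $s_v,s_w$ with the corresponding gate variables are also accurate.
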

\begin{proof}
    We apply \Cref{lem:deriv_pm} to the derivative of each variable, using the tight bounds in \Cref{lem:boundH} to control on $|H_q(x,y)|$ terms. We consider two cases:
    \begin{itemize}
        \item $(u,v,w)\in\cG_\NOR$.
            Suppose that $\bar x_u+\bar x_v\ge \frac12+10\delta \cH$. Then, by \Cref{eq:derivuvwone}, we have that $\partial_{x_{u,v,w}}f(x,y)=\delta^2H_w(x,y)+\delta(\bar x_u+\bar x_v-\tfrac12)\ge \delta^2 H_w(x,y)+10\delta^2\cH \ge \delta^2\cH $ and thus, thanks to \Cref{lem:boundH}, we get $x_{u,v,w}\le\varepsilon/(\delta^2\cH)$.
            Suppose that $\bar x_u+\bar x_v\le \frac12-10\delta \cH$. Then $\partial_{x_u,v,w}f(x,y)\le -\delta^2\cH$ and $x_{u,v,w}\ge1-\varepsilon/(\delta^2\cH)$.
    
        \item $(u,v,w)\in\cG_\PURIFY$.
    First consider the variable $x'_{u,v,w}$.
             Suppose that $\bar x_u\le\tfrac14-10\delta\cH$. Then \Cref{eq:derivuvwtwo} implies that $\partial_{x_{u,v,w}'}f(x,y)\ge\delta^2\cH$ and by \Cref{lem:deriv_pm}, that $x_{u,v,w}'\le\varepsilon/(\delta^2\cH)$. Suppose that $\bar x_u\ge\tfrac14+10\delta\cH$. Then, $\partial_{x_{u,v,w}'}f(x,y)\le-\delta^2\cH$ and thus $x_{u,v,w}'\ge1-\varepsilon/(\delta^2\cH)$.
             Consider now the variable $x''_{u,v,w}$.
             Similarly, using \Cref{eq:derivuvwthree}, we can infer that if $\bar x_u\le\tfrac34-10\delta\cH$  then $x_{u,v,w}''\le\varepsilon/(\delta^2\cH)$ and that if $\bar x_u\ge\tfrac34+10\delta\cH$, then $x_{u,v,w}''\ge1-\varepsilon/(\delta^2\cH)$.
    \end{itemize}
    This concludes the proof.
\end{proof}

\subsection{\threeGDA is \PPAD-hard}

In this section, we combine the results of \Cref{sec:dichotomy} and \Cref{sec:gatesarecorrect} to show that \threeGDA is \PPAD-hard.
To do so, given a solution $
(x,y)$ of \threeGDA, we define the \PC assignment $b:V\to\{0,1,\bot\}$:
\begin{align}\label{eq:encoding}b(v):=
    \begin{cases}
        1&\text{if}\quad \|x^v-y^v\|^2\ge 3m+10\delta\kappa\\
        0&\text{if}\quad \|x^v-y^v\|^2\le 3m-10\delta\kappa\\
        \bot&\text{otherwise}
    \end{cases}.
\end{align}

Then, given a solution to \threeGDA, we show that either $b$ satisfies all the gate constraints or that we are able to recover a solution to \LINVI. This implies the \PPAD-harndess of \threeGDA.

\begin{theorem}\label{th:degree3minmax}
\threeGDA is \PPAD hard even for $\varepsilon=\frac{1}{\poly(d)}$.
\end{theorem}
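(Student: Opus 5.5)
The reduction runs from \EOTL through the instances of \LINVI and \PC described in the construction. We first check that it is polynomial. The parameters $\cH,\delta,n$ are polynomial in $m,\kappa,1/\rho$, and hence so is $1/\varepsilon$. Since $\rho$ is a constant, the dimension $d=O(\kappa nm+\kappa)$ is polynomial, and so $\varepsilon=1/\poly(d)$. The objective $f$ is a degree-$3$ polynomial: $s_qH_q$ has degree $3$, $\bar x_u\|x^u-y^u\|^2$ has degree $3$, and everything else has degree $\le 2$. Its coefficients are polynomially bounded, and after dividing $f$ (and $\varepsilon$) by a polynomial factor they lie in $[-1,1]$.

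Next, take any solution $(x,y)$. If some $y_i^q$ is a $\rho$-approximate \LINVI solution, we output it; this case is detectable in polynomial time. Otherwise, we show that the assignment $b$ of \Cref{eq:encoding} satisfies every gate, and we argue gate by gate. The key chain of implications is as follows. If $s_q\le\varepsilon/(\delta^2\cH)$, then \Cref{lem:smallsignal} gives $b(q)=0$. If $s_q\ge1-\varepsilon/(\delta^2\cH)$, then \Cref{lem:largesignal} gives $b(q)=1$, using the case assumption. By \Cref{lem:nodevars}, $b(u)=1$ implies $\bar x_u\ge1-\varepsilon/(\delta\kappa)$ and $b(u)=0$ implies $\bar x_u\le\varepsilon/(\delta\kappa)$. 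Since $\varepsilon/(\delta\kappa)\ll\delta\cH$, these values sit far from the thresholds $\tfrac12\pm10\delta\cH$ and $\tfrac14,\tfrac34\pm10\delta\cH$ used in \Cref{lem:gates}.

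For a \NOR gate $(u,v,w)$, we treat two cases. If $b(u)=b(v)=0$, then $\bar x_u+\bar x_v\le 2\varepsilon/(\delta\kappa)\le\tfrac12-10\delta\cH$. \Cref{lem:gates} then gives $s_w\ge1-\varepsilon/(\delta^2\cH)$, hence $b(w)=1$. If $b(u)=1$ (or $b(v)=1$), then $\bar x_u+\bar x_v\ge1-\varepsilon/(\delta\kappa)\ge\tfrac12+10\delta\cH$, since $\bar x_v\ge0$. Hence $s_w$ is small and $b(w)=0$.

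For a \PURIFY gate $(u,v,w)$, we again treat two cases. If $b(u)=0$, then $\bar x_u$ lies below both $\tfrac14-10\delta\cH$ and $\tfrac34-10\delta\cH$, so both outputs are small and $b(v)=b(w)=0$. Symmetrically, $b(u)=1$ gives $b(v)=b(w)=1$. For the remaining requirement that at least one output is pure, note that any $\bar x_u\in[0,1]$ cannot be within $10\delta\cH$ of both $\tfrac14$ and $\tfrac34$, since $20\delta\cH<\tfrac12$. Hence either $\bar x_u\le\tfrac34-10\delta\cH$, making $x''$ small and $b(w)=0$, or $\bar x_u\ge\tfrac14+10\delta\cH$, making $x'$ large and $b(v)=1$. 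Either way, one output lies in $\{0,1\}$.

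Thus $b$ is a \PC solution, and this completes the reduction. The main obstacle is essentially discharged by the earlier lemmas, above all the $n$-independent bound \Cref{lem:boundH}. What remains is careful bookkeeping in two places. First, we must check the parameter inequalities: $2\varepsilon/(\delta\kappa)\le\tfrac12-10\delta\cH$, $20\delta\cH<\tfrac12$, and $n\delta^2\ge2$, all of which follow from $\delta\cH=1/(100m\kappa)$. Second, we must check that the extremal-signal hypotheses of \Cref{lem:smallsignal,lem:largesignal} exactly match the conclusions of \Cref{lem:gates}. A final detail is that solutions of \PC and \LINVI map back to \EOTL solutions in polynomial time, which we take from their \PPAD-completeness.
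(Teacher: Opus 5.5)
Your proposal is correct and follows essentially the same route as the paper. It uses the same encoding $b$, the same \LINVI-or-\PC dichotomy, and the same gate-by-gate chaining of \Cref{lem:nodevars}, \Cref{lem:gates}, \Cref{lem:smallsignal} and \Cref{lem:largesignal}; your uniform treatment of the \PURIFY ``one pure output'' condition is just the paper's $b(u)=\bot$ threshold argument applied to every $\bar x_u$. One small bookkeeping correction: $n\delta^2\ge 2$ comes from the choice $n=\lceil 400m^3/(\rho^2\delta^2)\rceil$, not from $\delta\cH=1/(100m\kappa)$.
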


\begin{proof}
Define $b:V\to\{0,1,\bot\}$ as in \Cref{eq:encoding}. To prove the theorem it is sufficient to show that either $b$ is a valid assignment to \PC, or there exists $i\in[n],q\in V$ such that $y_i^q\in[0,1]^m$ is a solution to \LINVI.
Then, we can easily normalize all the coefficients to $[-1,1]$ by decreasing $\varepsilon$ by a factor polynomial in $d$.

Now, notice that if there exists $i\in[n],q\in V$ such that $y_i^q\in[0,1]^m$ is a solution to \LINVI, then the proof is complete. So now we can assume that there is no such tuple $(i,q)$ and show that $b$ is a valid assignment to \PC.
We start from the \NOR gates.
Consider any $w\in V$ such that $(u,v,w)\in\cG_\NOR$. We need to guarantee that if $b(u)=b(v)=0$, then $b(w)=1$, and that if $b(u)=1$ or $b(v)=1$ then $b(w)=0$.
        \begin{enumerate}
            \item $b(u)=b(v)=0$. Then $\|x^v-y^v\|^2,\|x^u-y^u\|^2\le 3m-10\delta\kappa$. Therefore, from \Cref{lem:nodevars}, we get 
            $\bar x_u\le \varepsilon/(\delta\kappa)$, $\bar x_v\le \varepsilon/(\delta\kappa)$. Now we can apply \Cref{lem:gates} to show that since $\bar x_u+\bar x_v\le 2\varepsilon/(\delta\kappa)\le \tfrac12-10\delta\cH$ it holds  $s_w(x)=x_{u,v,w}\ge 1-\varepsilon/(\delta^2\cH)$. Now, since no $y^q_i$ is a solution to \LINVI, we can invoke \Cref{lem:largesignal} to conclude that $\|x^w-y^w\|^2\ge 3m+10\delta\kappa$. Hence, $b(w)=1$ by \Cref{eq:encoding}.
            
            \item  $b(u)=1$ or $b(v)=1$. We assume that $b(u)=1$ since the other case is symmetric. Then $\|x^u-y^u\|^2\ge 3m+10\delta\kappa$. Thus, by \Cref{lem:nodevars}, we have $\bar x_u\ge 1-\varepsilon/(\delta\kappa)$ and $\bar x_u+\bar x_v\ge 1/2+10\delta\cH$. Moreover, from \Cref{lem:gates}, we get $s_w(x)=x_{u,v,w}\le\varepsilon/(\delta^2\cH)$ and from \Cref{lem:smallsignal} that $\|x^w-y^w\|^2\le 3m-10\delta\kappa$. Hence, $b(w)=0$ by \Cref{eq:encoding}.
        \end{enumerate}
        
        Similarly, consider any $v,w\in V$ such that $(u,v,w)\in\cG_\PURIFY$. We need to guarantee that if $b(u)\in\{0,1\}$, then $b(u)=b(v)=b(w)$, and that if $b(u)=\bot$ then $b(v)\in\{0,1\}$ or $b(w)\in\{0,1\}$.
        \begin{enumerate}
            \item  $b(u)=0$. Then, $\|x^u-y^u\|^2\le 3m-10\delta\kappa$. Thus, by \Cref{lem:nodevars} we have $\bar x_u\le \varepsilon/(\delta\kappa)$ and by \Cref{lem:gates} that $s_v(x)=x'_{u,v,w}\le \varepsilon/(\delta^2\cH)$ and $s_w(x)=x''_{u,v,w}\le \varepsilon/(\delta^2\cH)$. This, thanks to \Cref{lem:smallsignal}, implies that $\|x^v-y^v\|^2,\|x^w-y^w\|^2\le 3m-10\delta\kappa$ and thus by \Cref{eq:encoding} we get $b(v)=b(w)=0$, as desired.
            \item $b(u)=1$. Similarly, we have $\|x^u-y^u\|^2\ge 3m+10\delta\kappa$. By \Cref{lem:nodevars}, we then have that $\bar x_u\ge 1-\varepsilon/(\delta\kappa)$ and by \Cref{lem:gates} that $s_v(x)=x'_{u,v,w}\ge 1-\varepsilon/(\delta^2\cH)$ and $s_w(x)=x''_{u,v,w}\ge 1-\varepsilon/(\delta^2\cH)$. Now, since no $y^q_i$ is a solution to \LINVI, we can invoke  \Cref{lem:largesignal}, which implies that both $\|x^v-y^v\|^2$ and $\|x^w-y^w\|^2$ are greater or equal to $3m+10\delta\kappa$. Thus by \Cref{eq:encoding} we get $b(v)=b(w)=1$, as desired.
            
            \item $b(u)=\bot$. Then, $\|x^u-y^u\|^2\in[3m-10\delta\kappa, 3m+10\delta\kappa]$. Thus, we have no guarantees on $\bar x_u$, and we only know that $\bar x_u\in[0,1]$. We now consider two cases. If $\bar x_u\le \tfrac34-10\delta\cH$, then $s_w(x)=x_{u,v,w}''\le\varepsilon/(\delta^2\cH)$ by \Cref{lem:gates}, and \Cref{lem:smallsignal} guarantees that $\|x^w-y^w\|^2\le3m-10\delta\kappa$. Thus, $b(w)=0$ by \Cref{eq:encoding}.
            If $\bar x_u\ge \tfrac14+10\delta\cH$, then $s_v(x)=x_{u,v,w}'\ge1-\varepsilon/(\delta^2\cH)$ by \Cref{lem:gates}. Since no $y^q_i$ is a solution to \LINVI, \Cref{lem:largesignal} guarantees that $\|x^v-y^v\|^2\ge3m+10\delta\kappa$ and thus $b(v)=1$ by \Cref{eq:encoding}. This proves that for all $\bar x_u$, either we have $b(v)=1$ or $b(w)=0$.
        \end{enumerate}
        This concludes the proof.
\end{proof}

We would like to remark further on the following subtlety in our construction. The choice of $n$ is constrained in two opposite ways. The large-signal dichotomy of \Cref{lem:largesignal} requires $n$ to be large. On the other hand, the gate analysis of \Cref{lem:gates} requires $\delta |H_q|\ll1$. Since a trivial estimate gives
$|H_q|=O(n)$, these two requirements would be incompatible. The refined bound on $H_q$ of \Cref{sec:smallH} is what makes the parameter choice possible.

\section{From \threeGDA to \twoGDA} \label{sec:3to2}

In this section, we provide a reduction from \threeGDA to \twoGDA. Our primary tool is \Cref{lm:x3}, stated below, which allows us to replace terms of the form $\theta^3$ by a quadratic polynomial $P$ that has derivative $\partial_\theta P$ close to $3\theta^2$. As alluded to in the introduction, this can then be used as a primitive to show that we can replace any degree-$3$ monomial by a corresponding degree-$2$ polynomial.

We note that, using ideas from prior work \citep{FearnleyGHS25-quadratic-KKT}, one can prove a different version of \Cref{lm:x3}, where the error is much smaller, namely $O(2^{-m})$, but at the cost of requiring $\varepsilon^{-1} = 2^{\poly(m)}$. This version can then be used to provide a somewhat simpler proof of the main result of \citet{FearnleyGHS25-quadratic-KKT}, starting from the \CLS-hardness of minimizing degree-$5$ polynomials proved by \citet{babichenko2021settling}. In the context of min-max optimization however, we are able to show hardness for the $\varepsilon^{-1} = \poly(d)$ regime and this allows us to work with the lemma as stated below, which has inverse polynomial error. This has the important advantage that the lemma is significantly easier to establish for this error regime.

\begin{lemma}[Cubic Gadget]\label{lm:x3}
Given any $m \in \mathbb{N}$, we can construct in $\poly(m)$-time a degree-$2$ polynomial $P(\theta,z_1, \dots, z_{m})$ in $m+1$ variables with all coefficients bounded by $1$ in absolute value, such that for any value $\theta \in [0,1]$ and any $z^*\in \KKT_\varepsilon(-P(\theta,\cdot))$, where $\varepsilon\le O(m^{-4})$,
we have
\[
\left| \frac{\partial  }{\partial\theta}P(\theta,z^*) - 3\theta^2 \right| \leq O\Big(\frac1m\Big).
\]
\end{lemma}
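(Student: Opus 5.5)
The plan is to build $P$ so that its $\theta$-derivative is a weighted count of auxiliary ``threshold'' variables, i.e.\ a staircase approximation of $3\theta^2$. Each $z_k$, being a max variable in $\KKT_\varepsilon(-P(\theta,\cdot))$, should behave like the indicator $\mathbb{I}(\theta\ge k/m)$. Concretely, for $k\in[m]$ set
\[
w_k := 3\Big(\big(\tfrac{k}{m}\big)^2-\big(\tfrac{k-1}{m}\big)^2\Big)=\frac{3(2k-1)}{m^2}\in\Big(0,\tfrac{6}{m}\Big],
\qquad
P(\theta,z):=\sum_{k\in[m]} w_k\, z_k\Big(\theta-\frac{k}{m}\Big).
\]
This is a degree-$2$ (indeed multilinear) polynomial. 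Its coefficients are $w_k\le 6/m$ for the monomial $\theta z_k$ and $w_k k/m\le 6/m$ for the monomial $z_k$, so all coefficients are at most $1$ for $m\ge 6$; small $m$ is absorbed into the $O(\cdot)$. The construction clearly takes $\poly(m)$ time. We have
\[
\partial_{z_k}P = w_k(\theta-k/m)
\qquad\text{and}\qquad
\partial_\theta P(\theta,z)=\sum_k w_k z_k .
\]
If each $z_k$ were exactly $\mathbb{I}(\theta\ge k/m)$, then $\partial_\theta P$ would equal $\sum_{k\le \lfloor m\theta\rfloor}w_k=3(\lfloor m\theta\rfloor/m)^2$ by telescoping. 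This differs from $3\theta^2$ by at most $3\big(\theta^2-(\theta-1/m)^2\big)\le 6/m$.

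The remaining work is to control the deviation of $z^*$ from the indicators. This is the analogue of \Cref{lem:deriv_pm}, applied to the maximizer, i.e.\ to $-P$. Fix $k$ with $|\theta-k/m|\ge 1/(2m)$.
\begin{itemize}
\item If $\theta>k/m$, then $\partial_{z_k}P\ge w_k|\theta-k/m|$, and the KKT condition with $z_k'=1$ gives $1-z_k^*\le \varepsilon/(w_k|\theta-k/m|)$.
\item If $\theta<k/m$, the condition with $z_k'=0$ symmetrically gives $z_k^*\le \varepsilon/(w_k|\theta-k/m|)$.
\end{itemize}
In either case the contribution of index $k$ to the error in $\partial_\theta P$ is $w_k|z_k^*-\mathbb{I}(\theta\ge k/m)|\le \varepsilon/|\theta-k/m|$. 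The factor $w_k$ cancels, which is the one point worth checking carefully, since $w_k$ can be as small as $\Theta(1/m^2)$.

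At most one index has $|\theta-k/m|<1/(2m)$ (two in boundary cases). For such an index we use the trivial bound $w_k|z_k^*-\mathbb{I}|\le w_k\le 6/m$. For all other indices, the distances $|\theta-k/m|$ form, on each side of $\theta$, a sequence spaced by $1/m$ starting at $\ge 1/(2m)$, exactly as in \Cref{lem:harmonic}. Hence
\[
\sum_{k:\,|\theta-k/m|\ge 1/(2m)}\frac{\varepsilon}{|\theta-k/m|}\le 2\varepsilon\cdot 2m\sum_{j=1}^{m}\frac1{j}=O(\varepsilon\, m\log m).
\]
Summing the three error sources (staircase $\le 6/m$, near-threshold indices $\le 12/m$, far indices $O(\varepsilon m\log m)$) gives
\[
\big|\partial_\theta P(\theta,z^*)-3\theta^2\big|\le O(1/m)+O(\varepsilon m\log m)=O(1/m)
\]
whenever $\varepsilon\le O(m^{-4})$; in fact $\varepsilon\le m^{-2}/\log m$ already suffices.

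I expect no serious obstacle. The only delicate step is the harmonic-sum bound on the far indices, where one must use the cancellation of $w_k$ rather than bounding each $z_k$ error crudely by $\varepsilon/(w_k\cdot\text{gap})$, which would cost an extra factor $m^2$. Even that crude route would still fit within $\varepsilon\le O(m^{-4})$, giving a safe fallback.
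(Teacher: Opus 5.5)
Your proof is correct and is essentially the paper's argument. It uses the same staircase gadget $P=\sum_k w_k(\theta-k/m)z_k$, whose KKT conditions push each $z_k$ toward $\mathbb{I}(\theta\ge k/m)$; it differs only in that your weights $w_k=3(2k-1)/m^2$ telescope exactly to $3(\lfloor m\theta\rfloor/m)^2$ (the paper's $a_k=6k/m^2$ give $3k(k+1)/m^2$), and your harmonic-sum estimate, which exploits the cancellation of $w_k$, is sharper than the paper's crude $O(\varepsilon m^3)$ bound. One harmless slip: the intermediate bound $3(\theta^2-(\theta-1/m)^2)$ on the staircase error fails for $\theta<1/m$ (there $\lfloor m\theta\rfloor=0$ but $(\theta-1/m)^2>0$), but in that range the error is just $3\theta^2<3/m^2$, so your $6/m$ bound still holds.
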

\begin{proof}

Our idea is to discretize the derivative $3\theta^2$ as follows. For each $i \in \{0,\ldots,m\}$, let $t_i=\frac{i}{m}$ and $a_i=\frac{6i}{m^2}$. Then, we define the function  
\[ \G(\theta)= \sum_{i\in [m]} a_{i} \mathbb{I}[\theta \ge t_i],\]
which approximates the derivative $3\theta^2$.
Indeed, given a variable $\theta \in [t_k,t_{k+1})$, i.e., $\theta \in [\frac{k}{m},\frac{k+1}{m})$, we get
\begin{align} \label{eq:G}
\G(\theta)= \sum_{i =1,\ldots,k} a_i= \frac{3k(k+1)}{m^2}=3 \left(\frac{k}{m}\right)^2 + \frac{3k}{m^2}. 
\end{align}

Now, we construct the following quadratic polynomial, whose derivative with respect to $\theta$ will approximate the function $\G$:
\[P(\theta,z)= \sum_{i \in [m]} a_i (\theta-t_i) z_i. \]

The idea is that $z_i\simeq 0$ if $\theta \ll t_i$ and  $z_i\simeq 1$ if $\theta \gg t_i$. 
Let $z^* \in \KKT_\varepsilon(-P(\theta,\cdot))$.
Consider a $\theta$, and $t_i$ such that $ \theta\in [t_i-\frac{1}{2m}, t_i+\frac{1}{2m}]$.
For each variable $z_j$, with $j<i$, we get that, by the KKT conditions of $z_j$ with respect to $1$: %
\[ a_j (\theta-t_j) (1-z^\star_j)\le \varepsilon \]
and hence, since $a_j(\theta-t_j)\ge a_j/(2m)$:%
\[z_j^\star \ge 1-  \frac{2\varepsilon m}{a_j} \ge 1-  2\varepsilon m^3. \]
Similarly, if $j>i$, then
\[z_j^\star \le  \frac{2\varepsilon m}{a_j} \le  2\varepsilon m^3. \]

Overall, we can conclude that
\begin{align*}
\left|\partial_{\theta}P(\theta,z^*)- 3\theta^2\right| &= \left|\sum_{j \in [m]} a_j z^\star_j - 3\theta^2\right| \\
&= \left| \sum_{j < i} a_j + \sum_{j < i} a_j\left(z_j^\star - 1 \right) + a_{i} z_i^\star + \sum_{j > i} a_j z_j^\star - 3\theta^2\right| \\
& \le \left| \sum_{j < i} a_j -3\theta^2\right| + O\left(m^3 \varepsilon + \frac{1}{m}\right)\\
& \le\left| 3 \left(\frac{i-1}{m}\right)^2 + \frac{3(i-1)}{m^2} -3 \left(\frac{i}{m}\right)^2 \right| + 3\left|\theta^2-\left(\frac{i}{m}\right)^2\right|+O\left(m^3 \varepsilon + \frac{1}{m}\right)\\
& \le O\left(m^3 \varepsilon + \frac{1}{m}\right),
\end{align*}
where in the second-to-last inequality we use \Cref{eq:G}.
Taking $\varepsilon= O(m^{-4})$, we get the statement.

\end{proof}

\Cref{lm:x3} allows us to replace cubic terms of the form $x_1^3$ by a degree-2 polynomial. More generally, the lemma allows us to replace $\theta^3$ by a degree-2 polynomial, as long as $\theta$ is some expression that always lies in $[0,1]$. For example, letting $\theta = (x_1+x_2+x_3)/3$, \Cref{lm:x3} allows us to replace $(x_1+x_2+x_3)^3$ by a degree-2 polynomial.
Hence, we are left to show that any degree-3 monomial can be expressed using only such cubic terms and that negative coefficients can also be handled.

The first issue is easily addressed by using the following polarization trick, which allows us to express terms of the form $abc$ solely as a linear combination of terms of the form $(\alpha a+\beta b+\gamma c)^3$.

\begin{lemma}[$3$-linear polarization identity \citep{thomas2014polarization}]\label{lem:polarization}
    Take any symmetric three-linear map $T:[0,1]^3\to\Reals$ and define the diagonalized map $T'(x)=T(x,x,x):[0,1]\to \Reals$. Then
    \[
    T(a,b,c)=\frac{1}{6}\left[27T'\left(\frac{a+b+c}{3}\right)-8T'\left(\frac{a+b}{2}\right)-8T'\left(\frac{b+c}{2}\right)-8T'\left(\frac{a+c}{2}\right)+T'(a)+T'(b)+T'(c)\right].
    \]
\end{lemma}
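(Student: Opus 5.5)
The plan is to verify the identity by expanding each diagonal term $T'$ of a linear combination through multilinearity and symmetry, and then matching coefficients. Since $T$ is symmetric and three-linear, for any scalars $\alpha,\beta,\gamma$ we have
\[
T'(\alpha a+\beta b+\gamma c)=\sum_{\text{multisets}}\binom{3}{k_a,k_b,k_c}\alpha^{k_a}\beta^{k_b}\gamma^{k_c}\,T(a^{k_a},b^{k_b},c^{k_c}),
\]
where $T(a^{k_a},b^{k_b},c^{k_c})$ means $T$ evaluated with $a$ repeated $k_a$ times, and so on. The ten possible monomials are then $T(a,a,a)$ (three of this type), $T(a,a,b)$ (six of this type), and $T(a,b,c)$. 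A technical remark first: $T$ is only defined on $[0,1]^3$. All the arguments appearing in the identity (averages of elements of $[0,1]$) lie in $[0,1]$, so the expressions make sense. For the expansion I will regard $T$ as the restriction of a symmetric trilinear form on $\Reals^3$, which is the intended meaning here since $T$ will be a monomial.

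Next I compute the coefficient of each type in the bracket.

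\emph{Type $T(a,b,c)$.} The coefficient comes only from the $27T'((a+b+c)/3)$ term. It equals $27\cdot 6\cdot (1/3)^3=6$, which gives $1$ after dividing by $6$.

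\emph{Type $T(a,a,b)$.} This appears in $T'((a+b+c)/3)$ with coefficient $3/27$, contributing $27\cdot 3/27=3$. It appears in $T'((a+b)/2)$ with coefficient $3/8$, contributing $-8\cdot 3/8=-3$. It does not appear in the other terms, so the total is $0$. By symmetry the same holds for all six terms of this type.

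\emph{Type $T(a,a,a)$.} The contributions are $27\cdot 1/27=1$ from the first term, $-8\cdot 1/8=-1$ from each of the two pair averages containing $a$, and $+1$ from $T'(a)$. The total is $1-2+1=0$.

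Hence the bracket equals $6T(a,b,c)$, and the identity follows. There is no real obstacle here. The only care needed is the bookkeeping of the multinomial coefficients and the remark about the domain and extension of $T$.
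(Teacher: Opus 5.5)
Your proof is correct. The multinomial bookkeeping checks out: the $T(a,b,c)$ coefficient is $6$, and the coefficients of the $T(a,a,b)$-type and $T(a,a,a)$-type terms cancel to $0$. So the bracket is $6T(a,b,c)$.

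The paper gives no argument of its own; it cites Thomas's general polarization formula. For a symmetric $n$-linear map that formula reads
\[
T(x_1,\dots,x_n)=\frac{1}{n!}\sum_{\emptyset\neq S\subseteq[n]}(-1)^{n-|S|}\,T'\Bigl(\sum_{i\in S}x_i\Bigr),
\]
and it can be proved by an inclusion--exclusion (finite-difference) argument over subsets. The stated lemma is its case $n=3$ after using cubic homogeneity, $T'(a+b+c)=27\,T'\bigl(\tfrac{a+b+c}{3}\bigr)$ and $T'(a+b)=8\,T'\bigl(\tfrac{a+b}{2}\bigr)$, to rescale every argument back into $[0,1]$. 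That rescaling is where the weights $27$ and $8$ come from.

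Your direct coefficient matching is self-contained and makes those weights transparent for $n=3$. The citation route buys the general-$n$ identity and explains the inclusion--exclusion sign pattern.

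Your domain remark is harmless. The expansion only uses linearity along convex combinations with weights $\tfrac13$ and $\tfrac12$, which stay inside $[0,1]$, so no extension beyond $[0,1]$ is needed. In the paper's application the map is just $(u_1,u_2,u_3)\mapsto c\,u_1u_2u_3$, so the lemma is really a scalar identity.
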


Now we have to account for the negative and positive coefficients in these terms, since our cubic-gadget only works with positive coefficients. By using simple algebraic identities, we can reduce \threeGDA to the following specialized form. We defer the proof of \Cref{lem:degree3props} to Appendix \ref{app:3to2}.

\begin{restatable}{lemma}{lemmaspecial}\label{lem:degree3props}
\threeGDA is \PPAD-hard even when $f(x,y)=q(x,y)+\sum_{k=1}^T\alpha_k L_k(x,y)^3$, where $\alpha_k\ge 0$, $T=\poly(d)$, $L_k$ is affine and depends on at most $3$ variables, $L_k(x,y)\in[0,1]$, $\varepsilon^{-1}=\poly(d)$ and $q$ is a degree-$2$ polynomial. Moreover, all $\alpha_k$ and all the coefficients of $q$ are in $[-1,1]$.
\end{restatable}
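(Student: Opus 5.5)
We start from the hard instance of \Cref{th:degree3minmax} and rewrite its objective $f$ without changing it as a function. The only cubic monomials in $f$ come from the signal term $\delta^2\sum_q s_q(x)H_q(x,y)$. There $s_q(x)$ is a single gate variable and $H_q$ is quadratic, so each cubic monomial has the form $\beta\,abc$ with $a,b,c$ variables (not necessarily distinct) and $|\beta|\le 1$ after normalization. The quadratic and linear parts of $f$ are moved directly into $q$. Since the function is unchanged, the solution set for the same $\varepsilon$ is unchanged, so hardness transfers once every coefficient lies in $[-1,1]$.

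Next we rewrite each monomial $\beta\,abc$. If $a,b,c$ are distinct, we apply \Cref{lem:polarization} to the symmetric trilinear map $T(a,b,c)=abc$, for which $T'(t)=t^3$. This writes $abc$ as a combination of cubes of the affine forms $\frac{a+b+c}{3}$, $\frac{a+b}{2}$, $\frac{b+c}{2}$, $\frac{a+c}{2}$, $a$, $b$, $c$. Each of these depends on at most three variables and takes values in $[0,1]$ on the hypercube. Repeated factors ($a^2b$ or $a^3$) are handled the same way, since the identity holds with $a=b$; the resulting $L_k$ then depend on fewer variables. After this step $f=q+\sum_k \gamma_k L_k^3$ with $\gamma_k$ of both signs and $|\gamma_k|=O(1)$.

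The remaining issue is the sign of the coefficients, which we handle with an algebraic identity. When $\gamma_k<0$, write $L_k^3=(1-(1-L_k))^3=1-3(1-L_k)+3(1-L_k)^2-(1-L_k)^3$. Then
\[
-|\gamma_k|L_k^3=|\gamma_k|(1-L_k)^3+(\text{degree-}2\text{ polynomial}),
\]
and $1-L_k$ is again affine, depends on the same at most three variables, and lies in $[0,1]$. The degree-$2$ remainder is absorbed into $q$. This gives every cubic coefficient the form $\alpha_k\ge 0$.

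Finally we bound the sizes. The number of terms is $T\le 7\cdot(\#\text{cubic monomials})=\poly(d)$. Coefficients of $q$ may accumulate to $\poly(d)$ in magnitude, because a single variable pair can receive contributions from many rewritten terms. We therefore divide $f$ by a $\poly(d)$ factor $C$ and take the new precision to be $\varepsilon/C$, which keeps $\varepsilon^{-1}=\poly(d)$. The main point to verify is exactly this bookkeeping, namely that the number of contributions to each coefficient is polynomially bounded; since the instance has polynomial size, this holds.
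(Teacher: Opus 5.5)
Your proposal is correct and is essentially the paper's proof. Both apply the polarization identity of \Cref{lem:polarization} to each cubic monomial, turn every negative-coefficient cube into a positive one through $-a^3=(1-a)^3-1+3a-3a^2$ with the quadratic remainder absorbed into $q$, and finish by normalizing with a polynomial factor.

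One factual slip, which does not affect the argument since you treat arbitrary monomials $\beta\,abc$ with repeated variables: the signal term is not the only source of cubic monomials. The node-variable term $\bar x_u(3m-\|x^u-y^u\|^2)$ also contributes cubic monomials such as $\bar x_u x^u_{i,j}y^u_{i,j}$.
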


Finally, we can prove that   \twoGDA is \PPAD-hard by combining the cubic terms approximation in \Cref{lm:x3} with the symmetric polynomial in \Cref{lem:degree3props}.

\begin{theorem}\label{th:degree-2-non-multilinear}
    \twoGDA is \PPAD-hard even for $\varepsilon=\frac{1}{\poly(d)}$.
\end{theorem}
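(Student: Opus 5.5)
We reduce from the specialized form of \threeGDA given by \Cref{lem:degree3props}. There $f(x,y)=q(x,y)+\sum_{k=1}^T\alpha_k L_k(x,y)^3$ with $\alpha_k\ge0$, each $L_k$ affine in at most three variables with values in $[0,1]$, and $\varepsilon^{-1}=\poly(d)$. For each $k$ we take a fresh copy of the cubic gadget of \Cref{lm:x3} with parameter $m$ (to be fixed as a polynomial in $d$), with auxiliary variables $z^{(k)}_1,\dots,z^{(k)}_m$, and substitute $\theta=L_k(x,y)$. Since $L_k$ is affine and $P$ is quadratic, $P_k(x,y,z^{(k)}):=P(L_k(x,y),z^{(k)})$ is a degree-$2$ polynomial.

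The auxiliary variables must be assigned to the player for whom the gadget's KKT condition in \Cref{lm:x3} is the right one. The lemma assumes $z^*\in\KKT_{\varepsilon'}(-P(\theta,\cdot))$, i.e.\ $z$ maximizes $P$. We therefore set
\[
g(x,y,z^{\max},z^{\min})=q(x,y)+\sum_{k:\,L_k\text{ from max-player term}}\alpha_k P_k+\dots
\]
and handle both players as follows. If we add $\alpha_kP_k$ to the objective and give $z^{(k)}$ to the $y$-player (maximizer), the $y$-player's KKT conditions in $z^{(k)}$ are exactly $z^{(k)}\in\KKT_{\varepsilon'/\alpha_k}(-P(L_k,\cdot))$. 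This is true regardless of which variables $L_k$ depends on, since $z^{(k)}$ appears only in $P_k$. So all $z$-variables go to the max player. We have to watch the scaling by $\alpha_k$. The KKT slack for $z^{(k)}$ is $\varepsilon'$ with respect to $\alpha_kP$. When $\alpha_k$ is tiny the lemma's guarantee degrades, but the resulting error in $\partial_x$ is multiplied back by $\alpha_k$. To avoid trouble, we let the lemma's proof (which only uses $a_j$-weighted KKT conditions) absorb this, or simply apply the lemma to $P$ with the threshold $\varepsilon'\le\alpha_k\, O(m^{-4})$, dropping terms with $\alpha_k$ below $1/\poly$ since they perturb gradients negligibly. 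To keep coefficients in $[-1,1]$, we rescale the whole objective by a polynomial factor at the end and shrink $\varepsilon$ accordingly.

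Correctness goes as follows. Let $(x,y,z)$ be an $\varepsilon'$-solution of the new \twoGDA instance. By the chain rule, $\partial_{x_i}g=\partial_{x_i}q+\sum_k\alpha_k\,\partial_\theta P(L_k,z^{(k)})\,\partial_{x_i}L_k$, and the same holds for $y_i$. By \Cref{lm:x3}, each $\partial_\theta P(L_k,z^{(k)})$ is within $O(1/m)$ of $3L_k^2$, which is exactly the factor appearing in $\partial_{x_i}(\alpha_kL_k^3)$. Since each variable appears in at most $T=\poly(d)$ terms and the $L_k$ have bounded coefficients, $\|\nabla_{x,y}g-\nabla_{x,y}f\|_\infty\le O(T/m)$. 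The KKT conditions on the box then give that $(x,y)$ is an $(\varepsilon'+O(T/m))$-solution of the original $f$. Choosing $m=\poly(d)$ with $T/m\ll\varepsilon$ and $\varepsilon'\le\min(\varepsilon/2,\,O(m^{-4}))$ yields an $\varepsilon$-solution of the \PPAD-hard degree-$3$ instance, with $\varepsilon'^{-1}=\poly(d)$.

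The main obstacle is the bookkeeping around the gadget's KKT hypothesis. The lemma requires the KKT conditions of $z$ to hold relative to $P$ itself, whereas the game only provides them relative to $\alpha_kP_k$ with an additive slack, together with the coefficient normalization. I would handle this by invoking the lemma with the effective slack $\varepsilon'/\alpha_k$ and treating small $\alpha_k$ by the truncation argument above.
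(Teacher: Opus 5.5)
Your proposal is correct and follows the paper's proof. You replace each $\alpha_k L_k^3$ by $\alpha_k P(L_k(x,y),\xi_k)$ with fresh max variables from \Cref{lm:x3}, use the chain rule to bound the gradient discrepancy on the original variables by $O(T/m)$, and then transfer the KKT inequalities. Your explicit treatment of the effective slack $\varepsilon'/\alpha_k$ for the gadget variables is a detail the paper passes over silently, and you handle it correctly, either by truncating tiny $\alpha_k$ or by noting that the gadget error is multiplied back by $\alpha_k$.
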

\begin{proof}
    Thanks to \Cref{lem:degree3props}, \threeGDA is \PPAD-hard even when restricted to degree-$3$ polynomials $f:[0,1]^d\times[0,1]^d\to[-1,1]$ which satisfy:
    \[
    f(x,y)=q(x,y)+\sum_{k=1}^T\alpha_k L_k(x,y)^3,
    \]
    and considering an approximation error $\varepsilon'=\frac{1}{\poly(d)}$.

    For each term $k\in[T]$, we use \Cref{lm:x3} to define the polynomial $P(L_k(x,y),\xi_k^1,\ldots,\xi_k^{m})$ which includes the extra max variables $\xi_k^1,\ldots,\xi_k^{m}\in[0,1]$, where $m$ will be chosen in the following.
    Then, overall, we get a polynomial 
    \[
    f'(x,y)=q(x,y)+\sum_{k=1}^T\alpha_k P(L_k(x,y),\xi_k^1,\ldots,\xi_k^{m}).
    \]

    This is a degree-$2$ polynomial and thus an instance of \twoGDA. 

    We are left to show that it is possible to recover a solution to $f$ from an approximate solution to $f'$.
    Let $\varepsilon$ be the approximation error of the \twoGDA instance, which will be set further below.
    Consider a solution $(x,y,\xi)$ to $f'$. Notice that all variables $ \xi_k^j$ are approximately satisfying the KKT condition, allowing us to exploit \Cref{lm:x3}.
    
    Let $u$ be any variable of the original instance, so $u\neq \xi_k^j$ for all $k\in[T],j\in[m]$. To keep the notation compact we denote with $a \pm b$ the interval $[a-b,a+b]$.
    Then 
    \begin{align*}
        \partial_{u} f'(x,y)&= \partial_uq(x,y)+\sum_{k=1}^T\alpha_k \partial_{u}P(L_k(x,y),\xi_k^1,\ldots,\xi_k^{m})\\
        &=\partial_uq(x,y)+\sum_{k=1}^T\alpha_k \partial_{L_k(x,y)}P(L_k(x,y),\xi_k^1,\ldots,\xi_k^{m})\partial_uL_k(x,y)\\
        &=\partial_uq(x,y)+\sum_{k=1}^T\alpha_k (3L_k(x,y)^2\pm O\left(m^{-1}\right))\partial_uL_k(x,y)\\
        &=\partial_uq(x,y)+3\sum_{k=1}^T\alpha_k L_k(x,y)^2\partial_uL(x,y)\pm O\left(m^{-1}\right)\sum_{k\in [T]}\alpha_k\partial_uL_k(x,y).
    \end{align*}
    Now observe that
    \begin{align*}
        \partial_uf(x,y)&=\partial_uq(x,y)+\sum_{k=1}^T\alpha_k \partial_u(L_k(x,y)^3)\\
        &=\partial_uq(x,y)+3\sum_{k=1}^T\alpha_k L_k(x,y)^2\partial_u L_k(x,y),
    \end{align*}
    and thus
    \[
    |\partial_u f(x,y)-\partial_u f'(x,y)|\le O\left(m^{-1}\right)\sum_{k\in[T]}\alpha_k\partial_u L_k(x,y).
    \]
    Hence, we get that for each minimization variable $x_i$ and each $x_i'\in[0,1]$
    \[
    -\partial_{x_i}f(x)(x_i'-x_i)\le -\partial_{x_i}f'(x)(x_i'-x_i) + O\left(m^{-1}\right)\sum_{k\in[T]}\alpha_k\partial_u L_k(x,y) \le  \varepsilon + O\left(Tm^{-1}\right) \le \varepsilon',
    \]
    for a suitable $m=\poly(d)$ and $\varepsilon^{-1}=\poly(d)$. 
    A similar inequality holds for maximization variables $y_i$.
    The proof is concluded by normalizing all the coefficients to lie in $[-1,1]$. This will change the approximation error $\varepsilon$ by a factor polynomial in $d$.
\end{proof}

\section{Multi-linearization \& Interaction Degree Reduction}\label{sec:degree3interaction}

In this section, we reduce any degree-$2$ polynomial $f$ to a degree-$2$ multilinear polynomial in which each variable appears in at most three monomials. We extend the idea of the copy-gadget of \citet{hollender2025complexity}, by replacing each occurrence of a variable with a copy of it. In particular we prove that each copy $x_k^{(j)}$ of variable $x_k$, copies it's precedent copy $x_k^{(j-1)}$ (similarly for maximizing variables). The tricky part is to show recursively that the first copy $x_k^{(1)}$ collects all the gradients experienced by the other copies.

\begin{theorem}\label{th:degree2}
    \twoGDA is \PPAD-hard even for multilinear polynomials $f:[0,1]^d\times[0,1]^d\to[-1,1]$ in which each variable appears in at most three monomials and $\varepsilon^{-1}=\poly(d)$.
\end{theorem}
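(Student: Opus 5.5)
We start from a hard instance $f$ of \twoGDA given by \Cref{th:degree-2-non-multilinear} and build a multilinear, sparse instance $f'$ using a chain of copy gadgets, in the spirit of the copy gadget of \citet{hollender2025complexity}. Consider a min variable $x_k$, and suppose it occurs in $D$ monomials of $f$; for $x_k^2$ we count two occurrences. We introduce min copies $x_k^{(1)},\ldots,x_k^{(D)}$, with $x_k^{(1)}$ playing the role of the original, and max variables $\tilde y_k^{(2)},\ldots,\tilde y_k^{(D)}$. Each occurrence of $x_k$ in $f$ is then replaced by a distinct copy, so that $x_k^2$ becomes $x_k^{(a)}x_k^{(b)}$ and $x_kx_\ell$ becomes $x_k^{(a)}x_\ell^{(b)}$. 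Every copy therefore appears in exactly one original monomial. The chain is linked by the terms
\[
\lambda\big(x_k^{(j-1)}\tilde y_k^{(j)} - x_k^{(j)}\tilde y_k^{(j)}\big) + \tfrac{\lambda}{2}\big(x_k^{(j)} - x_k^{(j-1)}\big)\quad\text{(or a suitable variant)},
\]
chosen so that two things hold at any $\varepsilon$-solution. First, $\tilde y_k^{(j)}$ is pushed toward $1$ when $x_k^{(j-1)}>x_k^{(j)}$ and toward $0$ otherwise. Second, $x_k^{(j)}$ responds by tracking $x_k^{(j-1)}$, because $\lambda$ is much larger than the gradient contributed by the original monomial.

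Sparsity needs care. In the chain, $x_k^{(j)}$ appears in its original monomial, in the link to $j-1$, and in the link to $j+1$, which gives three bilinear monomials. The linear terms must not count, so we will read ``monomial'' as non-constant degree-$\ge 2$ terms; otherwise we merge the linear offsets by rescaling. Max variables receive the symmetric treatment, with min auxiliaries.

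The analysis has three parts. The first is an analogue of \Cref{lem:deriv_pm}: the KKT conditions for $\tilde y_k^{(j)}$ and $x_k^{(j)}$ force $|x_k^{(j)}-x_k^{(j-1)}|\le \eta$ for some $\eta$ that is polynomially small in $\varepsilon/\lambda$ and the Lipschitz constant. The second, which we expect to be the main obstacle, is gradient backpropagation. The aim is an induction from $j=D$ down to $j=2$ showing that the effective ``tension'' $\lambda(\tilde y_k^{(j)}-\tfrac12)$ on link $j$ equals, up to small error, the sum of the partial derivatives of the original monomials attached to copies $j,\ldots,D$. The KKT condition of $x_k^{(j)}$ relates the tension on link $j$ to its own monomial derivative plus the tension on link $j+1$, and this holds provided $x_k^{(j)}$ is interior, or is pinned at the same boundary as its predecessor. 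Boundary cases are delicate. If $x_k^{(j-1)}$ sits at $0$ or $1$, the copy sits there too, and we only get one-sided inequalities; these one-sided inequalities must be propagated with consistent signs, as is done for a single copy in the reference. We plan to argue by cases on whether $x_k^{(1)}$ is near the boundary, and in the boundary case to show that every copy is within $\eta$ of the same boundary, so that the one-sided inequalities chain correctly. Errors accumulate additively over at most $D=\poly(d)$ links, so choosing $\varepsilon^{-1}$ polynomially larger keeps them under control.

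Finally, the KKT condition of $x_k^{(1)}$ reads as its monomial derivative plus the tension on link $2$, which by the induction is approximately $\partial_{x_k} f$ evaluated at the point $(x^{(1)},y^{(1)})$; here we use that all copies are $\eta$-close and $f$ is $\poly(d)$-smooth. Hence the first copies form a $\poly(d)\cdot(\varepsilon+\eta)$-solution of the original instance. We then normalize the coefficients into $[-1,1]$, which costs only a polynomial factor in $\varepsilon$, and this gives \PPAD-hardness with $\varepsilon^{-1}=\poly(d)$.
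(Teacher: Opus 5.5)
\textbf{There is a genuine gap: your first step fails as stated, and the rest of your plan depends on it.} Your architecture matches the paper's: one copy per occurrence, a chain of copy gadgets with opposite-team auxiliaries, and gradients telescoped back to the first copy. The problem is the unshifted link $\lambda(x_k^{(j-1)}-x_k^{(j)})(\tilde y_k^{(j)}-\tfrac12)$.

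With it, a middle copy is pulled by two links of the same maximal strength $\lambda/2$:
$\partial_{x_k^{(j)}}f'=g_j-\lambda(\tilde y_k^{(j)}-\tfrac12)+\lambda(\tilde y_k^{(j+1)}-\tfrac12)$, where $g_j$ is the derivative of its original monomial.
\begin{itemize}
\item If $\tilde y_k^{(j)}$ and $\tilde y_k^{(j+1)}$ sit at the same endpoint, the two pulls cancel exactly. Then the KKT conditions of $\tilde y_k^{(j)}$ and $x_k^{(j)}$ put no bound on the gap. For example, $x_k^{(j-1)}=0.9$, $x_k^{(j)}=0.1$, $\tilde y_k^{(j)}=\tilde y_k^{(j+1)}=1$, $g_j=0$ satisfies both conditions exactly.
\item Choosing $\lambda$ much larger than the monomial gradients does not help, because the competing force is itself of order $\lambda$.
\item In your chain, closeness holds only globally. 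A saturated tension on one link forces nearly saturated tensions all the way to an end copy ($x_k^{(1)}$ or $x_k^{(D)}$, which has a single link), and only there does a contradiction appear.
\end{itemize}
That global argument is the same tension bookkeeping as your second step. Your second step, in turn, assumes closeness (``every copy is within $\eta$ of the same boundary'') to handle boundary copies. As organized, the argument is circular, and the part you flag as the main obstacle is left unresolved.

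\textbf{The missing idea is to make each link slightly contracting.} The paper uses $\Lambda\big(x_k^{(j+1)}-(1-2\mu)x_k^{(j)}-\mu\big)\big(\tilde y_k^{(j)}-\tfrac12\big)$ with $\mu=2\sqrt{\varepsilon'}$ and $\Lambda=5/\sqrt{\varepsilon'}$. This buys three things.
\begin{itemize}
\item Once $\tilde y_k^{(j)}$ is saturated, the factor $1-2\mu$ makes the pull of link $j$ on $x_k^{(j+1)}$ exceed the largest possible opposing pull of link $j+1$ by $\Lambda\mu/2$. So closeness follows from local KKT conditions (\Cref{lem:closness}).
\item The offset $\mu$ keeps every copy $j\ge 2$ inside $[\mu/2,1-\mu/2]$. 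Each such copy therefore satisfies the two-sided bound $|\partial_{x_k^{(j)}}f'|\le 2\varepsilon'/\mu$. Only $x_k^{(1)}$ can touch the boundary, and its KKT condition is exactly the one being transferred, so no one-sided case analysis is needed.
\item Backpropagation becomes the exact identity $\sum_j(1-2\mu)^{j-1}\partial_{x_k^{(j)}}\varphi=0$, and the geometric weights cost only $O(\mu D^2)$ (\Cref{claim:deriv}).
\end{itemize}

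If you want to keep your unshifted gadget, you must write out the global closeness argument explicitly. Once you have closeness, you can also avoid chaining one-sided inequalities. Sum the KKT inequalities of all copies against a common deviation $x'$ and use the exact identity $\sum_j\partial_{x_k^{(j)}}f'=\sum_j g_j$. The cost is $\sum_j|\partial_{x_k^{(j)}}f'|\,|x_k^{(j)}-x_k^{(1)}|=O(\lambda D^2\eta)$, which is small provided $\lambda\eta$ is.
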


The remainder of the section is devoted to the proof of \Cref{th:degree2}. Consider a quadratic polynomial $f$ and recall that \twoGDA is \PPAD-Hard even for $\varepsilon=\frac{1}{\poly(d)}$. Moreover, recall that all the coefficients of $f$ are bounded by $1$ in absolute value.
Without loss of generality, we assume that $f$ has no degree-$0$ term (which would not affect the fixed points).  

\paragraph{Construction.} 
Our main idea is to replace each variable $u$ (which can be either $x_i$ or $y_i$) with a distinct variable, one for each occurrence.
Notice that each variable $u$ appears in at most $D:=2d+1$ monomials: $1$ in the linear term $u$, $2$ in the quadratic term $u^2$, and $2d-1$ with the other variables.
For convenience, index the $2d$ variables as $u_{1}, \ldots, u_{2d}$, where $u_k=x_k$ for $k \le d$ and $u_{k+d}=y_k$ for $k>d$.
Then, for each $u$ define the copies $u^{(1)},\ldots,u^{(D+1)}$ and the polynomial $f_{\textsf{copy}}$ as $f$ in which we replace each occurrence of $u$ with one of its copies $u^{(j)}$. The number of copies is $D+1$ since we need two copies for the quadratic terms $u_k$ and one copy for the remaining $D-1$ occurrences of $u$.

To do so, we define the operator $T_k$, which takes as input a polynomial and replaces every occurrence of the $u_k$ variable with one of its copies. Formally, given a function $g:[0,1]^{2d}\to[-1,1]$, which can be written as:\footnote{$u_{-k}$ is the set of variable excluding $u_k$.}
\[
g(u)= P_0(u_{-k})+u_k\sum_{\ell=1}^{D-1} P_\ell(u_{-k})+\alpha_ku_k^2,
\]
then 
\[
T_k(g)(u_{-k},u_k^{(1)},\ldots, u_k^{(D+1)})=P_0(u_{-k})+\sum_{\ell=1}^{D-1} u_k^{(\ell)} P_\ell(u_{-k})+\alpha_ku_k^{(D)}u_k^{(D+1)}.
\]
Finally, we can define $f_{\textsf{copy}}$ as $(T_{2d}\circ\cdots\circ T_2\circ T_1)(f)$. By construction $f_{\textsf{copy}}$ is a multilinear quadratic polynomial with $d'= 2d(D+1)=\poly(d)$ variables.

Now, it remains to link the copies of the same variable.
For each $k\le d$ and $j \in [D]$, we introduce a new variable $\tilde y^{(j)}_k$ which will interact with $x_k^{(j)}$ and $x_k^{(j+1)}$.  For each $k\le d$ and $j \in [D]$, we introduce a new variable $\tilde x^{(j)}_k$  which will interact with $y_k^{(j)}$ and $y_k^{(j+1)}$. Then, given two parameters $\mu=2\sqrt{\varepsilon'}$ and $\Lambda=\frac{5}{\sqrt{\varepsilon'}}$,  for any $k\le d$, we define the gadgets:
\[
 \cD_k(x,y)=\Lambda \sum_{j=1}^{D} \left(x^{(j+1)}_k-(1-2\mu)x^{(j)}_k-\mu\right)\cdot\left(\tilde y_k^{(j)}-\frac12\right),
\]
while for $d<k\le 2d$ we define the gadgets
\[
\cD_k(x,y)=\Lambda \sum_{j=1}^{D} \left(y^{(j+1)}_k-(1-2\mu)y^{(j)}_k-\mu\right)\cdot\left(\frac12-\tilde x_k^{(j)}\right).
\]
Finally, we can define
\[
\varphi(x,y)=\sum_{k=1}^{2d}\cD_k(x,y)
\]
and
\[
f'(x,y)=f_{\textsf{copy}}(x,y)+\varphi(x,y).
\]

Notice that, as promised, each newly introduced variable $u_k^i$ appears in exactly three monomials when introduced by the operator $T_k$, while the number of neighbors is not modified by the successive operators. The interaction graph of $f'$ is illustrated in \Cref{fig:figures}.
Finally, all coefficients are bounded by $\poly(d\Lambda)$. This, after a suitable normalization, guarantees that the constraints on the polynomial are satisfied.

\begin{figure}
\centering
\begin{subfigure}[b]{0.5\textwidth}
    \centering
    \scalebox{0.99}{\tikzset{every picture/.style={line width=0.75pt}} %

\begin{tikzpicture}[x=0.75pt,y=0.75pt,yscale=-1,xscale=1]

\draw    (110,140) -- (200,140) ;
\draw    (110,140) -- (200,180) ;
\draw    (110,140) -- (200,100) ;
\draw  [draw opacity=0][fill={rgb, 255:red, 215; green, 101; blue, 116 }  ,fill opacity=1 ] (100,140) .. controls (100,134.48) and (104.48,130) .. (110,130) .. controls (115.52,130) and (120,134.48) .. (120,140) .. controls (120,145.52) and (115.52,150) .. (110,150) .. controls (104.48,150) and (100,145.52) .. (100,140) -- cycle ;
\draw  [draw opacity=0][fill={rgb, 255:red, 192; green, 132; blue, 206 }  ,fill opacity=1 ] (190,100) .. controls (190,94.48) and (194.48,90) .. (200,90) .. controls (205.52,90) and (210,94.48) .. (210,100) .. controls (210,105.52) and (205.52,110) .. (200,110) .. controls (194.48,110) and (190,105.52) .. (190,100) -- cycle ;
\draw  [draw opacity=0][fill={rgb, 255:red, 192; green, 132; blue, 206 }  ,fill opacity=1 ] (190,140) .. controls (190,134.48) and (194.48,130) .. (200,130) .. controls (205.52,130) and (210,134.48) .. (210,140) .. controls (210,145.52) and (205.52,150) .. (200,150) .. controls (194.48,150) and (190,145.52) .. (190,140) -- cycle ;
\draw  [draw opacity=0][fill={rgb, 255:red, 192; green, 132; blue, 206 }  ,fill opacity=1 ] (190,180) .. controls (190,174.48) and (194.48,170) .. (200,170) .. controls (205.52,170) and (210,174.48) .. (210,180) .. controls (210,185.52) and (205.52,190) .. (200,190) .. controls (194.48,190) and (190,185.52) .. (190,180) -- cycle ;

\draw (98,140) node [anchor=east] [inner sep=0.75pt]    {$x_{k}$};
\draw (212,100) node [anchor=west] [inner sep=0.75pt]    {$u_{1}$};
\draw (212,140) node [anchor=west] [inner sep=0.75pt]    {$u_{2}$};
\draw (212,180) node [anchor=west] [inner sep=0.75pt]    {$u_{3}$};

\end{tikzpicture}}
    \caption{Neighborhood of $x_k$ in the dependency graph of $f$.}
    \label{fig:first}
\end{subfigure}
\hfill
\begin{subfigure}[b]{0.45\textwidth}
    \centering
    \scalebox{0.99}{\tikzset{every picture/.style={line width=0.75pt}} %

\begin{tikzpicture}[x=0.75pt,y=0.75pt,yscale=-1,xscale=1]

\draw    (110,150) -- (190,130) ;
\draw    (110,150) -- (190,170) ;
\draw    (110,110) -- (190,90) ;
\draw    (110,110) -- (190,130) ;
\draw    (190,130) -- (280,130) ;
\draw    (190,169.5) -- (280,169.5) ;
\draw    (190,90) -- (280,90) ;
\draw  [draw opacity=0][fill={rgb, 255:red, 215; green, 101; blue, 116 }  ,fill opacity=1 ] (180,90) .. controls (180,84.48) and (184.48,80) .. (190,80) .. controls (195.52,80) and (200,84.48) .. (200,90) .. controls (200,95.52) and (195.52,100) .. (190,100) .. controls (184.48,100) and (180,95.52) .. (180,90) -- cycle ;
\draw  [draw opacity=0][fill={rgb, 255:red, 215; green, 101; blue, 116 }  ,fill opacity=1 ] (180,130) .. controls (180,124.48) and (184.48,120) .. (190,120) .. controls (195.52,120) and (200,124.48) .. (200,130) .. controls (200,135.52) and (195.52,140) .. (190,140) .. controls (184.48,140) and (180,135.52) .. (180,130) -- cycle ;
\draw  [draw opacity=0][fill={rgb, 255:red, 215; green, 101; blue, 116 }  ,fill opacity=1 ] (180,169.5) .. controls (180,163.98) and (184.48,159.5) .. (190,159.5) .. controls (195.52,159.5) and (200,163.98) .. (200,169.5) .. controls (200,175.02) and (195.52,179.5) .. (190,179.5) .. controls (184.48,179.5) and (180,175.02) .. (180,169.5) -- cycle ;
\draw  [draw opacity=0][fill={rgb, 255:red, 102; green, 156; blue, 211 }  ,fill opacity=1 ] (100,110) .. controls (100,104.48) and (104.48,100) .. (110,100) .. controls (115.52,100) and (120,104.48) .. (120,110) .. controls (120,115.52) and (115.52,120) .. (110,120) .. controls (104.48,120) and (100,115.52) .. (100,110) -- cycle ;
\draw  [draw opacity=0][fill={rgb, 255:red, 102; green, 156; blue, 211 }  ,fill opacity=1 ] (100,150) .. controls (100,144.48) and (104.48,140) .. (110,140) .. controls (115.52,140) and (120,144.48) .. (120,150) .. controls (120,155.52) and (115.52,160) .. (110,160) .. controls (104.48,160) and (100,155.52) .. (100,150) -- cycle ;
\draw  [draw opacity=0][fill={rgb, 255:red, 192; green, 132; blue, 206 }  ,fill opacity=1 ] (270,130) .. controls (270,124.48) and (274.48,120) .. (280,120) .. controls (285.52,120) and (290,124.48) .. (290,130) .. controls (290,135.52) and (285.52,140) .. (280,140) .. controls (274.48,140) and (270,135.52) .. (270,130) -- cycle ;
\draw  [draw opacity=0][fill={rgb, 255:red, 192; green, 132; blue, 206 }  ,fill opacity=1 ] (270,170) .. controls (270,164.48) and (274.48,160) .. (280,160) .. controls (285.52,160) and (290,164.48) .. (290,170) .. controls (290,175.52) and (285.52,180) .. (280,180) .. controls (274.48,180) and (270,175.52) .. (270,170) -- cycle ;
\draw  [draw opacity=0][fill={rgb, 255:red, 192; green, 132; blue, 206 }  ,fill opacity=1 ] (270,90) .. controls (270,84.48) and (274.48,80) .. (280,80) .. controls (285.52,80) and (290,84.48) .. (290,90) .. controls (290,95.52) and (285.52,100) .. (280,100) .. controls (274.48,100) and (270,95.52) .. (270,90) -- cycle ;

\draw (202,86.6) node [anchor=south west] [inner sep=0.75pt]    {$x_{k}^{( 1)}$};
\draw (291,86) node [anchor=west] [inner sep=0.75pt]    {$u_{1}$};
\draw (291,130.5) node [anchor=west] [inner sep=0.75pt]    {$u_{2}$};
\draw (292,169.5) node [anchor=west] [inner sep=0.75pt]    {$u_{3}$};
\draw (202,126.6) node [anchor=south west] [inner sep=0.75pt]    {$x_{k}^{( 2)}$};
\draw (202,166.1) node [anchor=south west] [inner sep=0.75pt]    {$x_{k}^{( 3)}$};
\draw (98,110) node [anchor=east] [inner sep=0.75pt]    {$\tilde{y}_{k}^{( 1)}$};
\draw (98,150) node [anchor=east] [inner sep=0.75pt]    {$\tilde{y}_{k}^{( 2)}$};

\end{tikzpicture}}
    \caption{Transformed dependency graph in $f'$.}
    \label{fig:third}
\end{subfigure}
        
\caption{Each node corresponds to a variable. A \textcolor[HTML]{D76574}{\textbf{red}} node represents a min variable, a \textcolor[HTML]{669CD3}{\textbf{blue}} node represents a max variable, while a \textcolor[HTML]{C084CE}{\textbf{purple}} node represents either a min or a max variable.}
\label{fig:figures}
\end{figure}

\paragraph{Completeness.} 
We show that a $\varepsilon'$-approximate solution to $f'$ (with a suitable $\varepsilon'=\frac{1}{\poly(d)}$) can be used to compute a $\varepsilon$-approximate solution to $f$. We first establish that the copies of each variable are indeed close to one another.

\begin{lemma}\label{lem:closness}
     For  every $\varepsilon'$-approximate solution to \twoGDA on $f'$ and all $j\in[D]$, it holds
    \[
    |x^{(j+1)}_k-x^{(j)}_k(1-2\mu)-\mu|\le \frac\mu2,\quad\text{and}\quad|y^{(j+1)}_k-y^{(j)}_k(1-2\mu)-\mu|\le\frac\mu2.
    \]
\end{lemma}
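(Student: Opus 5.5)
The plan is to use only the KKT conditions of the auxiliary variables $\tilde y_k^{(j)}$ (and, symmetrically, $\tilde x_k^{(j)}$), together with the fact that each of them appears in exactly one monomial of $f'$. Fix $k\le d$ and $j\in[D]$ and write $e:=x_k^{(j+1)}-(1-2\mu)x_k^{(j)}-\mu$. The variable $\tilde y_k^{(j)}$ occurs only in $\cD_k$, so
\[
\partial_{\tilde y_k^{(j)}} f'(x,y)=\Lambda e.
\]
Since $\tilde y_k^{(j)}$ is a max variable, its condition $y\in\KKT_{\varepsilon'}(-f'(x,\cdot))$ reads $\Lambda e\,(z-\tilde y_k^{(j)})\le\varepsilon'$ for all $z\in[0,1]$. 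This is exactly the setting of \Cref{lem:deriv_pm} applied to $-f'$. Hence, if $e>\mu/2$, then $\tilde y_k^{(j)}\ge 1-\varepsilon'/(\Lambda\mu/2)$; if $e<-\mu/2$, then $\tilde y_k^{(j)}\le \varepsilon'/(\Lambda\mu/2)$. With $\Lambda\mu=10$, the bound is $\varepsilon'/5$, so $\tilde y_k^{(j)}$ is pushed to within $\varepsilon'/5$ of $1$ or of $0$.

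Next I would argue by contradiction, using the KKT condition of the min variable $x_k^{(j+1)}$. Suppose $e>\mu/2$. Then $x_k^{(j+1)}>\mu/2+(1-2\mu)x_k^{(j)}+\mu/2\ge\mu$, so $x_k^{(j+1)}$ is bounded away from $0$. The derivative of $f'$ with respect to $x_k^{(j+1)}$ has two parts:
\begin{itemize}
\item the term $\Lambda(\tilde y_k^{(j)}-\tfrac12)\ge\Lambda(\tfrac12-\varepsilon'/5)$, coming from link $j$;
\item the term $-(1-2\mu)\Lambda(\tilde y_k^{(j+1)}-\tfrac12)$, coming from link $j+1$ (absent if $j=D$), plus the contribution of $f_{\textsf{copy}}$, which is $O(d)$ in absolute value.
\end{itemize}

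The second part is the \emph{main obstacle}: link $j+1$ can cancel the push from link $j$ if $\tilde y_k^{(j+1)}\approx1$. To handle it, I would run the argument as an induction on $j$ from $D$ down to $1$, or alternatively apply the same argument to a maximal violating index. Concretely, take the largest $j$ with $e_j>\mu/2$. For $j+1$ we then have $e_{j+1}\le\mu/2$. If $e_{j+1}<-\mu/2$, then $\tilde y_k^{(j+1)}\approx0$, which only helps. If $|e_{j+1}|\le\mu/2$, I need to control $\tilde y_k^{(j+1)}$ differently; the candidate is the chain structure itself. Summing the derivatives along the chain telescopes: the combination $\sum_{\ell\ge j+1}\prod(1-2\mu)^{\cdot}\,\partial_{x_k^{(\ell)}}$ isolates link $j$'s term plus $f_{\textsf{copy}}$ contributions of size $O(dD)$. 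The KKT conditions on all $x_k^{(\ell)}$, $\ell>j$ — which lie in the interior, at distance at least about $\mu/2$ from the boundary thanks to the $\mu$-shift — make each derivative at most $O(\varepsilon'/\mu)$ in absolute value. So the weighted sum is small, whereas link $j$ contributes about $\Lambda/2=\Theta(1/\sqrt{\varepsilon'})$, which dominates $O(dD)$ once $\varepsilon'$ is a small enough $1/\poly(d)$.

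There is one more point to secure before the telescoping works: interiority of $x_k^{(\ell)}$ for $\ell>j$. Here the $\mu$-shift is essential. The map $t\mapsto(1-2\mu)t+\mu$ sends $[0,1]$ into $[\mu,1-\mu]$, so any copy whose link error is at most $\mu/2$ lies in $[\mu/2,1-\mu/2]$. This gives an interiority margin of order $\mu=2\sqrt{\varepsilon'}$, and therefore derivative bounds of order $\varepsilon'/\mu=O(\sqrt{\varepsilon'})$. These remain negligible against $\Lambda$ even after multiplying by $D$. This contradicts $e_j>\mu/2$. The case $e_j<-\mu/2$ is symmetric, with $x_k^{(j+1)}<1-\mu$. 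The $y$ chains are identical with the roles of min and max swapped, which is why $\cD_k$ uses the factor $(\tfrac12-\tilde x_k^{(j)})$ for $k>d$.
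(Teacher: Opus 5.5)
\textbf{Gap in the telescoping step.} Your interiority claim is not justified as written. You choose $j$ as the largest index with $e_j>\mu/2$, which only gives $e_\ell\le\mu/2$ for $\ell>j$, not $e_\ell\ge-\mu/2$. Nothing in your setup rules out a later link with $e_\ell<-\mu/2$. The copy $x_k^{(\ell+1)}$ after such a link then has no lower bound and may sit at $0$. There its KKT condition gives no upper bound on $\partial_{x_k^{(\ell+1)}}f'$, so you can no longer bound the weighted sum over $\ell\ge j+1$ above by $O(D\varepsilon'/\mu)$. Your remark that a copy whose link error is at most $\mu/2$ is interior is being applied to links you have not controlled.

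This is repairable. Run the downward induction on $j$ with the two-sided hypothesis $|e_\ell|\le\mu/2$ for all $\ell>j$ (equivalently, take the largest index violating the two-sided bound). Note that for $j+1\le\ell\le D+1$ you only need the one-sided bound $x_k^{(\ell)}\ge\mu/2$, hence only $\partial_{x_k^{(\ell)}}f'\le 2\varepsilon'/\mu$. The repaired argument still needs $\Lambda/2$ to dominate $D$, i.e.\ $\varepsilon'\le c/D^2$ for a small constant $c$.

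\textbf{The obstacle is not real.} The paper avoids all of this with a local argument. Link $j+1$ enters $\partial_{x_k^{(j+1)}}f'$ with the factor $1-2\mu$, so it can cancel at most $\Lambda(1-2\mu)/2$ of the push from link $j$. If $e_j>\mu/2$, then using only $\tilde y_k^{(j+1)}\le 1$ and $1-\tilde y_k^{(j)}\le\mu/2$,
\[
\partial_{x_k^{(j+1)}}f'\ \ge\ -1+\Lambda\Big(\tfrac12-\tfrac{\mu}{2}\Big)-\Lambda(1-2\mu)\tfrac12\ =\ \tfrac{\Lambda\mu}{2}-1\ \ge\ \tfrac{\Lambda\mu}{4}.
\]
The KKT condition of $x_k^{(j+1)}$ at $0$ then gives $x_k^{(j+1)}\le 4\varepsilon'/(\Lambda\mu)<\mu$, contradicting $x_k^{(j+1)}>\tfrac32\mu$. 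The case $e_j<-\mu/2$ is symmetric, using $\tilde y_k^{(j+1)}\ge 0$.

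The residual push $\Lambda\mu/2=5$ suffices because $|\partial_{x_k^{(j+1)}}f_{\textsf{copy}}|\le 1$. Each copy appears in at most one monomial of $f_{\textsf{copy}}$, with coefficient in $[-1,1]$, so your $O(d)$ bound is far too pessimistic.

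This argument needs no induction and holds for every $\varepsilon'$ below an absolute constant. Your telescoping identity is the one the paper uses afterwards in \Cref{claim:deriv}, where the interiority it requires is supplied by this very lemma.
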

\begin{proof}
    We prove the bound $|x^{(j+1)}_k-x^{(j)}_k(1-2\mu)-\mu|\le \frac\mu2$; the other case is symmetric.
    Then, we prove that both $x^{(j+1)}_k-x^{(j)}_k(1-2\mu)-\mu> \frac\mu2$ and $x^{(j+1)}_k-x^{(j)}_k(1-2\mu)-\mu<-  \frac\mu2$ lead to a contradiction.
    
    \paragraph{Case 1.} Assume that $x^{(j+1)}_k-(1-2\mu)x^{(j)}_k-\mu> \frac\mu2$. The KKT condition for $\tilde y_k^{(j)}$ with respect to $1$ gives:
    \[
    \partial_{\tilde y_k^{(j)}} f'(x,y)(1-\tilde y_k^{(j)})=\Lambda\big(x^{(j+1)}_k-(1-2\mu)x^{(j)}_k-\mu\big)(1-\tilde y_k^{(j)})\le \varepsilon',
    \]
    and thus 
    \[
    1-\tilde y_k^{(j)}\le \frac{2\varepsilon'}{\Lambda\mu}\le \frac\mu2.%
    \]
    The partial derivative of $f'$ with respect to $x_k^{(j+1)}$ then satisfies
    \begin{align*}
    \partial_{x_k^{(j+1)}}f'(x,y)&=\partial_{x_k^{(j+1)}}f_{\textsf{copy}}(x,y)+\partial_{x_k^{(j+1)}}\varphi(x,y)\\
    &\ge -1+\Lambda \Big(\tilde y_k^{(j)}-\frac12\Big)-\Lambda(1-2\mu)\Big(\tilde y_k^{(j+1)}-\frac12\Big)\mathbb{I}(j\neq D)\tag{$|\partial_{x_k^{(j+1)}}f_{\textsf{copy}}|\le 1$}\\
    &\ge -1 +\Lambda\Big(\frac12-\frac12\mu\Big)-\Lambda\frac12(1-2\mu)\tag{$\tilde y_k^{(j+1)}\le 1$ and $1-\tilde y_k^{(j)}\le\frac\mu2$}\\
    &\ge -1+\frac12\Lambda\mu\ge \frac14\Lambda\mu.%
    \end{align*}
    Thus, thanks to the KKT condition of $x_k^{(j+1)}$ with respect to $0$ we obtain
    \(
    x_k^{(j+1)}\le \frac{4\varepsilon'}{\Lambda\mu}<\mu.%
    \)
    However, our assumption implies
    \[
    x_k^{(j+1)}>\frac32\mu+(1-2\mu)x_k^{(j)}\ge \frac32\mu,
    \]
    reaching a contradiction.

    \paragraph{Case 2.} Assume that $x^{(j+1)}_k-(1-2\mu)x^{(j)}_k-\mu< -\frac\mu2$. The KKT condition for $\tilde y_k^{(j)}$ with respect to $0$ guarantee that
    \[
    -\partial_{\tilde y_k^{(j)}} f'(x,y)\cdot \tilde y_k^{(j)}=-\Lambda\big(x^{(j+1)}_k-(1-2\mu)x^{(j)}_k-\mu\big)\cdot \tilde y_k^{(j)}\le \varepsilon',
    \]
    from which it follows  that 
    \[
    \tilde y_k^{(j)}\le \frac{\varepsilon'}{\mu\Lambda }\le \frac\mu2.%
    \]
    Now consider the derivative with respect to $x_k^{(j+1)}$ which is
    \begin{align*}
    \partial_{x_k^{(j+1)}}f'(x,y)&=\partial_{x_k^{(j+1)}}f_{\textsf{copy}}(x,y)+\partial_{x_k^{(j+1)}}\varphi(x,y)\\
    &\le1+\Lambda \Big(\tilde y_k^{(j)}-\frac12\Big)-\Lambda(1-2\mu)\Big(\tilde y_k^{(j+1)}-\frac12\Big)\mathbb{I}(j\neq d(x_k)-1)\tag{$|\partial_{x_k^{(j+1)}}f_{\textsf{copy}}|\le 1$}\\
    &\le 1-\frac{\mu\Lambda}{2}\le -\frac{\mu\Lambda}{4}.%
    \end{align*}
   The KKT of $x_k^{(j+1)}$ with respect to $1$ gives that
    \(
    x_k^{(j+1)}\ge1-\frac{4\varepsilon'}{\mu\Lambda}\ge 1-\mu,
    \)
    however, our assumption implies $x_k^{(j+1)}< 1-\frac32\mu$, reaching a contradiction.

    \paragraph{Conclusion.} Combining the two cases, we obtain that :
    \[|x^{(j+1)}_k-(1-2\mu)x^{(j)}_k-\mu|\le \frac\mu2,\]
    concluding the proof.
\end{proof}

To conclude the proof we show that 
\begin{align}\label{eq:hatx}
(\hat x,\hat y)=(x_1,\ldots, x_d,y_1,\ldots, y_d)=(x_1^{(1)},\ldots,x_d^{(1)},y_1^{(1)},\ldots,y_d^{(1)}).
\end{align}
represents a $\varepsilon$-approximate solution to the original \twoGDA instance.
\Cref{lem:closness} let us conclude that 
\begin{align}\label{eq:closetwoinstances}
|\hat x_k-x_k^{(j)}|\le 3D\mu \quad \text{and} \quad |\hat y_k-y_k^{(j)}|\le 3D\mu     \quad k\in[d], j \in [D+1].
\end{align}
Then, we prove that the derivative of $f'$ are close to those of $f$.

\begin{lemma}[Backpropagation]\label{claim:deriv}
Consider a solution $(x,y)$ to \twoGDA on $f'$ with approximation $\varepsilon'$ and let $(\hat x,\hat y)$ be defined according to \Cref{eq:hatx}. Then, $\forall k \in [d]$:
\begin{align*}
&\left|\partial_{x_k}f(\hat x,\hat y)-\partial_{x_k^{(1)}}f'(x,y)\right|\le 13\mu D^2+\frac{4D\varepsilon'}{\mu}\quad\text{and}\quad\left|\partial_{y_k}f(\hat x,\hat y)-\partial_{y_k^{(1)}}f'(x,y)\right|\le 13\mu D^2+\frac{4D\varepsilon'}{\mu}.
\end{align*}  
\end{lemma}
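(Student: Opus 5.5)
The plan is to write the derivative of $f'$ with respect to each copy $x_k^{(j)}$ explicitly, turn the approximate KKT conditions of the interior copies $j\ge 2$ into a backward recursion for the gadget multipliers, and unroll that recursion down to $j=1$. The case of $y_k$ is symmetric, with the roles of $\min$ and $\max$ swapped, so I only treat $x_k$.

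\textbf{Step 1 (derivatives of the copies).} Fix $k\le d$ and set
\[
g_j:=\partial_{x_k^{(j)}}f_{\textsf{copy}}(x,y),\qquad w_j:=\Lambda\Big(\tilde y_k^{(j)}-\tfrac12\Big).
\]
Since $x_k^{(j)}$ appears in $\cD_k$ only in the $(j-1)$-th and $j$-th summands, we get
\[
\partial_{x_k^{(j)}}f'(x,y)=g_j+w_{j-1}\mathbb{I}(j\ge2)-(1-2\mu)\,w_j\,\mathbb{I}(j\le D).
\]

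\textbf{Step 2 (interior copies have small derivative).} For $j\ge2$, \Cref{lem:closness} gives $x_k^{(j)}\in[\mu/2,\,1-\mu/2]$, because $(1-2\mu)x_k^{(j-1)}+\mu\in[\mu,1-\mu]$. Testing the KKT condition of $x_k^{(j)}$ against $0$ and against $1$ then yields $|\partial_{x_k^{(j)}}f'(x,y)|\le 2\varepsilon'/\mu$. This gives the recursion
\[
w_{j-1}=(1-2\mu)w_j-g_j\pm\tfrac{2\varepsilon'}{\mu}\quad(2\le j\le D),\qquad w_D=-g_{D+1}\pm\tfrac{2\varepsilon'}{\mu}.
\]

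\textbf{Step 3 (unrolling).} Unrolling the recursion gives
\[
w_1=-\sum_{j=2}^{D+1}(1-2\mu)^{j-2}g_j\pm\tfrac{2D\varepsilon'}{\mu},
\]
where the error does not blow up because $(1-2\mu)\le1$. Plugging this into the formula of Step 1 for $j=1$,
\[
\partial_{x_k^{(1)}}f'=g_1-(1-2\mu)w_1=\sum_{j=1}^{D+1}(1-2\mu)^{j-1}g_j\pm\tfrac{2D\varepsilon'}{\mu}.
\]
Since $|g_j|\le1$ and $1-(1-2\mu)^{j-1}\le 2\mu D$, replacing each weight by $1$ costs at most $2\mu D(D+1)$. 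Hence
\[
\partial_{x_k^{(1)}}f'=\sum_j g_j\pm O(\mu D^2)\pm\tfrac{2D\varepsilon'}{\mu}.
\]

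\textbf{Step 4 (comparison with $f$).} By construction of $T_k$, $\sum_j g_j$ equals $\partial_{x_k}f$ with the original variables replaced by their respective copies. For $\ell\ge1$, $g_\ell=P_\ell$ evaluated at copies of the other variables, and $g_D+g_{D+1}=\alpha_k(x_k^{(D+1)}+x_k^{(D)})$, whereas $\partial_{x_k}f(\hat x,\hat y)=\sum_\ell P_\ell(\hat u_{-k})+2\alpha_k\hat x_k$. Each $P_\ell$ is affine with a single coefficient bounded by $1$. By \eqref{eq:closetwoinstances}, every evaluated copy is within $3D\mu$ of the corresponding hatted variable, so the difference is at most about $(D+1)\cdot 3D\mu$. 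Adding this to the bounds of Step 3 yields a total of roughly $13\mu D^2+4D\varepsilon'/\mu$ after crude constant bookkeeping.

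The main obstacle I expect is the boundary bookkeeping. The endpoint $j=D+1$, which carries only one gadget term, must be handled correctly. One must also check that $x_k^{(1)}$ is never used as an interior copy, since its own KKT condition is precisely what is being transferred to $\hat x_k$. Finally, the unrolled error has to be kept linear in $D$ rather than geometric.
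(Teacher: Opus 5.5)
Your proposal is correct and takes essentially the same route as the paper. The paper proves by induction the telescoping identity $\sum_{j=1}^{D+1}(1-2\mu)^{j-1}\partial_{x_k^{(j)}}\varphi(x,y)=0$ and combines it with the $2\varepsilon'/\mu$ bound on the derivatives of the interior copies $j\ge 2$; this is exactly what your backward unrolling of the multipliers $w_j$ computes. It then compares $\sum_j \partial_{x_k^{(j)}}f_{\textsf{copy}}$ with $\partial_{x_k}f(\hat x,\hat y)$ via \eqref{eq:closetwoinstances}, just as you do, and your constants fit within the stated bound.
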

\begin{proof}
    We prove the statement for a variable $x_k$, $k \in [d]$; the proof for variables $y_k$ is symmetric.
    \paragraph{Bounding $\left|\partial_{x_k^{(1)}}f'(x,y)-\sum_{j=1}^{D+1}\partial_{x_k^{(j)}}f_{\textsf{copy}}(x,y)\right|$.} 
    We start observing that all copies beside the first one are bounded away from $0$ and $1$.
    Indeed, from \Cref{lem:closness}, we get 
    \[x_k^{(j)}\in[\tfrac\mu2,1-\tfrac\mu2] \quad \forall 2 \le j \le  D+1.\] 
    This holds since $(1-2\mu)\xi+\mu\in[\mu,1-\mu]$ for all $\xi\in[0,1]$.
   Then, the KKT condition implies:
    \[
    \left|\partial_{x_k^{(j)}}f'(x,y)\right|\le \frac{2\varepsilon'}\mu,\,\forall  2 \le j \le  D+1.
    \]
    Let $\alpha=(1-2\mu)$.
    Notice that
    \[
    \begin{cases}
    \partial_{x_k^{(1)}} \varphi(x,y)= -\Lambda \alpha\left(\tilde y_{k}^{(1)}-\frac{1}{2}\right)  \\
     \partial_{x_k^{(j)}} \varphi(x,y)=  \Lambda \left(\tilde y_{k}^{(j-1)}-\frac{1}{2}\right)  - \Lambda \alpha  \left(\tilde y_{k}^{(j)}-\frac{1}{2}\right)  \quad \textnormal{for } 2\le j \le D\\
    \partial_{x_k^{(D+1)}} \varphi(x,y)= \Lambda \left(\tilde y_{k}^{(D)}-\frac{1}{2}\right) 
     \end{cases}.
     \]
     Now, we prove by induction that for each $i \in [D]$ it holds:
    \[ \sum_{j\in [i]} \alpha^{j-1} \partial_{x_k^{(j)}} \varphi(x,y)= - \Lambda \alpha^i \left(\tilde y_{k}^{(i)}-\frac{1}{2}\right).\]
    This is clearly the case for $i=1$.
    Then, by induction, we get 
     \begin{align*}
     \sum_{j\in [i]} \alpha^{j-1} \partial_{x_k^{(j)}} \varphi(x,y)&= \sum_{j\in [i-1]}  \alpha^{j-1} \partial_{x_k^{(j)}} \varphi(x,y) + \alpha^{i-1} \left( \Lambda  \left(\tilde y_{k}^{(i-1)}-\frac{1}{2}\right)- \Lambda \alpha \left(\tilde y_{k}^{(i)}-\frac{1}{2}\right) \right)\\
     &= - \Lambda \alpha^{i-1} \left(\tilde y_{k}^{(i-1)}-\frac{1}{2}\right) + \alpha^{i-1} \left( \Lambda  \left(\tilde y_{k}^{(i-1)}-\frac{1}{2}\right)- \Lambda \alpha \left(\tilde y_{k}^{(i)}-\frac{1}{2}\right) \right)\\
     & = - \Lambda \alpha^{i} \left(\tilde y_{k}^{(i)}-\frac{1}{2}\right).
     \end{align*}
     Hence, we conclude that:
    \[ \sum_{j\in [D+1]} \alpha^{j-1} \partial_{x_k^{(j)}} \varphi(x,y) =   \sum_{j\in [D]} \alpha^{j-1} \partial_{x_k^{(j)}} \varphi(x,y) + \alpha^D   \partial_{x_k^{(j)}} \varphi(x,y)= 0\]
    Thus, we get:
    \begin{align*}
     \left|\partial_{x_k^{(1)}}f'(x,y)-\sum_{j=1}^{D+1}\partial_{x_k^{(j)}}f_{\textsf{copy}}(x,y)\right| & \le \left|\partial_{x_k^{(1)}}f'(x,y)-\sum_{j=1}^{D+1} \alpha^{j-1} \partial_{x_k^{(j)}}f_{\textsf{copy}}(x,y)\right| + \left| \sum_{j=1}^{D+1} (1-\alpha^{j-1})\partial_{x_k^{(j)}}f_{\textsf{copy}}(x,y) \right|  \\
     &\le  \left|\partial_{x_k^{(1)}}f'(x,y)-\sum_{j=1}^{D+1}\alpha^{j-1} \left( \partial_{x_k^{(j)}}f'(x,y)- \partial_{x_k^{(j)}} \varphi(x,y)\right)\right|+ 4\mu D^2\\ 
     & \le  \left|\sum_{j=1}^{D+1}\alpha^{j-1}  \partial_{x_k^{(j)}} \varphi(x,y)\right|+ 4\mu D^2+ \frac{4D\varepsilon'}{\mu}\\
     & =4\mu D^2+ \frac{4D\varepsilon'}{\mu},
    \end{align*}
    where in the second inequality we used $1-(1-2\mu)^{D}\le 2\mu D$.

    \paragraph{Bounding $\left|\partial_{\hat x_k} f(\hat x,\hat y)-\sum_{j=1}^{D+1}\partial_{x_k^{(j)}}f_{\textsf{copy}}(x,y)\right|$.} 
    Once we fix the index $k\in [d]$, we can write 
    \[
    \partial_{\hat x_k}f(\hat x,\hat y)= \sum_{j\in [D-1]:j\neq k } \alpha_j \hat u_j+ \beta +2\gamma \hat x_k,
    \] 
    while  
    \[
    \sum_{j=1}^{D+1} \partial_{x_k^{(j)}}f_{\textsf{copy}}(x,y)=\sum_{j\in[D-1]:j\neq k} \alpha_j u_j^{(i_j)}+\beta+\gamma (x_k^{(D)}+x_k^{(D+1)}),
    \]
    for some indexes $\{i_j\}_{j \neq k}$.
    Now, thanks to \Cref{eq:closetwoinstances}, we get:
    \begin{align*}
        \left|\partial_{\hat x_k}f(\hat x,\hat y)-\sum_{j=1}^{D+1}\partial_{x_k^{(j)}}f_{\textsf{copy}}(x,y)\right|&\le \sum_{j\neq k} |\alpha_j|\cdot |\hat u_j-u_j^{(i_j)}|+|\gamma|\cdot\left|2\hat x_k-\left(x_k^{(D)}+x_k^{(D+1)}\right)\right|\\
        &\le 3D^2\mu+6D\mu\le 9D^2\mu 
    \end{align*}
    \paragraph{Conclusion.} We can combine the previous two bounds to conclude that
    \begin{align*}
    \left|\partial_{x_k}f(\hat x,\hat y)-\partial_{x_k^{(1)}}f'(x,y)\right|&\le \left|\partial_{\hat x_k}f(\hat x,\hat y)-\sum_{j=1}^{D+1}\partial_{x_k^{(j)}}f_{\textsf{copy}}(x,y)\right|+\left|\sum_{j=1}^{D+1}\partial_{x_k^{(j)}}f_{\textsf{copy}}(x,y)-\partial_{x_k^{(1)}}f'(x,y)\right|\\
    &\le 13\mu D^2+\frac{4D\varepsilon'}{\mu}. 
    \end{align*}
    This concludes the proof of the lemma.
\end{proof}

Now, by \Cref{claim:deriv} and \Cref{eq:closetwoinstances}, we can easily verify that for all $k\le d$ and $x'\in[0,1]$ 
\begin{align*}
-\partial_{x_k}f(\hat x,\hat y)(x'-\hat x_k)&=-\partial_{x^{(1)}_k}f'(x,y)(x'-\hat x_k)+(\partial_{x^{(1)}_k}f'(x,y)-\partial_{x_k}f(\hat x,\hat y))(x'-\hat x_k)\\
&\le -\partial_{x^{(1)}_k}f'(x,y)(x'-\hat x_k)+13\mu D^2+\frac{4D\varepsilon'}{\mu}\\
&= -\partial_{x^{(1)}_k}f'(x,y)(x'- x_k^{(1)})+13\mu D^2+\frac{4D\varepsilon'}{\mu}\\
&\le \varepsilon'+13\mu D^2+\frac{4D\varepsilon'}{\mu}\\
&\le O(D^2 \sqrt{\varepsilon'} )\le \varepsilon,
\end{align*}
for $\varepsilon'$ polynomially small.
A similar result holds for the maximizer variables, showing that $(\hat x,\hat y)$ is a solution to the original problem.

\begin{remark}\label{rem:colorable}
    Note that each variable $x^{(j)}_k$ appears in at most $3$ monomials, and at most two of those ($\tilde y_k^{(j-1)}$ and $\tilde y_k^{(j)}$) are max variables, and the other one can either be a min or max variable (we also refer to \Cref{fig:figures}) for a visualization. The same holds for the max variables. Thus, every max (resp.~min) variable is connected with at most one other max (resp.~min) variable. This will be instrumental in \Cref{sec:2vs2}.
\end{remark}

\section{From Multilinear Degree-$2$ to Polymatrix}\label{sec:polymatrix}

In \Cref{sec:degree3interaction} we showed hardness for \twoGDA for a degree-$2$ multilinear polynomial in which each variable appears in at most three monomials.
To reduce this to \TTPG, we adapt the reduction of \citet{hollender2025complexity}. The original reduction distributes degree-$1$ terms across all variables of the opposing team, which would inflate the interaction degree and would not yield a polymatrix instance in which the interaction graph has degree at most three.
We fix this issue by a simple modification: we assign each linear term to an already-existing edge, so that the interaction graph of the polynomial coincides exactly with the graph of the polymatrix instance.
We also assume, without loss of generality, that there are no isolated nodes, i.e., every variable appears in at least one degree-$2$ monomial.

\begin{theorem}\label{th:manyplayerpolymatrix}
\TTPG is \PPAD-hard even when each player interacts with at most $3$ other players, each player has $2$ actions, and $\varepsilon^{-1}=\poly(n)$.
\end{theorem}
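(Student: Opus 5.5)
We reduce from the hard instances of \Cref{th:degree2}. These are multilinear quadratic polynomials
\[
f(x,y)=\sum_{\{a,b\}\in E}c_{ab}\,u_au_b+\sum_a \ell_a u_a,
\]
where each variable lies on at most three edges $E$ (we ignore the constant term, and we may assume no isolated variables). Each variable $x_k$ becomes a binary-action player in team $X$ (minimizers), and each $y_k$ becomes a player in team $Y$. A player's mixed strategy is identified with the probability $u_a\in[0,1]$ of playing action $1$. The goal is to choose payoff matrices on the edges of $E$ only, so that each player's utility is an affine function of her own probability with slope $-\partial_{u_a}f$ for $X$ players and $+\partial_{u_a}f$ for $Y$ players. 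Then $\varepsilon$-best responses coincide with $\varepsilon$-KKT conditions.

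\textbf{Construction.} For an edge $\{a,b\}$, take the bilinear term $c_{ab}u_au_b$.
\begin{itemize}
\item If $a,b$ are on the same team, we use a coordination edge. Its matrix satisfies $A^{a,b}=(A^{b,a})^\top$ and has payoff $\mp c_{ab}$ on the $(1,1)$ entry, with the sign $-$ for team $X$ and $+$ for team $Y$.
\item If $a\in X$ and $b\in Y$, we use a zero-sum edge. We set $A^{a,b}$ with entry $(1,1)$ equal to $-c_{ab}$, and $A^{b,a}=-(A^{a,b})^\top$. Then $b$ receives $+c_{ab}u_au_b$, as required.
\end{itemize}

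\textbf{Linear terms.} For each variable $a$, pick one incident edge $\{a,b\}$ (it exists since there are no isolated nodes). We encode $\ell_a u_a$ by adding $\mp\ell_a$ to the row of action $1$ of $a$ against \emph{both} actions of $b$. Since $b$'s mixture sums to one, $a$ receives exactly $\mp\ell_a u_a$ regardless of $u_b$.

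The subtle point is the transpose/antisymmetry constraints. Adding a row-constant to $A^{a,b}$ forces a column-constant into $A^{b,a}$. A column-constant term depends only on $u_a$ and not on $b$'s own choice, so it does not change $b$'s incentives. Hence each player's utility equals (up to terms independent of her own action) $\mp f$ restricted to her variable. This is the step I would check most carefully, since it is where the modification of the \citet{hollender2025complexity} reduction happens: linear terms are charged to existing edges so that the interaction graph has degree at most $3$.

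\textbf{Correctness.} For player $a$, the gain from deviating to $u_a'$ is $-\partial_{u_a}f\,(u_a'-u_a)$ for $X$ (and symmetric for $Y$). The best-response condition over $\Delta_2$ is therefore exactly the $\varepsilon$-KKT condition of \twoGDA for that coordinate. Hence an $\varepsilon$-equilibrium yields an $\varepsilon$-solution of the \twoGDA instance.

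Finally, we normalize the entries to $[-1,1]$. By \Cref{th:degree2} the coefficients are already bounded, and adding one linear term per edge at most doubles an entry. Dividing by a constant loses only a constant factor in $\varepsilon$, so $\varepsilon^{-1}=\poly(n)$ is preserved. The degree bound follows because the polymatrix graph is exactly the monomial graph of \Cref{th:degree2}.
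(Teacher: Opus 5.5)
Your proposal is correct and is essentially the paper's proof. The paper's edge matrix $B_{u,v}=\begin{pmatrix}0&\beta_{e,v}\\ \beta_{e,u}&\gamma_e+\beta_{e,u}+\beta_{e,v}\end{pmatrix}$, scaled by the team sign $\tau$, is exactly your bilinear $(1,1)$ entry plus a row-constant for $u$'s charged linear term and the forced column-constant for $v$'s, so each deviation gain is $\tau_v\partial_{z_v}f\,(z_v'-z_v)$ as you argue.

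One small slip: if both endpoints charge their linear terms to the same edge, an entry can be $c_{ab}+\ell_a+\ell_b$. Normalization then costs a factor $3$ rather than $2$, which is immaterial.
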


\begin{proof}
Let $f:[0,1]^d\times[0,1]^d\to[-1,1]$ be a multilinear degree-$2$ polynomial in which each variable appears in at most three monomials. We introduce a player of the $X$ team for each of the $d$ minimizing variables and a player for the $Y$ team for each of the $d$ maximizing variables. The game thus has $n = 2d$ players. Then, we can write the original multilinear polynomial as
\[
f(z)=\sum_{v\in V} \ell_v z_v+\sum_{\{u,v\}\in E} c_{\{u,v\}} z_{u}z_{v},
\]

where $V=X\sqcup Y$. It is also convenient to define the team sign as
\[
\tau_v=\begin{cases}
    -1&\text{if }v\in X\\
    +1&\text{if }v\in Y.
\end{cases}
\]

To distribute linear terms across edges, for each $v\in V$, fix any incident edge $h(v)\in E$, and for each edge $e=\{u,v\}\in E$ set
\[\beta_{e,u}=
\begin{cases}
    \ell_u&\text{if }h(u)=e\\
    0,&\text{otherwise}
\end{cases},\quad\text{and}\quad
\beta_{e,v}=
\begin{cases}
    \ell_v&\text{if }h(v)=e\\
    0,&\text{otherwise}
\end{cases},
\]
and $\gamma_e=c_{\{u,v\}}$. Then, the payoff matrix for each $e=\{u,v\}\in E$ is defined via:
\[
A^{u,v}=\tau_u B_{u,v},\quad\text{and}\quad A^{v,u}=\tau_v B_{u,v}^\top.
\]
where 
\[
B_{u,v}=\begin{pmatrix}
    0&\beta_{e,v}\\
    \beta_{e,u}&\gamma_e+\beta_{e,u}+\beta_{e,v}
\end{pmatrix}.
\]

Clearly, we can normalize $B_{u,v}$ so that all entries lie in $[-1,1]$. Note that this might require us to decrease $\varepsilon$ by a polynomial factor, but this does not alter the result.

We can easily see that this is indeed a \TTPG instance in which each player interacts with at most three other players. Indeed, if $u,v\in X$ then $\tau_v=\tau_u=-1$ and $A^{u,v} = (A^{v,u})^\top$ (and similarly if $u,v\in Y$), while if $u\in X$ and $v\in Y$, then $\tau_u=-1=-\tau_v$ which implies that $A^{u,v} = -(A^{v,u})^\top$ (and similarly if $u\in Y$ and $v\in X$).

It remains to show that from every equilibrium of the \TTPG instance we can recover a solution of the original \twoGDA instance.
 For each player $v$, let $z_v$ denote the probability they assign to their second action.
For an edge $e=\{u,v\}$, we can define the quantity
\[
\Phi(e) = (1-z_u)z_v\beta_{e,v}+z_u(1-z_v)\beta_{e,u}+z_uz_v(\gamma_e+\beta_{e,u}+\beta_{e,v})=\gamma_ez_uz_v+\beta_{e,v}z_v+\beta_{e,u}z_u,
\]
which is the expected payoff contribution of that edge (modulo its sign).
Summing over all edges and using the definition of $\beta_e$, we get
\begin{align*}
    \sum_{e\in E} \Phi(e)&=\sum_{\{u,v\}\in E}c_{u,v}z_uz_v+\sum_{\{u,v\}\in E}(\beta_{e,v}z_v+\beta_{e,u}z_u)\\
    &=\sum_{\{u,v\}\in E}c_{u,v}z_uz_v+\sum_{u\in V}z_u\sum_{v:\{u,v\}\in E} \beta_{e,u}+\sum_{v\in V}z_v\sum_{u:\{u,v\}\in E} \beta_{e,v}\\
    &=\sum_{\{u,v\}\in E}c_{u,v}z_uz_v+\sum_{u\in V} z_u\ell_u\\
    &=f(z).
\end{align*}

The proof is now almost concluded. 
The equilibrium condition of a player $v$ with respect to $z'_v$ implies:
\[
U_v(z_v',z_{-v})-U_v(z)=\tau_v(f(z_v',z_{-v})-f(z))\le \varepsilon.
\]
Moreover, since $f$ is multilinear:
\[
\tau_v(f(z_v',z_{-v})-f(z))=\tau_v\partial_{z_v}f(z)(z_v'-z_v)\le\varepsilon,
\]
which shows that we have found a solution to the original \twoGDA instance.
\end{proof}

\section{Reduction to $2$ vs.~$2$ \TTPG}\label{sec:2vs2}

In \Cref{sec:polymatrix}, we reduced the problem to a many-player two-team zero-sum polymatrix game. As we observed in \Cref{rem:colorable}, each variable appears in at most one monomial with another variable of the same team. Since the reduction of \Cref{sec:polymatrix} preserves the interaction graph, the two teams $X$ and $Y$ of the \TTPG instance constructed there are $2$-colorable. Thus, we can partition $X=X_1\sqcup X_2$ and $Y=Y_1\sqcup Y_2$ such that for all $u,v\in X_i$ or $u,v\in Y_i$, then $u$ and $v$ are not connected. Following the standard reduction from polymatrix games to bi-matrix games \citep{althofer1994sparse,babichenko2016can,rubinstein2017settling}, we introduce one player $P$ for each group $\{X_1, X_2, Y_1, Y_2\}$. A strategy for the player in the new game corresponds to selecting a player from the original polymatrix game, one of the two actions assigned to it, and one vertex of the opposing team. Adding the hide-and-seek gadget of \citet{althofer1994sparse} guarantees that each player $P$ randomizes uniformly over their choice of original player. The proof is standard, but we include an overview for completeness.

\begin{theorem}\label{th:2vs2}
    \TTPG is \PPAD-hard even when there are $2$ players in each team and $\varepsilon^{-1}= \poly(n)$.
\end{theorem}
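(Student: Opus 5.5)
We reduce from the many-player binary-action instance of \Cref{th:manyplayerpolymatrix}, built on the polynomial of \Cref{th:degree2}. By \Cref{rem:colorable} the intra-team interaction graph is a matching, so we can partition $X=X_1\sqcup X_2$ and $Y=Y_1\sqcup Y_2$ with no edge inside any class. After padding with dummy isolated players (zero payoffs), each class has the same size $N$.

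We introduce four players $P_{X_1},P_{X_2},P_{Y_1},P_{Y_2}$. A pure strategy of $P_{X_a}$ is a pair $(v,b)$ with $v\in X_a$ and $b\in\{0,1\}$, so it has $2N$ actions. If $P_{X_a}$ and $P_{Z}$ select $(v,b)$ and $(w,c)$, the payoff between them is $N^2A^{v,w}_{b,c}$, which is nonzero only if $\{v,w\}$ is an edge. This keeps the matrices exactly symmetric between teammates and exactly negated-transposed across teams. On top of this we add a hide-and-seek term of weight $K$ between $P_{X_a}$ and $P_{Y_a}$ for each $a\in\{1,2\}$: $P_{X_a}$ gets $+K$ when both name the same index $i$ (identifying $X_a$ and $Y_a$ with $[N]$), and $P_{Y_a}$ gets $-K$. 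This is zero-sum across teams, so the game is a valid \TTPG instance.

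Given an $\varepsilon$-equilibrium, let $p_v$ be the total probability $P_{X_a}$ puts on player $v$, and let $z_v$ be the conditional probability of action $1$. The standard argument of \citet{althofer1994sparse,rubinstein2017settling} shows $|p_v-1/N|\le O(\sqrt{N^2/K}+\sqrt{\varepsilon/K})$-type bounds when $K\gg N^2\cdot\max|A|$. The reason is that the hide-and-seek term dominates: if some index were heavily over-weighted by $P_{X_a}$, then $P_{Y_a}$ would profit by deviating onto it (and conversely for under-weighted indices), and the polymatrix part can change payoffs by at most $O(N^2\max|A|/N)$ per unit mass. I expect this step to be the main obstacle, since $P_{X_a}$'s and $P_{Y_a}$'s marginals interact through both the game part and the gadget. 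I would handle it with the usual two-sided argument: first bound the seeker's marginal, then use the hider's incentive to flatten the deviation.

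Next we extract the binary equilibrium. Consider deviations of $P_{X_a}$ that move only the conditional action within the block of player $v$, keeping $p_v$ fixed. These deviations do not change the hide-and-seek payoff. The change in polymatrix payoff is $p_v\cdot N^2\sum_w p_w (\text{gain of } v \text{ against } z_w)$. With $p_w\approx 1/N$, this is $\approx N^{-1}\cdot N\cdot(U_v(z_v',z_{-v})-U_v(z))$ up to error $O(N\cdot\text{marginal error})$.

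The hard part is that the equilibrium is only approximate in aggregate. A $p_v$-weighted sum of per-player regrets is at most $\varepsilon$, so each regret is at most $O(N\varepsilon)$ plus the marginal error. This works only because the blocks are independent (no edges within a class), so the block deviations combine into a single deviation. Choosing $K=\poly(N)$ and $\varepsilon=1/\poly(N)$ then makes every player's regret at most the target $\varepsilon'$ of \Cref{th:manyplayerpolymatrix}. This gives a solution of the binary game, hence of \twoGDA. Normalizing payoffs to $[-1,1]$ costs only a polynomial factor in $\varepsilon$.
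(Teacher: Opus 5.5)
Your argument is correct and follows the paper's overall plan: 2-colouring via \Cref{rem:colorable}, one player per class, $N^2$-scaled edge payoffs, a hide-and-seek term forcing near-uniform marginals, and marginal-preserving deviations inside one block to read off the conditionals $z_v$. The difference is the gadget. You use generalized matching pennies in the style of Daskalakis, Goldberg and Papadimitriou: the simulated index is itself the hide/seek choice, $P_{X_a}$ is a pure seeker and $P_{Y_a}$ a pure hider. The paper instead gives each player a separate seek coordinate $v\in[N_{\bar P}]$ that does not enter $\Psi_{\textsf{sim}}$, so every player both hides its simulated index and seeks that of $\bar P$. This costs $2N_PN_{\bar P}$ actions instead of your $2N$. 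In exchange it needs no padding, a seeker's deviation never changes the simulation payoff, and each player's simulation marginal is flattened by the same one-step hider argument. In your version the uniformity proof is genuinely two-step. First flatten the hider's marginal: the seeker best-responds to the heaviest index while the hider can fall back to uniform. Then flatten the seeker's marginal: the hider best-responds to the seeker's lightest index. Every such deviation also moves the simulation payoff, by up to the a priori bound $G=3N^2\max|A|$; your $O(N\max|A|)$ per unit mass would presuppose the uniformity being proved. Carrying this out gives $|p_v-1/N|=O(N^2(G+\varepsilon)/K)$, which is linear in $1/K$ rather than of square-root type, and is polynomially small for $K=\poly(N)$. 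No aggregation over blocks is needed: the single-block deviation you describe already bounds each original player's regret by $\varepsilon+O(N\delta\max|A|)$, where $\delta$ is the marginal error, exactly as in the paper. The absence of intra-class edges is also not what lets block deviations combine, since payoffs are linear in one's own mixed strategy anyway. It is needed so that every original edge joins two distinct simulating players and is therefore actually simulated.
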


\begin{proof}
Let $N_P=|P|$ for every $P\in\{X_1,X_2,Y_1,Y_2\}:=\mathcal{N}$. For each player $P \in \mathcal{N}$, we define its complement $\bar P$ as the player on the opposing team in the corresponding position (so $\bar X_1=Y_1$ and $\bar Y_1=X_1$). A pure action for a player $P$ is a tuple $(u,b,v)\in[N_P]\times\{0,1\}\times[N_{\bar P}]$, to be interpreted as: player $P$ simulates original player $u$ playing action $b\in\{0,1\}$, while trying to ``catch'' the hiding spot $v$ chosen by player $\bar P$.

Let $z=(z_P)_{P}$ be a mixed strategy for the $4$-player games. For each $P$ and $u\in P$ let $M_P(u;z)$ denote the marginal probability under $z_P$ of selecting an action $(u,\cdot,\cdot)$, and $p_P(u;z)$ denote the marginal probability of selecting $(u,1,\cdot)$. Formally:
\[
M_P(u;z)=\sum_{b\in\{0,1\}, v\in \bar P} z_P(u,b,v),
\quad
\text{and} 
\quad
p_P(u;z)=\sum_{ v\in \bar P} z_P(u,1,v).
\]
Moreover, we let $P(u)$ be the group in $\mathcal{N}$ that contains $u$.

Setting $\tau_P=-1$ for $P\in\{X_1,X_2\}$ and $\tau_P=1$ for $P\in\{Y_1,Y_2\}$, the payoff of a pure action profile  $a=(u_P,b_P,v_P)_{P\in\mathcal{N}}\in \bigtimes_{P\in\mathcal{N}}P\times\{0,1\}\times\bar P$ for the $P$ player is 
\[
U_P(a) = \tau_P(\Psi_{\textsf{sim}}(a)+\Psi_{\textsf{h\&s}}(a)),
\]
where, letting $\Phi_{\{i,j\}}(b,b')$ denote the payoff of the original binary action game on edge $\{i,j\}$ with $b,b'\in\{0,1\}$,
\[
\Psi_{\textsf{sim}}(a)= \sum_{\{i,j\}\in E} N_{P(i)}N_{P(j)}\mathbb{I}(u_{P(i)}=i,u_{P(j)}=j)\Phi_{\{i,j\}}(b_{P(i)},b_{P(j)}),
\]
and the hide-and-seek component is
\[
\Psi_{\textsf{h\&s}}(a) = B\Bigg(\sum_{P\in\{X_1,X_2\}}\mathbb{I}(u_{P}=v_{\bar P})-\sum_{P\in\{Y_1,Y_2\}}\mathbb{I}(u_{P}=v_{\bar P})\Bigg)
\]
for a large $B=\poly(|V|,1/\varepsilon,1/\delta)$.

Let $z$ be an $\varepsilon'$-equilibrium of the $4$ player game. We claim that setting 
\[
\hat z_u = p_P(u;z)/M_P(u;z),
\]
for all $u\in V$ (with $\hat z_u$ defined arbitrarily when $M_P(u;z)=0$), we obtain a solution to the original \TTPG instance. 
By standard calculations \citep{althofer1994sparse,babichenko2016can,rubinstein2017settling}, for some $B=\poly(|V|,1/\varepsilon,1/\delta)$, the hide-and.seek gadget enforces 
\begin{align}\label{eq:uniform}
\left|M_P(u;z)-\frac{1}{N_P}\right|\le \delta,
\end{align}
Notice that choosing $\delta\le 1/(2|V|)$ ensures that that $M_P(u;z)\neq 0$ for all $u\in P$.

Now we show the correctness of the reduction. Fix any node $v\in P$ and consider any deviation of player $P$ to a mixed strategy $z'=(z'_P,z_{-P})$ that preserves the marginals $M_P(\cdot ; z)$ and the marginal over $\bar P$, i.e., $\sum_{u \in P, b \in \{0,1\}} z_P(u,b,v)$ for all $v \in \bar P$, but changes the conditional probability of playing action $1$ for player $u$ to $s\in[0,1]$. Formally, we consider a 
\[
z'_P(r,b,v)=\begin{cases}
    z_P(r,b,v)&\text{if }r \neq u\\
    (1-s)(z_P(r,0,v)+z_P(r,1,v))&\text{if }r = u, b=0\\
    s(z_P(r,0,v)+z_P(r,1,v))&\text{if }r = u,b=1
\end{cases}.
\]

Under this deviation, all $\hat z_w$ remain the same, except for player $u$, for which $\hat z'_u=s$. Indeed,
\[
p_P(u;z')=\sum_{v\in \bar P} z'_P(u,1,v)=s\cdot M_P(u,z),
\]
and $M_P(w;z')=M_P(w;z)$, for every $w$. Hence
\[
\mathsf{E}_{a\sim z'}[\Psi_{\textsf{h\&s}}(a)]=\mathsf{E}_{a\sim z}[\Psi_{\textsf{h\&s}}(a)].
\] 
Now observe that
\begin{align*}
    U_P(z')-U_P(z)&=\tau_P\left(
\mathsf{E}_{a\sim z'}[\Psi_{\textsf{sim}}(a)]-\mathsf{E}_{a\sim z}[\Psi_{\textsf{sim}}(a)]
\right)\\
&=\tau_PN_PM_P(u;z)\sum_{j:\{u,j\}\in E}N_{P(j)}M_{P(j)}(j;z)\cdot\left(\Phi_{\{u,j\}}(s,\hat z_j)-\Phi_{\{u,j\}}(\hat z_u,\hat z_j)\right)
\end{align*}

Applying \Cref{eq:uniform}, we have $N_QM_Q(w;z) = 1\pm O(\delta |V|)$ for all player $Q$ and every $w\in Q$.  Since the maximum degree of the original games is $3$, this gives
\[
U_P(z')-U_P(z)= \tau_P\sum_{j:\{u,j\}\in E}\left(\Phi_{\{u,j\}}(s,\hat z_j)-\Phi_{\{u,j\}}(\hat z_u,\hat z_j)\right)\pm O(\delta |V|).
\]
Since $z$ is $\varepsilon'$-approximate equilibrium of the $4$-player game, we have
\[
U_P(z')\le U_P(z)+\varepsilon'.
\]
Therefore, for every $u\in P$ and every $s\in[0,1]$,
\[
\tau_P\sum_{j:\{u,j\}\in E}\left(\Phi_{\{u,j\}}(s,\hat z_j)-\Phi_{\{u,j\}}(\hat z_u,\hat z_j)\right)\le \varepsilon'+O(\delta|V|).
\]
The left-hand side is precisely the unilateral gain of the original player $u$ when deviating from $z_u$ to $s$. Choosing $\varepsilon'$ and $\delta$ such that $\varepsilon'+O(\delta|V|)\le\varepsilon$, the profile $z$ is an $\varepsilon$-approximate equilibrium of the original \TTPG instance.
\end{proof}

\newpage
\appendix
\section{Missing Proofs from \Cref{sec:three}}\label{sec:missing3}

\xminusy*
\begin{proof}
    Consider the optimality conditions of variables $x_{i,j}^q$ with respect to $y_{i,j}^q$ and vice-versa. Namely
    \begin{subequations}
    \begin{align}
        -&\partial_{x_{i,j}^q}f(x,y)(y_{i,j}^q-x_{i,j}^q)\le\varepsilon\label{eq:tmp1}\\
         &\partial_{y_{i,j}^q}f(x,y)(x_{i,j}^q-y_{i,j}^q)\le\varepsilon\label{eq:tmp2}.
    \end{align}
    \end{subequations}
    Now, we define $A:=M_i-\bar x_q$, $\xi:=|x_{i,j}^q-y_{i,j}^q|$, and we substitute the expressions for $\partial_{x_{i,j}^q}f(x,y)$ and $\partial_{y_{i,j}^q}f(x,y)$ into \Cref{eq:tmp1,eq:tmp2}. Consider the case in which $A>0$, rearranging \Cref{eq:tmp1} we obtain:
    \begin{align*}
    2 A \xi^2&\le\varepsilon+\delta^2 s_q(x)\partial_{x_{i,j}^q} H_q(x,y)(y_{i,j}^q-x_{i,j}^q)\\
    &\le \varepsilon+\delta^2 s_q(x) |\partial_{x_{i,j}^q} H_q(x,y)|\cdot |y_{i,j}^q-x_{i,j}^q|\\
    &\le \varepsilon+3\delta^2 s_q(x)\xi m \tag{$|\partial_{x_{i,j}^q} H_q(x,y)|\le 3m$}
    \end{align*}
    and similarly for \Cref{eq:tmp2}, when $A<0$, we obtain:
    \begin{align*}
    -2A\xi^2\le \varepsilon+3\delta^2s_q(x)m\xi.
    \end{align*}
    In both cases it holds that $2|A|\xi^2\le \varepsilon+3\delta^2s_q(x)m\xi$ which can be solved in $\xi$ to obtain
    \[
    \xi\le\frac{3\delta^2s_q(x)m+\sqrt{(3\delta^2s_q(x)m)^2+8|A|\varepsilon}}{4|A|}\le \frac{3\delta^2s_q(x)m}{2|A|}+\sqrt{\frac\varepsilon{|A|}}\le \frac{2\delta^2s_q(x)m}{|A|}+\sqrt{\frac\varepsilon{|A|}},
    \]
    concluding the proof.
\end{proof}

\harmonic*
\begin{proof}
    On each side of $nx$, ordering the relevant integers by increasing distance, the
    $k$-th such integer has distance at least $k$. Formally:
    \begin{align*}
        \sum_{i\in[n]\setminus B(x)}\frac1{|M_i-x|}&=n\sum_{i\in[n]:|i-nx|\ge 1}\frac1{|i-nx|}\\
        &\le2n\sum_{k=1}^n\frac1k\le4n\log n,
    \end{align*}
    concluding the proof.
\end{proof}

\section{Missing Proofs from \Cref{sec:3to2}}\label{app:3to2}

\lemmaspecial*
\begin{proof}
    To do so, we need to take a generic cubic polynomial and replace the cubic terms with symmetric terms $L_k(x,y)^3$.
    
    Let $c\in\Reals$ and $u_1,u_2,u_3\in [0,1]$ be some variables (not necessarily different). Then 
    if $c\ge 0$, thanks to \Cref{lem:polarization}, we get
    \begin{align*}
        c\cdot u_1u_2u_3 &= \frac{9c}2\left(\frac{u_1+u_2+u_3}3\right)^3+\frac{4c}{3}\left[\left(1-\frac{u_1+u_2}{2}\right)^3+\left(1-\frac{u_2+u_3}{2}\right)^3+\left(1-\frac{u_1+u_3}{2}\right)^3\right]\\
        &-4c\left[\left(\frac{u_1+u_2}{2}\right)^2+\left(\frac{u_2+u_3}{2}\right)^2+\left(\frac{u_1+u_3}{2}\right)^2\right]\\
        &+4c(u_1+u_2+u_3-1)+\frac{c}{6}\left(u_1^3+u_2^3+u_3^3\right)
    \end{align*}

    If $c<0$, then, again thanks to \Cref{lem:polarization} and the algebraic identity that $-a^3=(1-a)^3-1+3a-3a^2$, we get:
    \begin{align*}
        c\cdot u_1u_2u_3 %
        &=\frac{4|c|}{3}\left[\left(\frac{u_1+u_2}{2}\right)^3+\left(\frac{u_2+u_3}{2}\right)^3+\left(\frac{u_1+u_3}{2}\right)^3\right]+\frac{9|c|}{2}\left[\left(1-\frac{u_1+u_2+u_3}{3}\right)^3-3\left(\frac{u_1+u_2+u_3}{3}\right)^2\right]\\
        &+\frac{|c|}{6}[(1-u_1)^3+(1-u_2)^3+(1-u_3)^3]-\frac{|c|}{2}(u_1^2+u_2^2+u_3^2)\\
        &+5|c|(u_1+u_2+u_3-1)
    \end{align*}

   The proof is concluded by observing that we can group the terms and normalize all the coefficients by decreasing $\varepsilon$ by a factor polynomial in $d$.
\end{proof}

\printbibliography

\end{document}